\newtheorem{thm}{Theorem}
\newtheorem{lem}[thm]{Lemma}
\newtheorem{definition}[thm]{Definition}
\newtheorem{problem}[thm]{Problem}
\DeclareRobustCommand{\NC}{\ensuremath{\mathrm{NC}}}
\DeclareRobustCommand{\P}{\ensuremath{\mathrm{P}}}
\DeclareRobustCommand{\NL}{\ensuremath{\mathrm{NL}}}
\DeclareRobustCommand{\lineseg}[3]{\ifthenelse{\equal{#2}{}}{\ensuremath{[#1]}}{\ensuremath{[#1, #2]_{#3}}}}
\DeclareRobustCommand{\nb}[1]{\ensuremath{\widetilde{#1}}}
\begin{document}
\title{\Large Parallel computation using active self-assembly\thanks{
    \url{mpchen@caltech.edu}, \url{dorx@alumni.caltech.edu}, \url{woods@caltech.edu}.     Supported by National Science Foundation grants CCF-1219274, 0832824 (The Molecular Programming Project), and CCF-1162589.
Preliminary version appeared at The 19th International Conference on DNA Computing and Molecular Programming (DNA 19).} } 
\author{Moya Chen
   \hspace{0.1\textwidth}
   Doris Xin
    \hspace{0.1\textwidth} 
 Damien Woods  
 }

\date{California Institute of Technology %
      }
\maketitle

\begin{abstract} 
We study the computational complexity of the recently proposed nubot model of molecular-scale self-assembly. The model generalises asynchronous cellular automata to have non-local movement where large assemblies of molecules can be pushed and pulled around, analogous to millions of molecular motors in animal muscle effecting the rapid movement of macroscale arms and legs. We show that the nubot model is capable of simulating Boolean circuits of polylogarithmic depth and polynomial size, in only polylogarithmic expected time. In computational complexity terms, we show that any problem from the complexity class NC is solvable in polylogarithmic expected time and polynomial workspace using nubots. 

Along the way, we give  fast parallel nubot algorithms for a number of problems including line growth, sorting, Boolean matrix multiplication and space-bounded Turing machine simulation, all using a constant number of nubot states (monomer types). Circuit depth is a well-studied notion of parallel time, and our result implies that the nubot model is a highly parallel model of computation in a formal sense. Asynchronous cellular automata are not capable of this parallelism, and our result shows that adding a rigid-body movement primitive to such a model, to get the nubot model, drastically increases parallel processing abilities.
\end{abstract}

\section {Introduction}\label{sec:intro}
We study the theory of molecular self-assembly, working within the recently-introduced {\em nubot} model   by Woods, Chen, Goodfriend, Dabby, Winfree and Yin~\cite{nubots}. Do we really need another new model of self-assembly? 
Consider the biological process of embryonic  development: a single cell growing into an organism of astounding  complexity. 
Throughout this active, fast and robust process there is growth and movement. For example, at an early stage in the development of the fruit fly Drosophila, the embryo contains approximately 6,000 large cells arranged on its ellipsoid-shaped surface. 
Then, in just four minutes, the embryo rapidly changes  shape to become invaginated, creating a large structure  that becomes the mesoderm, and ultimately  muscle. How does this fast rearrangement occur? 
A large fraction of these cells undergo a rapid, synchronised and highly parallel rearrangement of their internal structure where, in each cell, one end of the cell contracts and  the other end expands. This is achieved by a mechanism that seems to crucially involve thousands of molecular-scale myosin motors  pulling and pushing the cellular cytoskeleton to quickly effect this rearrangement~\cite{Wieschaus08pulsed}. At an abstract level one can imagine this as being analogous to how millions of molecular motors in a muscle, each taking a tiny step but acting in a highly parallel fashion, effect rapid long-distance muscle contraction. This rapid parallel movement, combined with the constraint of a fixed cellular volume, as well as variations in the elasticity properties of the cell membrane, can explain this key step in embryonic morphogenesis. Indeed,  molecular motors that together, in parallel, produce macro-scale movement are a ubiquitous phenomenon in biology.  

We wish to understand, at a high level of abstraction, the ultimate limitations and capabilities of such molecular scale rearrangement and growth.  We do this by studying the computational power of a theoretical  model that includes these capabilities.  As a first step towards such understanding, we show in this paper that large numbers of tiny motors (that can each  pull or push a tiny amount) coupled with local state changes on a grid, are sufficient to quickly solve  inherently parallelisable problems. This result, described formally below in Section~\ref{sec:results},  demonstrates that the nubot model is a highly parallel computer in a computational complexity-theoretic sense.

Another motivation, and potential test-bed for our theoretical model and results, is the fabrication of active molecular-scale structures.  Examples  include DNA-based walkers, DNA origami that  reconfigure,  and simple structures called molecular motors~\cite{yurke00} that  transition between a small number of  discrete states (see~\cite{nubots} for references). In these systems the interplay between structure and dynamics  leads to behaviours and capabilities that are not seen in static structures, nor in other unstructured but active, well-mixed chemical reaction network type  systems.  Our theoretical results here, and those in~\cite{nubots}, provide a sound basis to motivate the experimental investigation of large-scale active DNA nanostructures. 

There are  a number of theoretical models of  molecular-scale algorithmic self-assembly processes~\cite{patitz2012introduction}. For example, the abstract Tile Assembly Model, where individual square DNA tiles attach to a growing assembly lattice one at a time~\cite{Winf98thesis,winfree00rot, IUSA}, the two-handed (hierarchical) model where large multi-tile assemblies come together~\cite{Aggarwal_etal2005generalized,cannon2012two,demaine2008staged,2HAMIU}, and the signal tile model where DNA origami tiles that form an ``active'' lattice with DNA strand displacement signals running along them~\cite{jonoska2012active,padilla2011hierarchical, padilla2012asynchronous}. Other models enable one to  program tile geometry~\cite{one,fu2011self},  temperature~\cite{Aggarwal_etal2005generalized,   kao2006reducing,summers2012temperature}, concentration~\cite{becker2010,chandran2012tile,doty2010concShort,kao2008conc}, mixing stages~\cite{demaine2008staged,Staged1D_DNA2011}  and connectivity/flexibility~\cite{jonoska2009complexity}. 

The well-studied abstract Tile Assembly Model~\cite{Winf98thesis} is an asynchronous, and nondeterministic, cellular automaton with the restriction that state changes are irreversible and happen only along a crystal-like growth frontier. The nubot model is a generalisation of an asynchronous and nondeterministic cellular automaton, where the generalisation is that we have a non-local movement primitive.   Since the nubot model is intended to be a model of molecular-scale phenomena it ignores friction and gravity, allows for the creation/destruction of monomers (we assume an invisible ``fuel'' source) and has a notion of random uncontrolled motion (called agitation, but not used in this paper). Instances of the model evolve as continuous time Markov processes, and time is modelled as in stochastic chemical kinetics~\cite{gillespie1992rigorous, soloveichik2008computation}. The nubot style of rigid-body movement is analogous to that seen in reconfigurable robotics~\cite{rus02but, rus2001crystalline,  murata2007self}, and indeed results in these robotics models show that non-local movement can be used to effect fast global reconfiguration~\cite{Crystalline_ISAAC2008,aloupis2011efficient,reif2007optimal}.  The nubot model includes features seen in cellular automata, Lindenmayer systems~\cite{prusinkiewicz1991algorithmic}  and graph grammars~\cite{klavins04kla}. See~\cite{nubots} for a more detailed comparison of the similarities and differences with these models.  

\subsection{Previous work on active self-assembly with movement}
Previous work on the nubot model~\cite{nubots} showed that it  is capable of building large  shapes and patterns exponentially quickly:  e.g.\ lines and squares in time logarithmic in their size. The same paper goes on to describe a general scheme to build arbitrary computable (connected, 2D) size-$n$ shapes in time and number of monomer  states (types) that are  polylogarithmic in $n$, plus the time and states required for Turing machine simulation due to  the inherent algorithmic complexity of the shape.   Furthermore, 2D patterns with at most~$n$ coloured pixels, where the colour choice for each pixel is  computable in time $\log^{O(1)} n$ (i.e. polynomial in the length of the binary description of pixel indices),  are nubot-computable in time and number  of monomer types polylogarithmic in~$n$~\cite{nubots}. The latter result is achieved without going outside the pattern boundary and in a completely asynchronous fashion. These results show that the nubot model is capable of parallelism not seen in many other models of self-assembly. The goal of the present paper is to characterise the kind of parallelism seen in the nubot model by formally relating it to the computational complexity of classical decision~problems. 

Dabby and Chen~\cite{dabbyChenSODA2012} study a 1D model, where  monomers insert between, and push apart, other monomers. Their model is closely related to a 1D restriction of the nubot model without state changes, and they build length $n$ lines in $O(\log^3 n)$ expected time and $O(\log^2 n)$ monomer types.  They also show that the set of 1D polymers produced by any instance of their model is a context-free language, and give a  design for implementation with DNA molecules.  
Malchik and Winslow~\cite{MalchikWinsow} first show that any context-free language can be expressed as an instance of this model, and then give an asymptotically tight bound of $2^{\Theta{(k^{3/2})}}$ on the length of polymers produced using $k$ monomer types (in merely  $O(\log^{5/3} n)$ expected time), thus  characterising two aspects of the model.

\subsection{Main result}\label{sec:results}
In the nubot model a program is specified as a finite set of nubot rules $\mathcal{N}$ and is said to decide a language $L \subseteq \{0,1 \}^*$ if, beginning with a word $x \in \{0,1 \}^*$ encoded as a sequence of $|x|$ ``binary monomers'', the system eventually reaches a configuration containing exactly the 1 monomer if $x\in L$, and 0 otherwise.    
Let $\NC$ denote the (well-known) class of problems solved by uniform polylogarithmic depth and polynomial size Boolean circuits.\footnote{$\NC$, or Nick's class, is named after Nicholas Pippenger.} Our main result is stated as follows. 

\begin{thm}\label{thm:nubotsSolveNC} For each language $L \in \NC$, there is a set of nubot rules $\mathcal{N}_L$ that decides~$L$ in polylogarithmic expected time, constant number of monomer states, and polynomial space in the input string  length. Moreover, for $i \geq 1$, $\NC^i$ is contained in the class of languages decided by nubots running in $O(\log^{i+3} n)$ expected time, $O(1)$ monomer states, and polynomial space in  input  length $n$.
\end{thm}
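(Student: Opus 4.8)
The plan is to have $\mathcal{N}_L$ simulate a logspace-uniform Boolean circuit. Fix $L\in\NC^i$ and a logspace-uniform family $\{C_n\}$ of fan-in-two circuits deciding $L$ with depth $d(n)=O(\log^i n)$ and size $s(n)=n^{O(1)}$; without loss of generality $C_n$ is layered, so every wire runs from some level $\ell$ to level $\ell+1$ and each level has exactly $s(n)$ gates (pad with identity gates). Starting from the input word $x$, $|x|=n$, written as $n$ binary monomers, the nubot program runs in three phases: (i) build a physical layout of $C_n$ in the plane; (ii) evaluate $C_n$ level by level; (iii) read off the output gate and leave exactly the single answer monomer. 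All three phases use only $O(1)$ monomer states and $\mathrm{poly}(n)$ space, by composing the subroutines advertised above (line growth, sorting, Boolean matrix multiplication, and space-bounded Turing machine simulation), each of which is itself realizable with $O(1)$ states.

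The first obstacle is phase (i): logspace-uniformity only guarantees that $1^n\mapsto\langle C_n\rangle$ is computable in $O(\log n)$ space, hence naively in polynomial — not polylogarithmic — time. To stay within budget I would invoke the space-bounded Turing machine simulation subroutine to run the $O(\log n)$-space uniformity machine; since that machine's configuration space has size $n^{O(1)}$, its reachability (and hence each bit of $\langle C_n\rangle$) is obtained from the transitive closure of the configuration graph, i.e.\ by $O(\log n)$ rounds of Boolean matrix squaring on $\mathrm{poly}(n)$-size matrices. Using the nubot Boolean matrix multiplication routine (polylogarithmic time per multiplication) this produces the whole layout in polylogarithmic time: the $s(n)$ gates of each level are placed along a row, with the list of source indices for each gate stored locally at that gate.

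Phase (ii) evaluates the layered circuit one level at a time, $O(\log^i n)$ levels in all. At each level the bit held by every gate must be delivered to the (at most two) gates of the next level that consume it; because distinct wires cross and nubot motion is rigid-body, values cannot simply be slid past one another. The fix is oblivious routing by sorting: fan out each value along a binary tree of duplications ($O(\log n)$ time), tag each copy with its destination gate index, sort all $\mathrm{poly}(n)$ tagged copies by destination with the sorting subroutine (a sorting network of depth $O(\log^2 n)$, at $O(\log n)$ nubot time per comparator layer), deliver them, and apply the Boolean gate operation locally. Thus each level costs $O(\log^3 n)$ and phase (ii) costs $O(\log^i n)\cdot O(\log^3 n)=O(\log^{i+3} n)$, which dominates and yields the claimed bound. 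Since the model is asynchronous, I would interleave ``level complete'' signal waves that sweep the row and block the next level from starting until the current one has settled; these add only lower-order time and $O(1)$ states. Phase (iii) is a single $O(\log n)$-time copy of the output bit followed by a cleanup sweep that deletes all scratch so that exactly the $0$ or $1$ monomer remains.

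The first sentence of the theorem then follows from the $\NC^i$ statement: any $L\in\NC$ lies in $\NC^i$ for some fixed $i$, hence is decided in $O(\log^{i+3} n)$, i.e.\ polylogarithmic, expected time with $O(1)$ states and polynomial space. The part I expect to be genuinely delicate is not the asymptotics but the rigid-body bookkeeping: realizing fan-out, sorting, and matrix multiplication as nubot subroutines that compose cleanly, keep every intermediate assembly inside a polynomial footprint, reset their scratch regions so they can be reused across all $O(\log^i n)$ levels, and do all of this with a single finite monomer-state set independent of $n$ while respecting that any push or pull moves a whole rigid sub-assembly.
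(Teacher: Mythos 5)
Your proposal is correct and follows essentially the same route as the paper: generate the circuit by simulating the logspace uniformity machine through reachability on its $n^{O(1)}$-size configuration graph via $O(\log n)$ rounds of nubot Boolean matrix squaring, then evaluate the layered circuit level by level, routing each destination-tagged output past wire crossings with the parallel sorting subroutine and separating levels with synchronization signals, for $O(\log^3 n)$ per level and $O(\log^{i+3} n)$ overall. The only notable (harmless) divergences are that the paper simulates the uniformity machine as a \emph{function}-computing logspace TM by tracking output-tape writes inside configurations and sorting them, rather than recovering the description bitwise from the transitive closure, and it uses its own geometric rod-growth/merging sorter rather than a sorting network.
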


$\NC$ problems are solved by circuits of shallow depth, hence they can be thought of as those problems that can be solved on a highly parallel architecture (simply run each layer of the circuit on a bunch of parallel processors, after polylogarithmic parallel steps we are done).   
 $\NC$ is contained in~$\P$---problems solved by polynomial time Turing machines---and this follows from the fact that $\NC$ circuits are of polynomial size.  Problems in $\NC$, and the analogous function class, include sorting, Boolean matrix multiplication, various kinds of maze solving and graph reachability, and integer addition, multiplication and division. Besides its circuit depth definition, $\NC$ has been characterised by a large number of other parallel  models of computation including parallel random access machines,  vector machines, and optical computers~\cite{greenlaw1995limits,woods2008parallel,woods2005upper}. 
It is widely conjectured, but unproven, that  $\NC$  is strictly contained in  $\P$. In particular,  problems complete for $\P$ (such as Turing machine and cellular automata~\cite{nearyWoods2006rule110} prediction, context-free grammar membership and many others~\cite{greenlaw1995limits}) are believed to be ``inherently sequential''---it is conjectured that these problems are not solvable by parallel computers that run for polylogarithmic time on  a polynomial number of processors~\cite{greenlaw1995limits,condon1994theory}. 

Thus our main result gives a formal sense in which the nubot model is highly parallel: for any highly parallelisable ($\NC$) problem  our proof gives a nubot algorithm  to efficiently solve in it in only polylogarithmic expected time and constant states.  This stands in contrast to sequential machines like Turing machines, that cannot read all of an $n$-bit input string in polylogarithmic time, and  ``somewhat parallel'' models like cellular automata and the abstract Tile Assembly Model, which can not have all of $n$ bits influence a single bit decision in polylogarithmic time~\cite{keenan2014fast}.  Thus, adding a movement primitive to an asynchronous non-deterministic cellular automation, as in the nubot model, drastically increases its parallel processing abilities. 

We finish this discussion on a technical remark. Previous results~\cite{nubots} on the nubot model were of the form: for each $n \in \mathbb{N}$ there is a set of nubot rules $\mathcal{N}_n$ (i.e. the number of rules is a  function of~$n$)  to carry out some task parameterised by $n$ (examples: quickly grow a line of length~$n$ or an $n \times n$ square, or grow some complicated computable pattern or shape whose size is parameterised by~$n$, etc.).  For each problem in $\NC$ our main result here gives a {\em single} set of rules (i.e.\ of constant size), that works for all problem instances.

\subsection{Overview of results and paper structure}
Section~\ref{sec:intro} contains the statement of our main result,  the overall proof structure and some future work directions. Section~\ref{sec:defs} gives the full definition of the nubots model and relevant complexity classes. Section~\ref{sec:linedouble} serves as an introduction to the nubots model by giving a simple nubots algorithm to double the length of a length-$n$ line in $O(\log n)$ expected time, we suggest the reader begins there. 

\subsubsection{New synchronization and line growth algorithms}
  In Section~\ref{sec:sync} we describe a fast signalling method for nubots from~\cite{nubots}, here called {\em shift synchronization}, and give a new variant on this called {\em lift synchronization}. These signalling mechanisms are used throughout our constructions as a method to quickly send a bit, 0 or 1, distance $n$ in $O(\log n)$ expected time, with the choice of 0 or 1 being encoded by the use of shift or lift synchronization respectively. 

The line growth algorithm given in \cite{nubots} grows a line of length $n\in\mathbb{N}$ in $O(\log n)$ time, using $O(\log n)$ monomer states and starting from a single monomer on the grid.  Section~\ref{sec:line} gives a new line-growth  algorithm that completes in $O(\log^2 n)$ time, using $O(1)$ monomer states and starting from $O(\log n)$  monomers on the grid.  A key feature of our algorithm is that it uses only a constant number of states. This helps us achieve our main result, which requires a {\em single} set of nubots rules that accept any word from some, possibly infinite, $\NC$ language: as part of our circuit simulation we need to build longer and longer lines to simulate larger and larger circuits, all with a single set of nubots rules.

\subsubsection{Parallel sorting, Boolean matrix multiplication \& space bounded Turing machine simulation} 
Section~\ref{sec:sorting} shows that the nubots model is capable of fast parallel sorting: $n$ numbers can be sorted in expected  time polylogarthmic in $n$.  More precisely, $n$ distinct natural numbers, taken from the set $\{0,1,\ldots , n-1 \}$ %
when presented as $n$ unordered ``strings'' of binary (0 or 1) monomers on the grid, can be sorted in increasing numerical order in expected time $O(\log^3 n )$, space $O(n \log n) \times O(n)$, and  $O(1)$ monomer states.
Our sorting routine is used throughout our main construction and is inspired by mechanisms, such as gel electrophoresis, that sort via spatial organization based on physical quantities, such as mass and charge~\cite{MurphySort06}.  

Section~\ref{sec:matrixmult} shows that two $n \times n$  Boolean matrices can be  multiplied  in $O(\log^3 n)$ expected time, $O(n^4 \log n) \times O(n^2 \log n)$ space and $O(1)$ monomer states. This immediately implies that problems reducible to Boolean matrix multiplication, such as directed graph reachability and indeed any problem in the complexity class $\NL$, of  languages accepted by nondeterministic logarithmic space bounded Turing machines, can be solved in polylogarithmic expected time on nubots.

Indeed in Section~\ref{sec:TMsim} we go on to generalise this result by showing that any  nondeterministic logarithmic space bounded Turing machine that computes a function (as opposed to merely deciding a language) can also be simulated in polylogarithmic space. This involves modifying the usual matrix multiplication method to keep track of the contents of the output tape of the Turing machine, and correctly reassembling the encoded tape contents on the 2D grid. 

These results show that the model is capable of fast parallel solution of many problems, in particular all of those in $\NL$. Recall that $\NL \subseteq \NC$, so we are not done yet. Indeed these techniques form part of our more general result: polylogarithmic expected time solution of problems in $\NC$ via efficient simulation of uniform Boolean circuits, as described next.

\subsubsection{Proof overview of main result: Theorem~\ref{thm:nubotsSolveNC}}
Let  $L \in \NC$, in other words, $L$ is  decidable by a logspace-uniform family $\mathcal{C}_L$ of Boolean circuits of polylogarthmic depth and polynomial size. To prove Theorem~\ref{thm:nubotsSolveNC},  we show that for each such~$L$ there exists a finite set of nubots rules~$\mathcal{N}_L$  that  decides~$L$. 
$L $ being in logspace-uniform $\NC$ implies that there is a deterministic logarithmic space (in input size) Turing machine $\mathcal{M}_L $  such  that $\mathcal{M}_L(1^n) = c_n $, where~$c_n$ is a description of the unique Boolean circuit in $\mathcal{C}_L$ that has $n$ input gates.  Our initial nubots configuration consists of a length-$n$ line of binary nubots monomers denoted~$\lineseg{\nb{x}}{}{}$, that represents  some input word $x \in \{0,1\}^*$  (as described in Definition~\ref{def:nubotdecide}). From this we  create (copy) another length-$n$ line of monomers
that encode the unary string $1^n$ to be given as input to a nubots simulator of $\mathcal{M}_L$. 
The rule set $\mathcal{N}_L$ includes a description of $\mathcal{M}_L$, and the system first generates a circuit by simulating the computation of $\mathcal{M}_L$ on input $1^n$,  which produces a nubots configuration (collection of monomers in a connected component) that represents the circuit $c_n$. The circuit is then simulated on input~$x$.  Both of these tasks present a number of challenges. 

\paragraph{Circuit Generation.}
Logspace Turing machines run in at most polynomial time in their input length (otherwise they repeat a configuration), but here we wish to generate the circuit in merely polylogarithmic time. To achieve this, our simulation of~$\mathcal{M}_L$ works in a highly parallel fashion. This uses a number of techniques. First,  in nubots, we implement the (known) trick of space-bounded Turing machine simulation by fast iterated matrix multiplication, which in turn is used to solve reachability on the directed graph of all possible configurations of the Turing machine. One of  the main challenges here is to carry out matrix multiplication on the 2D grid sufficiently fast but without monomers unintentionally colliding with each other. Second, although iterated matrix multiplication is sufficient to simulate a Turing machine that decides a language,  here we wish to simulate a Turing machine that computes a {\em function}. To do this, our parallel matrix multiplication algorithm  keeps track of any symbols written to the output tape by both valid (reachable) and invalid (unreachable)  configurations, and at the end deletes those symbols written by invalid configurations leaving the valid output symbols only. These valid output symbols are then arranged into the correct order by our fast parallel sorting routine.  This results in a string of monomers that encodes the circuit~$c_n$. These monomers then rearrange themselves in the plane, to lay out the circuit with each row of gates layered one on top of the next as shown in Figure~\ref{fig:boolAndNubot}. (Note that for convenience and to save space we sometimes draw figures on a square grid, although the nubots model is formally defined on the hexagonal grid.) 

\paragraph{Circuit Simulation.}  As already described, the input $x$ is encoded as the binary monomers $\lineseg{\nb{x}}{}{}$, and the entire circuit $c_n$ is ``grown'' from $\lineseg{\nb{x}}{}{}$. The monomers $\lineseg{\nb{x}}{}{}$ now move to the first (bottom) row of the encoded circuit  (Figure~\ref{fig:boolAndNubot}(c)) and position themselves so that each gate can ``read'' its 1 or 2 input bit monomers from $\lineseg{\nb{x}}{}{}$.  After each gate computes a ``result'' bit,  layer~$1$ ``synchronizes'' via a $O(\log n)$ expected time synchronization routine. 

Next, we wish to send the ``result'' bits from layer~$1$ to layer~$2$.  Circuits are not necessarily planar, so we need to handle wire crossings.  We use our fast parallel sorting routine: the outputs from the first circuit layer are sorted, from left to right in increasing order, using their ``to'' address as a key. For example, a layer 1 result bit that is destined for gate 5 in layer 2 will be placed to the left of a layer 1 result bit that is destined for gate 6 in layer 2.   Using this sorting routine, the blue ``wire address'' regions in the circuit (Figure~\ref{fig:boolAndNubot}(d))  are sorted in increasing order from left to right, then appropriately padded with empty space in between (using counters), and are passed up to the next level. Layer 1 then destroys itself. The entire circuit is simulated, level by level, from bottom to top, in this manner. After the ``output gate'' monomer computes its output bit it  destroys itself, leaving a single monomer in state $\mathrm{output}_0$ or $\mathrm{output}_1$. 
No more rules are applicable and so the system has  halted with its answer. This completes the overview of the simulation. 

This overview  ignores many details. In particular  the nubots model is asynchronous, that is, rule updates happen independently as discrete events in continuous time with no two events happening at the same time (as in stochastic chemical kinetics).  The construction includes a large number of synchronization steps and signal passing to  ensure that all parts of the construction are  appropriately staged, but yet the construction is free to carry out many fast, asynchronous, parallel steps between these ``sequential'' synchronization steps.  

\begin{figure}[t]
  \centering
    \includegraphics[width=0.8\textwidth]{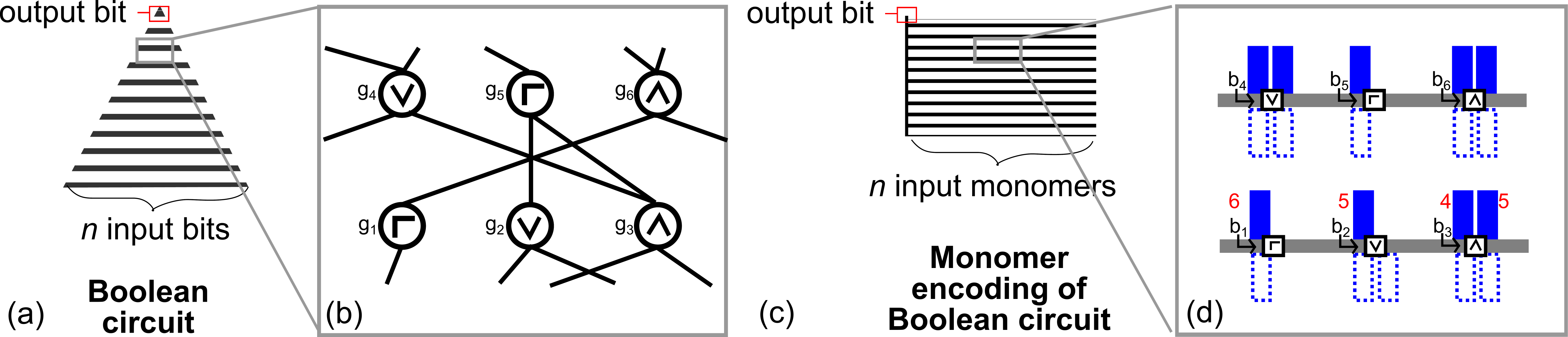}
  \caption{High-level overview of the encoding of a Boolean circuit as a nubots configuration (drawn on the square grid to save space).  (a) Boolean circuit with (b) detailed zoom-in. (c)  Nubots configuration encoding the circuit, with zoom-in shown in (d). A wire leading out of a gate in (b) has a destination gate number encoded in (d) as strips of $O(\log n)$ blue binary monomers (indices in red). After a gate computes some Boolean function (one of $\vee$, $\wedge$, $\neg$) the resulting bit is  tagged onto the relevant blue strip of monomers that encode the  destination addresses (red numbers). Circuits are not necessarily planar, so to handle wire crossovers these result bits are  first sorted in parallel based on their wire address, and then pushed up to the next layer of gates. }
  \label{fig:boolAndNubot}
\end{figure}

\subsection{Future work and open questions}
The line growth algorithm in~\cite{nubots} runs in expected time $O(\log n)$, uses $O(\log n)$ states  and space $O(n) \times O(1)$. In Section~\ref{sec:line} we give another  line growth algorithm that runs in expected time $O(\log^2 n)$, uses $O(1)$ states  and space $O(n) \times O(1)$.  Is there a line-growth algorithm that does better than time $\times$ space $\times$ states $= \Omega (n \log^2 n)$? To keep the game fair, the input should be a collection of monomers with space $\times$ states $= O (\log n)$.

Theorem~\ref{thm:nubotsSolveNC}  gives a lower bound on the power of the nubot model. What are the best lower and the upper bounds on the power of confluent\footnote{By confluent we mean a kind of determinism where the system (rules with the input) is assumed to always make a unique single terminal assembly.} polylogarithmic expected time nubots?  One challenge involves finding better Turing machine space, or circuit depth, bounds on computing multiple applications of the movable set (see Section~\ref{sec:defs}) on a polynomial size (or larger)  nubot grid.

Synchronization is a signalling method we use to quickly send signals in a non-local fashion. In this paper it is used extensively to compose nubot algorithms. What conditions are necessary and sufficient for composition of arbitrary nubot algorithms that do not use synchronization?  Theorem~7.1 in~\cite{nubots} shows that a wide class of patterns can be grown without synchronization, and its proof of this gives examples of composition without synchronization. 
It would be interesting to formalise this notion of composition in our distributed systems without the long-range fast signalling that synchronization gives.

Agitation is a kind of undirected, or random, movement that was defined for the nubot model in~\cite{nubots} and is intended to model a nanoscale environment where there are uncontrolled movements and turbulent fluid flows in all directions interacting with each monomer. Is it possible to simulate nubot-style movement using agitation? As motivation, note that every self-assembled molecular-scale structure was made under  conditions where agitation is a dominant source of movement! Our question asks if we can {\em programmably exploit} this random molecular motion to build structures quicker than without it.

Is the nubot model intrinsically universal? More precisely, does there exist a set of monomer rules $U$, such that any nubot system $\mathcal{N}$ can be simulated by ``seeding''   $U$ with a suitable initial configuration? The notion of intrinsic universality is giving rise to interesting characterisations, and separations, in a variety of tile assembly models~\cite{IUSA,USA, one, 2HAMIU,  temp1notIU,hendricks2013signal,HendricksPatitzTAMCA}, for an overview see the survey~\cite{WoodsIU}. Our hope would be that intrinsic universality, with its tight notion of simulation, could be used to tease apart the power of different notions of movement (for example to understand if  nubot-style movement  is weaker or stronger than other  notions of movement). 

 Other open problems and further directions can be found in~\cite{nubots}.

\newcommand{\config}{C}
\newcommand{\ra}{\rightarrow}
\newcommand{\la}{\leftarrow}
\newcommand{\La}{\Leftarrow}

\section{The nubot model and other definitions}\label{sec:defs}
In this section we formally define the nubot model. Figure~\ref{fig:model} gives an overview of the model and rules, and  Figure~\ref{fig:nondeterministicmovement} gives an example of the movement rule. An example nubot construction for ``line-doubling''  is given in Section~\ref{sec:linedouble} which may aid the reader at this point.  Let $\mathbb{N} = \{ 0,1,2,\ldots \}$.

\begin{figure}[t]
  \centering
    \includegraphics[width=\textwidth]{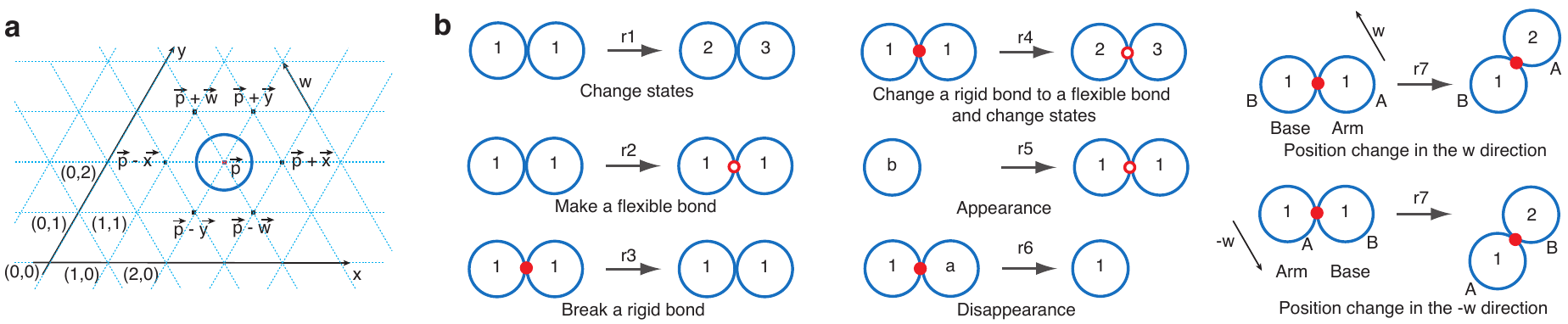}
  \caption{Overview of the nubot model. (a) A nubot configuration showing a single nubot monomer on the triangular grid. (b) Examples of nubot monomer rules. Rules r1-r6 are local cellular automaton-like rules, whereas r7 effects a non-local movement that may translate other monomers as shown in Figure~\ref{fig:nondeterministicmovement}. 
  A flexible bond is depicted as an empty red circle and a rigid bond is depicted as a solid red disk.}
    \label{fig:model}
\end{figure}
\begin{figure}[h]
  \centering
    \includegraphics[width=\linewidth]{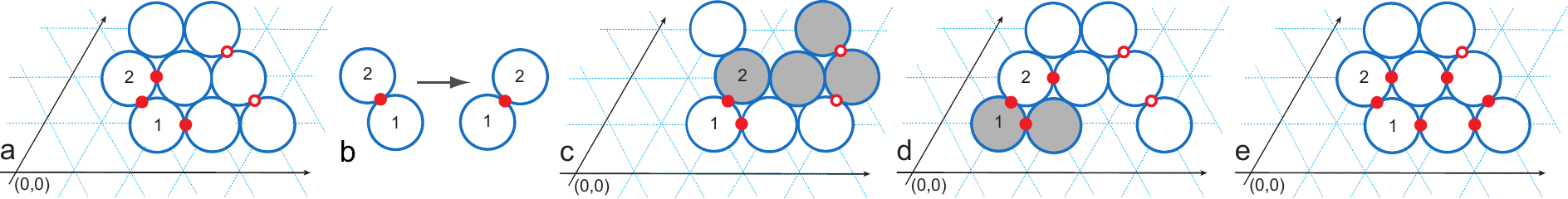}
  \caption{Movement rule. (a)~Initial configuration. (b)~Movement rule with one of two results depending on the (nondeterministic) choice of arm or base. (c) Result if the monomer with state 1 is the base, or (d) if the monomer with state 2 is the base. The shaded monomers are the moveable set.    We can think of (c) as pushing and (d) as pulling. The affect on rigid (filled red disks), flexible (hollow red circles) and null bonds is shown.  (e)~A configuration for which the movement rule is blocked: movement of 1 or 2 would force the other to move hence the rule is not applicable.}
  \label{fig:nondeterministicmovement}
\end{figure}

The model uses a two-dimensional triangular grid with a coordinate system using axes $x$ and~$y$ as shown in Figure~\ref{fig:model}(a). 
A third axis, $w$, is defined as running through the origin and $\overrightarrow{w} = -\overrightarrow{x} + \overrightarrow{y} = (-1,1)$, but we use only the~$x$ and~$y$ coordinates  to define position.
The \emph{axial directions} $\mathcal{D} =  \{ \pm\overrightarrow{x}, \pm\overrightarrow{y},  \pm\overrightarrow{w} \}$ are the unit vectors along  axes $x,y,w$.  A grid point $\overrightarrow{p} \in \mathbb{Z}^2$  has the set of six \emph{neighbours}  $\{ \overrightarrow{p} + \overrightarrow{u} \mid \overrightarrow{u} \in \mathcal{D} \}$.  Let $S$ be a finite set of monomer states. 
A nubot {\em  monomer} is a pair $X = (s_i, p(X)$) where $s_i \in S$ is a state and $p(X) \in \mathbb{Z}^2 $ is a grid point.
Two monomers on neighbouring grid points are either connected by a \emph{flexible}  or  \emph{rigid} bond, or else have no bond (called a \emph{null} bond).  Bonds are described in more detail below. 
A {\em configuration}~$C$ is a finite set of monomers along with the bonds between them. 

One configuration {\em transitions} to another via the application of a single \emph{rule}, $r = (s1, s2, b, \overrightarrow{u}) \rightarrow (s1', s2', b', \overrightarrow{u}')$ that acts on one or two monomers.\footnote{In reference~\cite{nubots} the nubot model includes ``agitation'': each monomer is repeatedly subjected to random movements intended to model a nano-scale environment where there is Brownian motion, uncontrolled movements and turbulent fluid flows in all directions. Our constructions in this paper work with or without agitation, hence they are robust to random uncontrolled movements, but we choose to ignore this issue and not formally define agitation for ease of presentation.}   The left and right  sides of the arrow respectively represent the contents of  two monomer positions before and after the application of rule~$r$.  
Here $s1, s2 \in S \cup \{ \mathsf{empty} \}$ are monomer states where at most one of $s1, s2$ is $ \mathsf{empty}$ (denotes lack of a monomer), 
$b \in \{\mathsf{flexible}, \mathsf{rigid}, \mathsf{null} \}$ is the bond type between them, and $\overrightarrow{u} \in \mathcal{D}$  is the relative position of the  $s2$ monomer  to the $s1$ monomer. 
If either of $s1$ or $s2$ (respectively $s1'$ or $s2'$) is $\mathsf{empty}$ then $b$ (respectively $b'$) is $ \mathsf{null}$  (monomers can not be bonded to empty space). The right is defined similarly, although there are some further restrictions on valid rules (involving $\overrightarrow{u}'$) described below. 
 A rule is only applicable in the orientation specified by~$\overrightarrow{u}$, and so rules are not rotationally invariant. 
 
 A rule may involve a movement (translation), or not.  First, in the case of no movement: $\overrightarrow{u} = \overrightarrow{u}'$. Thus we have a rule of the form $r = (s1, s2, b,\overrightarrow{u}) \rightarrow (s1', s2', b', \overrightarrow{u})$. From above,  at most one of $s1,s2$ is  
 $\mathsf{empty}$,  hence we disallow spontaneous generation of monomers from empty space. 
{\em State change} ($s1 \neq s1'$ and/or $s2 \neq s2'$ )  and {\em bond change} ($b \neq b'$) occur in a straightforward way,  examples are shown in Figure~\ref{fig:model}(b).
If $s_i \in \{s1,s2\}$ is $\mathsf{empty}$ and $s_i'$ is not, then the rule induces the \emph{appearance} of a new monomer  at the  empty location specified by $\overrightarrow{u}$ if $s2 = \mathsf{empty} $, or $-\overrightarrow{u}$ if $s1 = \mathsf{empty} $. If one or both monomer states go from non-empty to $\mathsf{empty}$, the rule induces the \emph{disappearance} of monomer(s) at the  orientation(s) given by $\overrightarrow{u}$. 

For a \emph{movement} rule, $\overrightarrow{u} \neq \overrightarrow{u}'$. Also, it must be the case that $d(\overrightarrow{u},\overrightarrow{u}') = 1$, where $d(u,v)$ is Manhattan distance on the triangular grid, and  $s1,s2,s1',s2' \in S \setminus \{ \mathsf{empty} \}$. If we fix $\overrightarrow{u} \in \mathcal{D}$, then there are two $\overrightarrow{u}' \in \mathcal{D}$ that satisfy $d(\overrightarrow{u},\overrightarrow{u}') = 1$.  A movement rule is applicable if it can be applied both (i) locally and (ii) globally, as follows.

(i) Locally, the pair of monomers should be in state $s1, s2$, share bond $b$ and have orientation~$\overrightarrow{u}$ of $s2$ relative to $s1$. Then, one of the two monomers is  chosen nondeterministically to be the {\em base} (that remains stationary), the other is the {\em arm} (that moves).  If the~$s2$ monomer, denoted $X$, is chosen as the arm then $X$ moves from its current position~$p(X)$ to a new position $p(X) - \overrightarrow{u} + \overrightarrow{u}'$. After this movement 
$\overrightarrow{u}'$ is the relative position of the~$s2'$ monomer to the~$s1'$ monomer, as illustrated in Figure~\ref{fig:model}(b). Analogously, if the $s1$ monomer, $Y$, is chosen as the arm then~$Y$ moves from  $p(Y)$ to  $p(Y) + \overrightarrow{u} - \overrightarrow{u}'$. Again,~$\overrightarrow{u}'$ is the relative position of the~$s2'$ monomer to the~$s1'$ monomer.  Bonds and states may change during the movement. 

(ii) Globally, the movement rule may push and/or pull other monomers, or if it can not then it is not applicable. 
This is formalised as  follows, and an example is shown in Figure~\ref{fig:nondeterministicmovement}. Let $\overrightarrow{v} \in \mathcal{D}$ be a unit vector. The $\overrightarrow{v}$-boundary of a set of monomers $Q$ is defined to be the set of grid points outside  $Q$  that are unit distance in the $\overrightarrow{v}$ direction from monomers in $Q$. 
Let~$C$ be a configuration containing adjacent monomers $A$ and $B$, and let~$\config'$ be $\config$ except that the bond between~$A$ and~$B$ is null in $C'$ if not null in $C$. The \emph{movable set} $M = \mathcal{M}(C, A,B,\overrightarrow{v})$  is the smallest subset of $\config'$ that contains~$A$ but not~$B$ and can be translated by~$\overrightarrow{v}$ to give the  set  $M_{+\overrightarrow{v}}$ where the new configuration $C'' = ( C' \setminus  M ) \cup M_{+\overrightarrow{v}} $ is such that:
(a) monomer pairs in $\config'$ that are joined by rigid bonds have the same relative position in $C'$ and $C''$, 
(b) monomer pairs in $\config'$ that are joined by flexible bonds are neighbours in $C''$, 
and (c) the $\overrightarrow{v}$-boundary of $M$ contains no monomers. 
If there is no such set, then we define $ M= \mathcal{M}(C, A,B,\overrightarrow{v}) = \{ \}$.  
If $\mathcal{M}(\config, A, B, \overrightarrow{v}) \neq \{ \}$, then the movement where $A$ is the arm (which should be translated  by~$\overrightarrow{v}$) and $B$ is the base (which should not be translated) is applied as follows: (1) the movable set $\mathcal{M}(\config, A, B, \overrightarrow{v})$ moves unit distance along $\overrightarrow{v}$;  (2) the states of, and the bond between, $A$ and $B$ are updated according to the rule;  (3) the states of all the  monomers besides $A$ and $B$ remain unchanged and pairwise bonds remain intact (although monomer positions and flexible/null bond orientations may change). 
If $\mathcal{M}(\config, A, B, \overrightarrow{v}) = \{ \}$, the movement rule is inapplicable  (the rule is ``blocked'' and in particular $A$ is prevented from translating).

A \emph{nubot system} $\mathcal{N} = (C_0, \mathcal{R})$ is a pair  where $C_0$ is the initial configuration, and $\mathcal{R}$ is the set of  rules. If configuration $C_i$  transitions to $C_j$ by some  rule $r \in \mathcal{R}$ %
we write $C_i \vdash  C_j$. A {\em trajectory} is a finite sequence of configurations $C_1, C_2, \ldots , C_\ell$ where  $C_i \vdash  C_{i+1}$ and $1 \leq i \leq \ell-1$.  A nubot system is said to {\em assemble} a target configuration $C_t$ if, starting from the initial configuration~$C_0$, every trajectory evolves to a  translation of $C_t$. 

A nubot system evolves as a continuous time Markov process. The rate for each rule application %
is 1. If there are $k$ applicable transitions for a configuration~$C_i$ (i.e.\ $k$ is the sum of the number of rule and agitation  steps that can be applied to all monomers), then the probability of any given transition being applied is $1/k$, and the time until the next transition is applied is  an exponential random variable with rate $k$ (i.e.\ the expected time is $1/k$).  The probability of a trajectory is then the product of the probabilities of each of the transitions along the trajectory, and the expected time of a trajectory is the sum of the expected times of each transition in the trajectory. Thus, $\sum_{t \in \mathcal{T}} \mathrm{Pr}[t] \cdot   \mathrm{time}(t)$ is the expected time for the system to evolve from configuration~$C_i$ to configuration~$C_j$, where~$\mathcal{T}$ is the set of all trajectories from~$C_i$ to any  translation of $C_j$, %
and $ \mathrm{time}(t)$ is the expected time for trajectory $t$.

The complexity measure {\em number of monomers} is the maximum number of monomers that appears in any configuration. The {\em number of states} is the total number of distinct monomer states that appear in the rule set. {\em Space} is the maximum area, over the set of all reachable configurations, of the minimum area $l \times w$ rectangle (on the triangular grid) that, up to translation, contains all monomers in the configuration.

The following lemma is used to analyse some of our constructions and was proven in~\cite{nubots}.
\begin{lem}[\cite{nubots}]\label{lem:chernoff}
In a nubot system, if there are $m$ rule applications $a_1, a_2,\dots, a_m$ that must happen, and (1) the desired configuration is reached as soon as all $m$ rule applications happen, 
(2) for any specific rule application $a_i$ among those $m$ rule applications, there exist at most $k$ rule applications $r_1, r_2, \dots, r_k$ such that $a_i=r_k$ and for all $j$, $r_j$ can be applied directly after $r_1, r_2, \dots, r_{j-1}$ have been applied, regardless of whether other rule applications have happened or not,
(3) $m\leq c^k$ for some constant $c$, then the expected time to reach the desired configuration is $O(k)$.
\end{lem}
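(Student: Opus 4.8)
The plan is to treat each of the $m$ mandatory rule applications $a_i$ as the endpoint of a short ``chain'' of at most $k$ rule applications, to show that every such chain completes in $O(k)$ time with overwhelming probability, then to union-bound over all $m$ chains, and finally to integrate the resulting tail bound to control the expectation.

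First I would fix an index $i$ and let $r_1, r_2, \dots, r_{k_i}$, with $k_i \le k$ and $r_{k_i} = a_i$, be the witnessing chain supplied by hypothesis~(2). The crucial observation is that once $r_1,\dots,r_{j-1}$ have all been applied, the rule application $r_j$ becomes applicable and, by the ``regardless of whether other rule applications have happened'' clause, \emph{stays} applicable until it is itself applied; and while a rule is applicable its instantaneous firing rate is exactly~$1$. By memorylessness of the continuous-time Markov process, conditioned on the history up to the moment the ``$r_1,\dots,r_{j-1}$ all applied'' milestone is reached, the additional time until $r_j$ fires is stochastically dominated by an $\mathrm{Exp}(1)$ random variable, independently of everything else (other applicable transitions merely inflate the total rate, not the rate of $r_j$). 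Concatenating these $k_i \le k$ steps and coupling the inter-milestone waits to be mutually independent, the time $T^{(i)}$ until $a_i$ is applied is stochastically dominated by a sum of $k$ i.i.d.\ $\mathrm{Exp}(1)$ variables, i.e.\ by a $\mathrm{Gamma}(k,1)$ variable. I expect this coupling step to be the main obstacle: the chain need not actually fire in the stated order in the true asynchronous trajectory, so one must argue purely via the monotonicity of the milestones and the fact that each milestone becomes reachable at rate~$1$ once its predecessor holds, rather than by tracking an actual subsequence of transitions.

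Next I would assemble the global bound. By hypothesis~(1), if the target configuration has not been reached by time $t$ then some $a_i$ has not yet been applied; hence, writing $T_{\mathrm{total}}$ for the time to reach the target configuration, a union bound over the $m \le c^k$ chains together with a Chernoff bound on the Gamma distribution (Markov's inequality applied to the moment generating function $(1-\lambda)^{-k}$ with $\lambda = \tfrac12$, giving $\Pr[\mathrm{Gamma}(k,1)\ge t] \le 2^k e^{-t/2}$) yields
\[
  \Pr[\,T_{\mathrm{total}} \ge t\,] \;\le\; m\cdot 2^k e^{-t/2} \;\le\; (2c)^k e^{-t/2}.
\]
Finally, with $t_0 := 2k\ln(2c)$ I would split
\[
  \mathbb{E}[T_{\mathrm{total}}] \;=\; \int_0^\infty \Pr[\,T_{\mathrm{total}}\ge t\,]\,dt \;\le\; \int_0^{t_0} 1\,dt \;+\; \int_{t_0}^\infty e^{-(t-t_0)/2}\,dt \;=\; t_0 + 2,
\]
so that $\mathbb{E}[T_{\mathrm{total}}] \le 2k\ln(2c) + 2 = O(k)$ since $c$ is a constant. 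Here hypothesis~(3) is precisely what keeps the union-bound blow-up $c^k$ from swamping the $e^{-t/2}$ decay at $t = O(k)$, while the length bound $k$ in hypothesis~(2) is what makes each chain's $\mathrm{Gamma}(k,1)$ completion time concentrate at scale $O(k)$.
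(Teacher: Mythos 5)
Your proposal is correct and follows essentially the same route as the proof this paper relies on: the lemma is imported from~\cite{nubots} (hence the label ``chernoff''), where the argument is likewise to dominate the completion time of each mandatory application by a chain of at most $k$ waits that are exponential with rate $1$ because hypothesis~(2) keeps each $r_j$ applicable once its predecessors have fired, then to combine a Chernoff bound on the resulting $\mathrm{Gamma}(k,1)$ tail with a union bound over the $m\leq c^k$ chains and integrate. Your handling of the one delicate point---stochastic domination via persistent milestones rather than tracking an actual ordered subsequence of transitions---is exactly what is needed, so I see no gap.
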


\subsection{Nubots and decision problems}
Let $\mathbb{N} = \{ 0,1,2,\ldots \}$. $\lineseg{y}{}{}$ denotes a finite length {\em line segment} of nubot monomers. Given a binary string $x \in \{ 0,1\}^{\ast}$, written $x = x_0x_1 \ldots x_{k-1}$, we let $\lineseg{\nb{x}}{}{}$ denote a  line segment of $k$ nubot monomers that represent $x$ using one of two ``binary'' monomer states.   $|$\lineseg{\nb{x}}{}{}$| \in \mathbb{N}$ denotes the number of monomers in~$\lineseg{\nb{x}}{}{}$. Given a line of monomers~$A$ composed of~$m$ line~segments, the notation $\lineseg{A}{i}{}$ means segment $i$ of $A$, and  $\lineseg{A}{i}{j}$ means monomer $j$ (or sometimes the bit encoded by monomer $j$) of segment~$i$ of~$A$.  We next define what it means to decide a  language (or problem) using nubots. 

\begin{definition}\label{def:nubotdecide}
A finite set of nubot rules $\mathcal{N}_L$ decides a language $L \subseteq \{ 0,1 \}^*$ if for all $x \in \{ 0,1\}^*$ there is an initial configuration $C_0$ consisting only of the horizontal line $\lineseg{\nb{x}}{}{}$ of monomers, where by applying the rule set $\mathcal{N}_L$, the system always eventually reaches a configuration containing only a single ``answer'' monomer which is in one of two states: (a)  ``$\mathrm{accept}$'' if $x \in L$, or  (b) ``$\mathrm{reject}$'' if $x \not\in L$.  Further, from the time it first appears, the answer monomer never changes its state. 
\end{definition}

\subsection{Boolean circuits and the class $\NC$}\label{sec:circuitsDefs}
  We define a Boolean circuit to be a directed acyclic graph, where the nodes are called gates and each node has a label that is one of: input (with in-degree 0), constant~0 (in-degree 0), constant~1 (in-degree 0), $\vee$ (OR, in-degree $1$ or $2$), $\wedge$ (AND, in-degree $1$ or $2$), $\neg$ (NOT, in-degree $1$).  
One of the gates is  identified as the output gate, which has out-degree 0.  
    The \emph{depth} of a circuit is the length of the longest path from an input gate to the output gate. 
  The \emph{size} of a circuit is the number of gates it contains. 
  Besides the output gate, all other gates have  out-degree  bounded  by the  circuit size.
We work with {\em layered} circuits: gates on layer $i$ feed into gates on layer $i+1$. 
  A circuit computes a Boolean (no/yes) function on a fixed number of Boolean variables,  by the inputs and constants defining the output gate value in the standard way. In order to compute functions over an arbitrary  number of variables, we
  define (usually, infinite) families of circuits.  
    We say that a family of circuits
  ~$\mathcal{C}_L = \{ c_n \mid c_n \textrm{ is a circuit with } n \in \mathbb{N}  \textrm{ input gates} \}$ decides a language~$L \subseteq \{ 0,1\}^*$ if for each~$x\in\{0,1\}^{\ast}$ circuit~$c_{|x|}
  \in \mathcal{C}_L$ on input $x$ outputs~$1$ if~$w \in L$ and~$0$ if~$w \notin
  L$.

  In a \emph{non-uniform} family of circuits there is no required similarity, or relationship, between family members.    
  In order to specify such a requirement   
  we use a \emph{uniformity function} that 
  algorithmically specifies 
  some similarity between members of a circuit family. 
  Roughly speaking, a \emph{uniform circuit family}~$\mathcal{C}$ is an infinite sequence of
  circuits with an associated function~$f : \{1\}^* \rightarrow
  \mathcal{C}$ that generates members of the family and is
  computable within some resource bound.  
   Here we care about logspace-uniform circuit families:

  \begin{definition}[logspace-uniform circuit family]
    A circuit family $\mathcal{C}$ is
    logspace-uniform, if there is a function  $f : \{1\}^* \rightarrow
  \mathcal{C}$ that is  computable on a  deterministic logarithmic space Turing machine, and where $f(1^n) = c_n$ for all $n \in \mathbb{N}$, 
  and $c_n \in  \mathcal{C}$ is a description of
  a circuit with $n$ input gates.
  \end{definition}
Without going into details, we assume reasonable descriptions (encodings) of circuits as strings. We note that there are stricter, but more technical to state, notions of uniformity in the literature, such as AC$^0$ and DLOGTIME uniformity~\cite{AK2010,greenlaw1995limits,murphy2012and}.  We do not require anything less powerful than logspace uniformity here as our main result is a lower bound on nubots' power, hence the more expressive the uniformity condition on circuits, the better (although most of the common circuit classes are reasonably robust under these more restrictive definitions anyway).

Define  $\NC^i$ to be the class of all languages $L \subseteq \{  0,1 \}^*$ that are decided by  $O(\log^i n)$ depth, polynomial size logspace-uniform Boolean circuit families. Define $\NC = \bigcup_{i=0}^{\infty} \NC^i$, in other words $\NC$  is the class of languages decided by polylogarithmic depth and polynomial size logspace-uniform Boolean circuit families. Since $\NC$ circuits are of polynomial size, they can be simulated by polynomial time Turing machines, and so $\NC  \subseteq  \P $.
It remains  open whether this containment is strict~\cite{greenlaw1995limits}.  
See~\cite{vollmer1999introduction-short} for more on circuits. 

The complexity class $\NL$ is the set of languages accepted by nondeterministic Turing machines that have a read-only input tape and a single worktape of length  logarithmic in the input length.

\section{Example: A nubots line doubling routine}\label{sec:linedouble}
\begin{figure}[!h]
  \centering
    \includegraphics[width=0.88\textwidth]{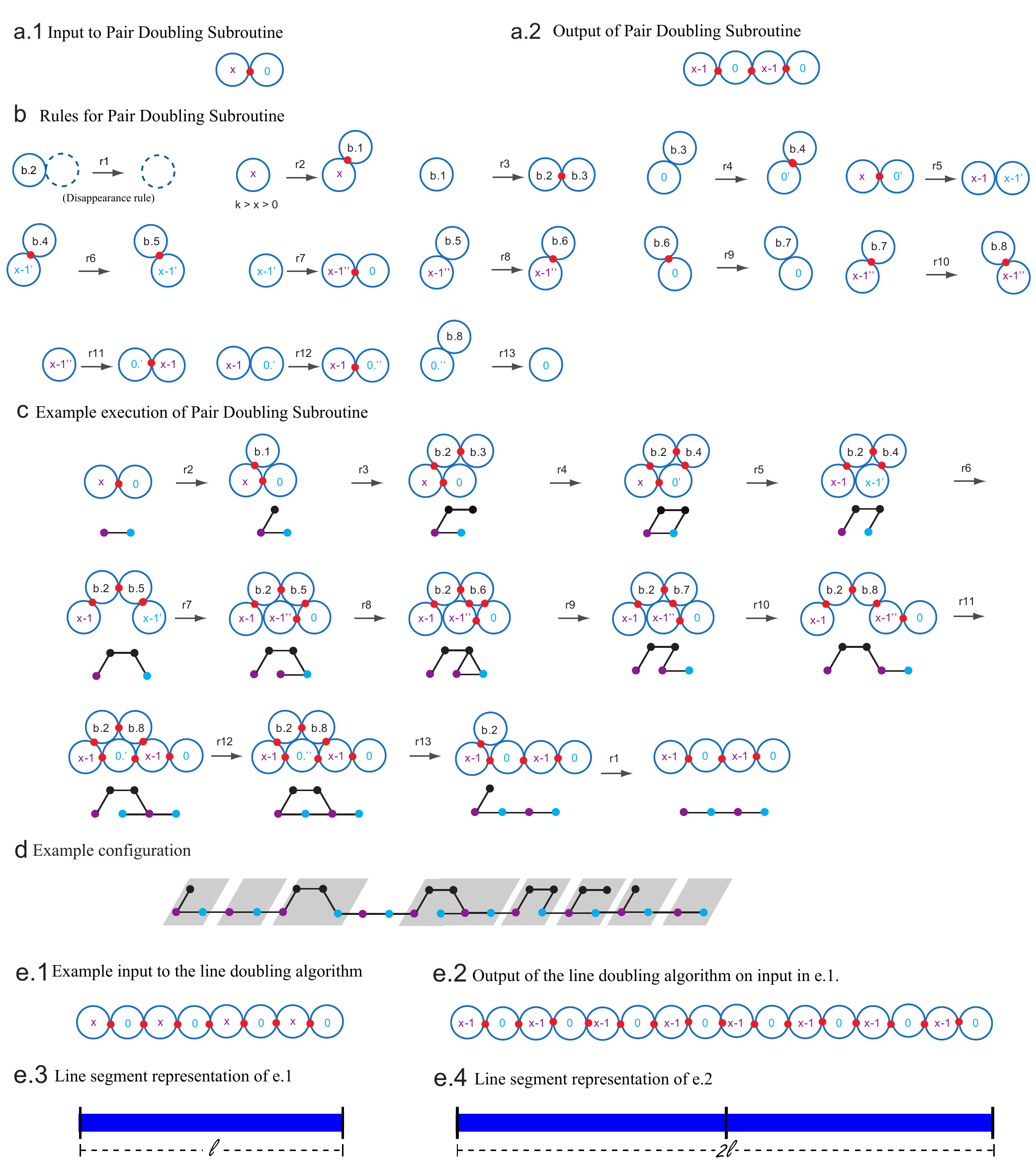}
  \caption{Line doubling nubot algorithm for a line of $l$ monomers that uses a technique from~\cite{nubots}.  (a.1)~Input, (a.2)~output and (b)~rule set for the pair doubling subroutine (PDS). The input and output monomers have alternating blue/purple states (with numbers as shown). Since the LHS of Rule $r_i$ is the RHS of Rule $r_{i-1}$ for $i >2$, the rules must be applied sequentially.  (c)~Example execution of PDS.  (d)~Example configuration of a line undergoing length doubling with concurrent applications of PDS to demonstrate the asynchronous nature of the algorithm. (e.1)~Example input for the line doubling algorithm. (e.2)~Example output for the line double algorithm. (e.3) \& (3.4)~A simplified ``line segment'' representation of (e.1) \& (e.2)~used throughout the paper.}
  \label{fig:lineDoubling}
\end{figure}

This section describes a simple construction with the goal of familiarising the reader with the nubot model. We give an algorithm for doubling the length of a line of $l$ monomers in $O(\log l)$ expected time. This algorithm is essentially a simplification of the line growth algorithm in~\cite{nubots}, and it will be used in later sections of the paper. We first describe the algorithm then provide a proof for correctness and a time and space analysis. 

We require that the input line be comprised of monomers of alternating states, i.e. every monomer in the input line is in one of two unique states with the property that no two adjacent monomers are in the same state. This property of the line is preserved at the end of the line doubling routine.
\begin{lem} A length $l$ line of monomers can be doubled to length $2 l$ in $O(\log l)$ expected time,  $O(1)$ states  $O(l) \times O(1)$ space.
\end{lem}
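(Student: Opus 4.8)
The plan is to implement line doubling by a recursive ``split and fill'' strategy that achieves logarithmic time through parallelism: every gap between two adjacent monomers is, in parallel, widened by one cell and a new monomer inserted, so that one round of insertions doubles the length; the trick is that the insertions can happen concurrently and the widening ``pushes'' propagate outward using the nubot movement primitive, so a single round completes in $O(\log l)$ expected time rather than $O(l)$. Actually, reading the figure description more carefully, the intended approach is a \emph{pair doubling subroutine} (PDS): treat the line as a concatenation of adjacent pairs of monomers, and for each pair run a short fixed sequence of rules $r_1, r_2, \dots$ that turns a $2$-monomer block into a $4$-monomer block, inserting two new monomers and pushing the right half of the line outward by $2$. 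Since the left-hand side of $r_i$ equals the right-hand side of $r_{i-1}$, each PDS instance is an intrinsically sequential chain of a constant number of steps, but \emph{different} PDS instances along the line run fully asynchronously and in parallel (Figure~\ref{fig:lineDoubling}(d) illustrates exactly this).

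First I would set up the encoding: the input is a length-$l$ line with the alternating-state property, so I can unambiguously mark where each pair begins (the two states act as a parity bit telling a monomer whether it is a ``left'' or ``right'' member of a pair). I would then describe the PDS rule set precisely — a constant number of rules over a constant number of states — and verify two things: (i) \emph{correctness of one PDS instance}: starting from a left/right pair in a line, the rule chain terminates in a $4$-monomer block with the alternating property restored and with the rest of the line correctly translated (here I invoke the movement rule's global semantics — the movable set is exactly the suffix of the line to the right of the insertion point, it has nothing in its $\overrightarrow{v}$-boundary since the line is $1$-thick, so the push always succeeds and is never blocked); and (ii) \emph{non-interference}: two PDS instances acting on disjoint pairs never compete for the same grid cell nor apply a rule whose left-hand side is accidentally created by a neighbouring instance, so the instances commute and the final configuration is independent of interleaving. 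This gives that every trajectory ends in the doubled line, i.e.\ the system \emph{assembles} the target.

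Next I would do the time analysis. There are $l/2$ PDS instances, each consisting of a fixed constant number $k_0$ of sequential rule applications. This is the natural setting for Lemma~\ref{lem:chernoff}: take the $m = O(l)$ rule applications that must happen; for a fixed target application $a_i$, the longest forced chain ending at it has length $k = O(\log l)$ — wait, no, each PDS chain has only constant length, but I need to be careful about what ``$r_j$ can be applied directly after $r_1,\dots,r_{j-1}$'' means across instances. The honest accounting: partition into $O(\log l)$? Actually the cleanest route is to note each PDS instance is independent and takes $O(1)$ expected time only if run in isolation, but with $l/2$ of them the rate scales, so I should instead apply Lemma~\ref{lem:chernoff} with $k$ chosen so that $m \le c^k$, i.e.\ $k = \Theta(\log l)$, and check that any required application $a_i$ is reachable by a chain of length $O(\log l)$ regardless of what else has happened (true, since $a_i$'s own PDS chain has constant length and its prerequisites are just the earlier steps of that one chain). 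Lemma~\ref{lem:chernoff} then yields $O(\log l)$ expected time. For space: the line stays $1$-thick and grows from $l$ to $2l$, intermediate configurations have length at most $2l$, so space is $O(l) \times O(1)$; states used is $O(1)$ by construction.

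The main obstacle I anticipate is \emph{verifying non-interference between overlapping or adjacent PDS instances} under the asynchronous, movement-enabled semantics — in particular making sure that when one instance is mid-chain and physically pushes its neighbours, it does not destroy the left-hand-side pattern that an adjacent instance needs, and that a monomer newly created by one instance is never mistaken for the start of a spurious new PDS instance. This is exactly the kind of subtle concurrency bookkeeping that the paper flags repeatedly, and handling it cleanly is what justifies the careful choice of the alternating two-state encoding and the specific rule set in Figure~\ref{fig:lineDoubling}(b); I would discharge it by a short case analysis on the relative positions and chain-progress of two interacting instances, showing their rule applications commute.
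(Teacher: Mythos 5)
Your proposal matches the paper's construction: the same pair-doubling subroutine (PDS) run asynchronously and in parallel on disjoint left/right pairs, with the alternating two-state encoding ensuring each original pair doubles exactly once, PDS's local confinement/non-blocking guaranteeing non-interference, and the $O(l)\times O(1)$ space and $O(1)$ states bounds read off the construction. The only divergence is the final time step: the paper bounds the expected completion time directly as $E[T] = 13\sum_{i=1}^{l/2} 1/i = O(\log l)$ (all $l/2$ constant-expected-time instances proceeding independently), whereas you invoke Lemma~\ref{lem:chernoff} with constant-length dependency chains and $k=\Theta(\log l)$ so that $m\leq c^k$ --- both are valid and yield the same $O(\log l)$ bound.
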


\begin{proof} {\bf Algorithm description. \ \ }  
The algorithm uses concurrent applications of the pair doubling subroutine (PDS) described in Figure~\ref{fig:lineDoubling}. As described in more detail below, the algorithm treats the input line of $l$ monomers as a line of ${l}/{2}$ monomer pairs that can double in length independently of each other, for even $l$. After the execution of the subroutine, a monomer pair is transformed into two monomer pairs in alternating states different from the original pair. This ensures that each pair of monomers in the input line can only double in length once during the course of the entire algorithm execution. Thus, the length of the input line is doubled by the end of the algorithm, which terminates when every monomer pair in the input has been doubled in length via the subroutine. For odd $l$, the same thing happens for $\lfloor {l}/{2} \rfloor$ monomer pairs, and the rightmost monomer simply adds a single new monomer to its right.

PDS begins with a pair of monomers with states $x, 0$ and ends with four monomers in states $ x-1, 0, x-1, 0$. Figure~\ref{fig:lineDoubling}a provides an example input and output of the line doubling algorithm, where monomers are shown as left (purple), right (blue) pairs. The rules for PDS are given in Figure~\ref{fig:lineDoubling}b and an example execution is shown in Figure~\ref{fig:lineDoubling}c.  Each monomer on the line assumes either the ``left'' or the ``right'' state: left is colored purple, right is colored blue. The initial $x_{\mathrm{left}}, 0_{\mathrm{right}}$ monomers send themselves to state $(x-1)_{\mathrm{left}},0_{\mathrm{right}}$ while inserting two new monomers to give the pattern  $(x-1)_{\mathrm{left}},0_{\mathrm{right}}, (x-1)_{\mathrm{left}}, 0_{\mathrm{right}}$. To achieve this, the initial pair of monomers create a ``bridge'' of 2  monomers on top and, by using movement and appearance rules, two new monomers are inserted. The bridge monomers are then deleted and we are left with four monomers. Throughout the execution, all monomers are connected by rigid bonds so the entire structure is connected. PDS completes in constant expected time 13 as shown in Figure~\ref{fig:lineDoubling}c since there are a total of 13 rules for PDS that must be applied sequentially, as shown in Figure~\ref{fig:lineDoubling}b.

PDS has the following properties: (i) during the application of its rules to an initial pair of monomers $x_{\mathrm{left}}, 0_{\mathrm{right}}$  it does not interact with any monomers outside of this pair, and (ii) a left-right pair creates two adjacent left-right pairs. These properties imply that along a partially formed line, multiple subroutines can  execute asynchronously and in parallel, on disjoint left-right pairs, without interfering with each other.

\paragraph{Correctness.} To demonstrate that the algorithm doubles the length of the line correctly, it is sufficient to demonstrate that the following invariant holds throughout the algorithm execution and that the algorithm terminates. Every left/right pair of monomers in states $x_{\mathrm{left}}0_{\mathrm{right}}$ the input becomes replaced by two left/right monomer pair in states $(x-1)_{\mathrm{left}} 0_{\mathrm{right}}(x-1)_{\mathrm{left}} 0_{\mathrm{right}}$.
Locally, the invariant holds from the fact that PDS takes a pair of left/right monomers in states $x_{\mathrm{left}}, 0_{\mathrm{right}}$ as shown in Figure~\ref{fig:lineDoubling}a.1 and outputs four monomers in states $(x-1)_{\mathrm{left}},0_{\mathrm{right}}, (x-1)_{\mathrm{left}}, 0_{\mathrm{right}}$ as shown in Figure~\ref{fig:lineDoubling}a.2, with Figure~\ref{fig:lineDoubling}c demonstrating that PDS does this correctly. 
Since PDS can be applied to each monomer pair independently of any other pair, adjacent concurrent applications of PDS will not block each other. To see that the algorithm terminates, we note that since the input and the output of PDS assume different states and PDS can only double monomer pairs in the input states, each pair of monomers in the original input line can undergo PDS exactly once. 

\paragraph{Time and space analysis.}
As shown in Figure~\ref{fig:lineDoubling}c, the space complexity of PDS is $4 \times 2$. Since PDS only attaches monomers on top of the input monomers as per the rules, adjacent monomer pairs in the input of the line doubling algorithm will remain on the same axis (i.e. maintain their $y$-coordinates on the triangle grid shown in Figure~\ref{fig:model}a).  Thus, the space complexity of the line doubling algorithm is $O(l) \times O(1)$. We have established above that the expected time for PDS is $13$. The event in which an application of PDS takes place is a Poisson process; therefore, the expected time for a single occurrence of this event to take place is $1/k$, where $k$ is the total possible positions for PDS to be applied. Let $T$ be the time it takes for the line doubling algorithm to terminate on an input of length $l$, then the expected value of $T$ is  
$E[T] = 13   \sum_{i=1}^{l/2} 1/i  = O(\log l) $.
\end{proof}

\section{Using synchronization to communicate quickly}\label{sec:sync}

In previous work~\cite{nubots} a fast signalling method, called synchronization, was introduced for the nubot model. Here, we use the term ``shift  synchronization'' for this technique, and introduce another kind of synchronization called ``lift synchronization''. With these two synchronization mechanisms, we can send one of two distinct messages (bits) to all monomers on  a line  in expected time that is merely logarithmic  in the line length. 

\begin{figure}[t] 
\begin{center}
    \includegraphics[width=0.58\textwidth]{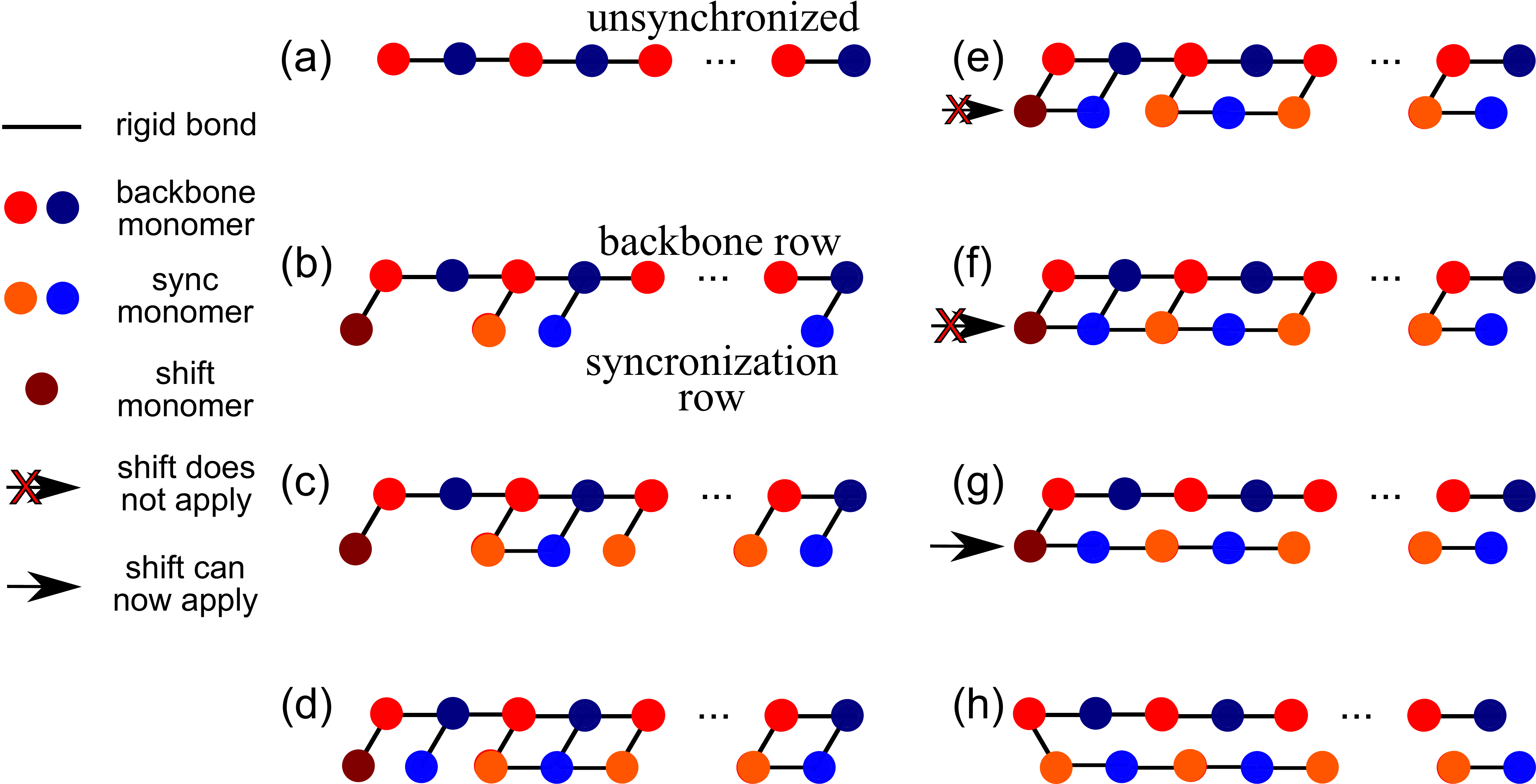}
  \caption{Shift synchronization, see reference~\cite{nubots} for more details. (a)~Initial state. (b)~Monomers, randomly and in parallel, each grow a new {\em synchronization monomer} below. The leftmost new monomer (in brown) is denoted the ``shift monomer''. (c)--(d)~When synchronization monomers detect neighbouring horizontal synchronization monomers to the left and right, they bond. When a synchronization monomer has bonded to both horizontal neighbours, its bond to its parent monomer is removed. (e)~When the shift monomer detects a synchronization monomer neighbour to the right, it changes state, permitting a movement rule to be applied, although the connectivity prevents this movement from occurring yet. (f)~Synchronization monomers continue to appear and update their bond structure. (g)--(h)~All of the vertical rigid bonds are gone and the movement rule can now be applied in one step. All monomers on the original horizontal line  detect the change in state (parity) of their neighbour below.}
  \label{fig:shiftsync}
  \end{center}\end{figure}

\begin{lem}[Communication via synchronization]
Let $\ell$ be a length $n$ line of monomers, where each monomer in $\ell$ is in one of two distinct states $\{ s_0,s_1\}$, with each adjacent pair distinct from each other.  A  bit $b \in \{ 0,1\} $ can be communicated to all monomers on the line in $O(\log n)$ expected time, $O(1)$ monomer states and  $ O( n) \times O(1)$ space. \label{lem:sync communication}
\end{lem}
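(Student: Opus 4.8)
The plan is to realise the two possible messages $b=0$ and $b=1$ as two different global ``moves'': to send $b=0$ we run the shift synchronization of~\cite{nubots} on $\ell$, and to send $b=1$ we run a new variant, \emph{lift synchronization}, with the two routines designed so that when either terminates every monomer of $\ell$ can tell from its own constant-size neighbourhood which of the two ran, and hence recover $b$. So the lemma reduces to: (a) recalling the shift synchronization analysis; (b) building lift synchronization with the same resource bounds but a \emph{locally} distinguishable outcome; and (c) combining them into one $O(1)$-state rule set.

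For (a), recall shift synchronization (Figure~\ref{fig:shiftsync}). Each monomer of $\ell$ independently and in parallel grows a \emph{synchronization monomer} directly below it; the leftmost one --- identifiable because $\ell$ is finite and its states alternate between $s_0$ and $s_1$ --- becomes the \emph{shift monomer} and carries a movement rule. Adjacent synchronization monomers bond horizontally, and each severs the rigid bond to its parent once bonded to both horizontal neighbours. The key point is that while any vertical rigid bond survives the shift monomer's movement is globally \emph{blocked} (its movable set is empty, since the move would drag a still-rigidly-attached parent into an occupied cell); only when the last vertical bond is gone does the rule fire, in a single step, rigidly translating the whole lower row by one unit relative to $\ell$, so that every monomer of $\ell$ simultaneously sees the parity of its lower neighbour flip. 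For timing, the events that must happen are $\Theta(n)$ appearance and bond-update steps, each reachable by a self-contained chain of $O(1)$ steps regardless of what the other monomers do, plus the single final move; since each such step fires at rate $1$ once enabled, the completion time is dominated by a maximum of $\Theta(n)$ independent sums of constantly many rate-$1$ exponentials, which has expectation $O(\log n)$ (this is the synchronization timing analysis of~\cite{nubots}, a harmonic-sum bound in the spirit of Lemma~\ref{lem:chernoff}). The routine occupies one extra row, so it uses $O(n)\times O(1)$ space and $O(1)$ states.

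For (b) and (c), I would define lift synchronization with the same skeleton --- parallel growth of synchronization monomers, horizontal bonding, staged removal of vertical rigid bonds, and a single terminal global move held blocked by connectivity until every preparatory step has occurred --- but with the terminal move a vertical ``lift'' of the auxiliary row rather than a horizontal shift, engineered to leave each monomer of $\ell$ adjacent to a monomer whose state differs from the one the shift routine would leave. Reusing the skeleton makes the $O(\log n)$ time, $O(1)$ states and $O(n)\times O(1)$ space bounds immediate. To communicate $b$, the sender performs a single local rule at one end of $\ell$ that initiates shift synchronization if $b=0$ and lift synchronization if $b=1$ (so no prior long-range communication is needed); when the chosen routine ends, every monomer of $\ell$ reads off the shift-outcome or the lift-outcome from its neighbourhood and stores $b$ in a local flag. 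Disjoint intermediate state sets for the two routines let them coexist in one rule set without interfering.

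I expect the main obstacle to be designing lift synchronization so that three things hold at once: (i) its terminal move has empty movable set until, and only until, every synchronization monomer has appeared and re-bonded, so that the synchronizing single step is faithful even though everything before it is asynchronous --- handled, as for shift synchronization, by the rigid-bond connectivity argument, but re-deriving it for a vertical move is where the care goes; (ii) the shift and lift outcomes are distinguishable \emph{locally} at every monomer of $\ell$, not merely by a global observer; and (iii) neither routine's rules fire on the other's intermediate configurations. Once (i)--(iii) are in place, the stated time, space and state bounds follow from the shift synchronization analysis essentially verbatim.
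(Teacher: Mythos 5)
Your proposal matches the paper's proof in essentially every respect: shift synchronization communicates one bit value and a lift variant (same parallel growth of a synchronization row, staged bond removal, and a terminal move held blocked by the movable-set/rigid-bond connectivity until the whole row is ready) communicates the other, with each line monomer locally detecting which move occurred, and the $O(\log n)$ bound following from independent parallel growth with only constant local dependencies. The only cosmetic difference is that in the paper's lift synchronization the line monomers detect the \emph{absence} of the row below (it is moved away) rather than a differing neighbour state, which is an instance of the local-distinguishability requirement you already identified.
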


\begin{proof}
We first give a brief overview of shift synchronization using  Figure \ref{fig:shiftsync},  more details can be found in~\cite{nubots}. Each monomer on the line, in state $s \in \{ s_0,s_1\}$, attaches a new {\em synchronization monomer}  below itself with state $s'$ and with a rigid bond. When a synchronization monomer with state $s'$ senses a new horizontally adjacent neighbouring synchronization monomer it forms a rigid (horizontal) bond with this monomer. After connecting to both neighbouring synchronization monomers, the monomer removes the bond between it and its parent monomer (with state $s$) above.

The rightmost and leftmost synchronization monomers are treated differently. At the rightmost end of the line, the new monomer requires only one bonded neighbour (to the left) before removing its bond to its parent monomer. The leftmost synchronization monomer is called the ``shift monomer''. This shift monomer attempts to push the (new) synchronization row to the right. However, by definition of the movement rule, the shift monomer can move only after  all of vertical rigid bonds between the synchronization row and the original line have been removed. Also, due to the order in which bonds are formed and removed, this can only happen after the entire synchronization row has grown. At some point, we are guaranteed to get to the configuration in Figure~\ref{fig:shiftsync}(g), where the shift monomer is free to push right. After the move (Figure~\ref{fig:shiftsync}(h)), the relative position of synchronization monomers to their generating monomers has changed. Thus, the original line of monomers are free to detect that synchronization has occurred, and a 0 bit has been communicated to all of them. 

\begin{figure} \centering
    \includegraphics[width=0.58\linewidth]{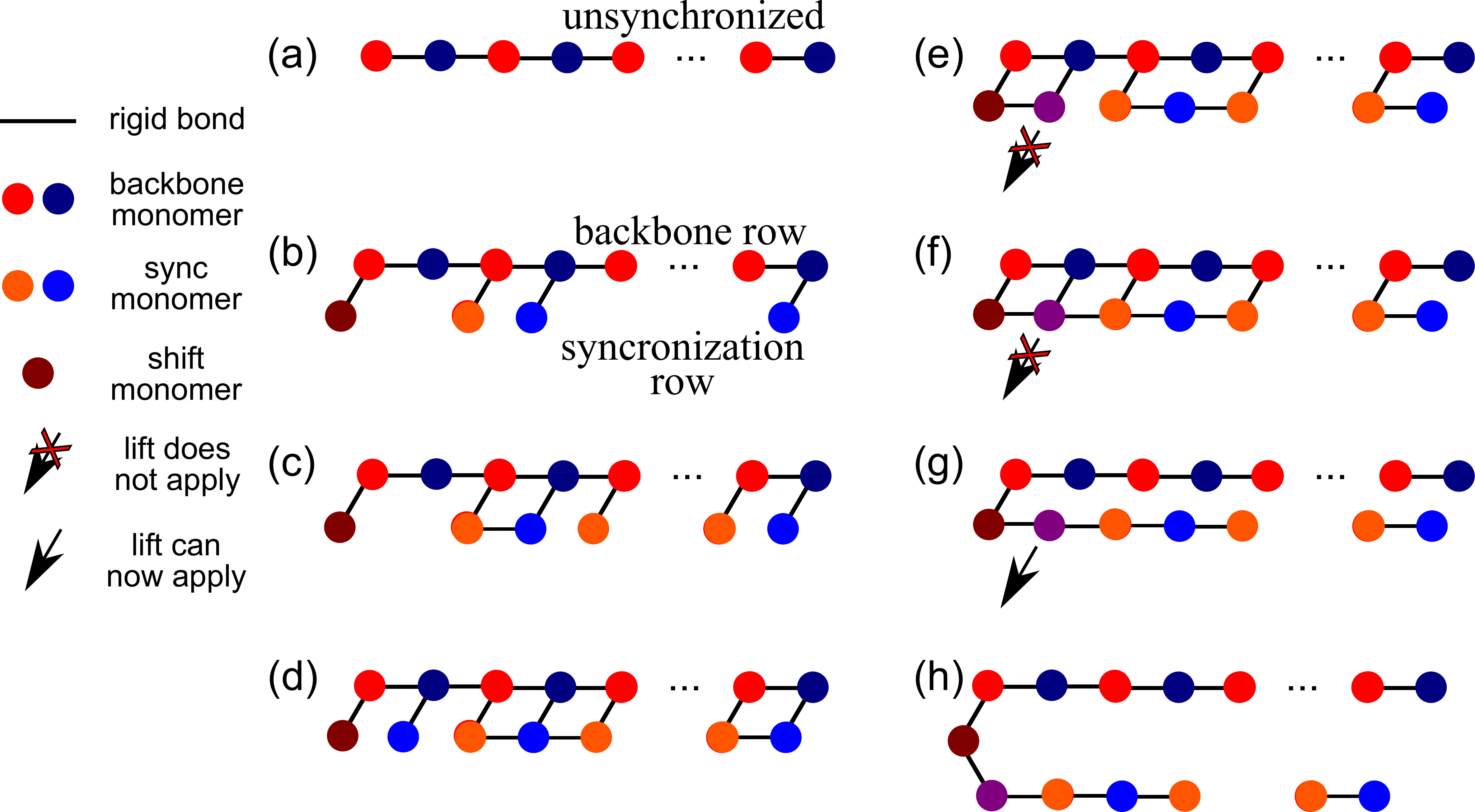}
  \caption{Lift synchronization. Lift synchronization uses similar ideas to shift synchronization, except instead of pushing the entire synchronization row  horizontally, the synchronization row is moved vertically below, and away, from the original line. The monomers on the original line are then free to detect the disappearance of synchronization monomers, signalling the completion of the lift synchronization.}
  \label{fig:liftsync}
\end{figure}

To send a 1 bit we use a similar method, called lift synchronization, shown in Figure~\ref{fig:liftsync}. In lift synchronization the synchronization row is lifted vertically down, and away, from the original line, rather than being shifted right. As with shift synchronization this can only occur after the entire synchronization row has been built and all bonds are in their final form. After the move (Figure~\ref{fig:liftsync}(h)), the monomers on the original line detect the new empty space below,  and thus detect that a 1 bit has been communicated to them. 

In this way, for a line in any of the 6 rotations, it is possible to communicate a 0 or 1 bit, depending on whether shift or lift synchronization is used. The expected time to send the bit is $O(\log n)$, as (a) all new monomers are created independently and in parallel, and (b) each monomer needs only to wait on a constant number of neighbours in order to get its bond structure to the final configuration. The space and states bounds are straightforward to see.  
\end{proof}

\section{Fast line growth using $O(1)$ states}\label{sec:line}
The line growth algorithm given in \cite{nubots} grows a line of length $n\in\mathbb{N}$ in $O(\log n)$ time, using $O(\log n)$ monomer states and starting from a single monomer on the grid. Here, we provide an alternative line growth algorithm that completes in $O(\log^2 n)$ time, using $O(1)$ monomer states and starting from $O(\log n)$  monomers on the grid. Although our construction is an $O(\log n)$ factor slower than that in~\cite{nubots}, it uses only $O(1)$ states while maintaining  the property that all growth is contained within an $O(n) \times O(1)$ region. The latter two properties are both requirements in achieving our main theorem via the other constructions in this paper, which extensively use this line growth algorithm. 

\begin{problem}[Binary Line Growth problem]\label{def:Binary Line Growth problem}   
Input: A line of $\lfloor \log_2 n \rfloor + 1 $ monomers each in one of two binary states from $ \{ s_0 , s_1\}$, that encode the binary string $b = b_{\lfloor \log_2 n \rfloor } \ldots b_1 b_0$ in the standard way, where $n = \sum_{i=0}^{\lfloor \log_2 n \rfloor } b_i \cdot 2^i$ . \\
Output: A line of $n$ monomers. 
\end{problem}

\begin{thm}[Binary Line Growth] \label{thm:binary line growth}
There is a nubot algorithm to solve the Binary Line Growth problem   in expected time $O(\log^2 n)$,  space $O(n) \times O(1)$, and with $O(1)$ states. 
\end{thm}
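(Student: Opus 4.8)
The plan is to realise Horner's rule, $n = (\cdots((b_{\lfloor\log_2 n\rfloor}\cdot 2 + b_{\lfloor\log_2 n\rfloor-1})\cdot 2 + \cdots)\cdot 2 + b_0$, by a sequence of $\lfloor\log_2 n\rfloor$ ``rounds'', each of which \emph{doubles} a growing result line and then \emph{conditionally appends} a single monomer. The result line starts as a single monomer (the leading bit $b_{\lfloor\log_2 n\rfloor}$ is necessarily $1$ since $n\geq 2^{\lfloor\log_2 n\rfloor}$, so there is no choice to make at initialisation). A constant-state ``cursor'' scans the input line from the most-significant end toward the least-significant end; sitting over bit position $i$ it reads $b_i\in\{s_0,s_1\}$ and triggers round $i$: (1) the result line is doubled using the line-doubling routine of Section~\ref{sec:linedouble}, costing $O(\log n)$ expected time since the line has length at most $n$; (2) exactly $b_i$ new monomers are appended at its far end (a constant-state ``append'' gadget, choosing the state so the alternating pattern is preserved at the seam); (3) a ``done'' signal returns to the cursor, which advances to position $i-1$. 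When the cursor falls off the least-significant end, a final constant-state cleanup sweep deletes the $O(\log n)$ input monomers, working inward from the high end so the result line never disconnects, leaving exactly a line of $n$ monomers.

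The only non-local coordination needed is between the cursor, which lives on the $O(\log n)$-long control region, and the far end of the result line, which may have length as large as $n$. For this we use Lemma~\ref{lem:sync communication}: the cursor issues the command for round $i$ by sending the bit $b_i$ down the result line via shift synchronization (if $b_i=0$) or lift synchronization (if $b_i=1$) in $O(\log n)$ expected time; having received the bit, the result line performs its doubling and its conditional append, then sends a ``done'' bit back up by a further synchronization, so the cursor knows it may advance and read $b_{i-1}$. Because synchronization is the sole mechanism of long-range signalling and each invocation fully erects and then fully dismantles its own scaffolding, the rounds are strictly causally ordered even though the model is asynchronous and, with only $O(1)$ states, we have no way to ``number'' the rounds: nothing in round $i+1$ can fire until the ``done'' signal of round $i$ reaches the cursor. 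Signalling internal to the $O(\log n)$-long input line (cursor to a junction monomer and back, relaying into and out of the synchronizations) costs at most $O(\log n)$ per round and does not affect the asymptotics.

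With $\lfloor\log_2 n\rfloor = O(\log n)$ rounds, each costing $O(\log n)$ expected time (one command synchronization, one doubling, a constant-size append, one acknowledgement synchronization, a constant-state re-initialisation of the result line into the canonical alternating form expected by the next doubling and synchronization, and a single cursor step), the total expected time is $O(\log^2 n)$; initial setup is $O(1)$ and final cleanup is $O(\log n)$. All growth stays within an $O(1)$-height strip along one axis (line doubling uses height $2$, synchronization uses height $O(1)$, and the input line with its cursor occupies $O(\log n)\times O(1)$ hanging off one end), so the space is $O(n)\times O(1)$; and we compose only a constant number of constant-state sub-routines (line doubling, shift/lift synchronization, the cursor, the append gadget, the re-initialisation sweep, the cleanup sweep), so $O(1)$ states suffice.

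The main obstacle is the tension between the $O(1)$-state budget and the $\Theta(\log n)$ sequential doublings. Since no round counter is available, correctness rests entirely on the strict causal chain enforced by synchronization; and one must guarantee that every round hands the line-doubling routine an input in precisely the canonical constant-state ``alternating'' form it expects, which in general forces a constant-state, $O(\log n)$-time re-initialisation (for instance a parity reset driven off the completion of the acknowledgement synchronization) between rounds rather than a naive repeated application of the subroutine. A secondary difficulty is purely geometric: ensuring that the result line, the input line, and the transient synchronization scaffolding of consecutive rounds never collide and that the whole assembly stays connected within a thin strip; this is handled by laying the result line and input line colinearly with the result growing away from the input, together with the fact that each round's scaffolding is built and torn down before the next round begins.
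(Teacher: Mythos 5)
Your proposal is correct in its essentials, but it takes a genuinely different route from the paper. The paper reads the input LSB-first and never doubles the line under construction: it maintains two auxiliary unary registers, a \emph{mask} (kept at length $2^k$ by doubling) and a \emph{generator} (doubled when $b_k=0$, tripled when $b_k=1$ via a tripling variant of the Section~\ref{sec:linedouble} routine), and uses a \emph{masking} (difference) step to split a pre-measured length-$2^k$ chunk off the generator and convert it into line monomers whenever $b_k=1$ (Figures~\ref{fig:algo-seq} and~\ref{fig:tripling}), with shift/lift synchronization steering the control flow. You instead read MSB-first and implement Horner's rule directly on the result line: double it with the Section~\ref{sec:linedouble} routine, conditionally append one monomer, and use Lemma~\ref{lem:sync communication} both to deliver the bit and to acknowledge completion. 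Both schemes amount to $O(\log n)$ phases of $O(\log n)$-expected-time primitives, so the stated time, space and state bounds come out the same. What your route buys is economy of machinery: only doubling, synchronization and a constant-size append gadget are needed, with no tripling and no masking primitive. What the paper's route buys is that the long line under construction stays passive (it only ever receives finished chunks), so one never has to re-enter it with the doubling machinery; your scheme must instead re-arm the result line's alternating states between rounds and argue non-interference of the doubling bridges, the append, and the sync scaffolding on a line of length up to $n$. You identify exactly these pressure points, and your fixes are sound provided the re-initialisation is a \emph{parallel} local state reset triggered by the synchronization event (so it costs $O(\log n)$, not $\Theta(n)$) and provided monomers only participate in the acknowledgement synchronization once they are in their post-doubling (and post-append, at the far end) states, which is the same composition trick the paper itself relies on; note also that the paper's own construction repeatedly doubles and triples its generator and mask across rounds, so the state re-arming issue is not actually avoided there either.
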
 

\begin{proof}  
As described in the problem statement, the input $n$ is encoded as a line of $\lfloor \log_2 n \rfloor +1$ monomers where the $i$th monomer  encodes bit $b_i$ of the binary  string $b = b_{\lfloor \log_2 n \rfloor } \ldots b_1 b_0$, and where~$b$  encodes $n \in \mathbb{N}$ in the usual way.  The construction proceeds iteratively: at iteration~$ k $, where $0 \leq k \leq \lfloor \log_2 n \rfloor$, bit~$b_k$ is read from the input and if $b_k = 1$ the partially grown line is increased in length by the value $2^k$, otherwise the length of the line remains unchanged. The idea is described at a high-level in the algorithm in Figure~\ref{fig:algo-seq}, below we show  that the  integer variables in that algorithm can be implemented as lines of the corresponding integer lengths, and these can be acted upon in a way that quickly builds the length~$n$ line. 

\paragraph{Construction details.}
During construction, the line-growing configuration is composed of three main regions. The first is the ``input'', as described above; at  iteration $k$ of the algorithm the least significant bit $b_k$ of the input is read (stored), and deleted. Then we have a working region containing two lines, respectively called the ``generator'' and the ``mask'',  each of which have length $2^k $ at iteration~$k$. Finally we have the ``line'' under construction: at iteration~$k$, the line length is given by the binary number $b_{k-1} \ldots  b_1 b_0$ encoded by the first $k$ bits (LSBs) of the input.

\newcommand{\mask}{\ensuremath{\mathrm{mask}}}
\newcommand{\gen}{\ensuremath{\mathrm{generator}}}
\newcommand{\dwline}{\ensuremath{\mathrm{line}}}
\begin{figure}[t]
\centering
\begin{minipage}[c][][c]{0.87\textwidth} 
\begin{algorithm}[H]
{\footnotesize 
\SetAlgoLined \DontPrintSemicolon \SetSideCommentRight

line\_growth($b = b_{\lfloor \log_2 n \rfloor}  \ldots  b_2 b_1 b_0$ : binary string that encodes $n \in \mathbb{N}$) \;
Initialise:  $k = 0$,  $\mask = 1$,  $\gen = 1$,  $\dwline = 0$ \;
\While{$k \leq \lfloor \log_2 n \rfloor$  }{    
   $\mask := 2 \cdot \mask$ \hfill\tcc{$\mask := 2\cdot 2^k = 2^{k+1}$}
  
  \If{$b_k = 0$ }{ 
      $\gen := 2 \cdot \gen$  \hfill\tcc{$\gen := 2\cdot 2^k = 2^{k+1}$}
  
  }
  \ElseIf{$b_k = 1$ }{
        $\gen := 3 \cdot \gen$  \hfill\tcc{$\gen := 3\cdot 2^k$}
  
       $\dwline := \dwline + \gen - \mask $  \hfill\tcc{$\dwline := \dwline + 2^k$}
       
       $\gen :=  \gen - (\gen - \mask)$   \hfill\tcc{$\gen := 3\cdot 2^k - 2^k =  2^{k+1} $}

  }
  Delete $b_k$  
  
  ++$k$
}
\Return $\dwline$
} 
\end{algorithm}
\end{minipage} 
\caption{Algorithm that takes a binary string $b$ as input, that encodes $n \in \mathbb{N}$, and returns the integer $n$. This algorithm describes the control flow for the nutbots construction that builds a line of length $n$ in the proof of Theorem~\ref{thm:binary line growth}.}
\label{fig:algo-seq}
\end{figure}

\begin{figure}[h!]
    \centering 
    \includegraphics[width=0.95\textwidth]{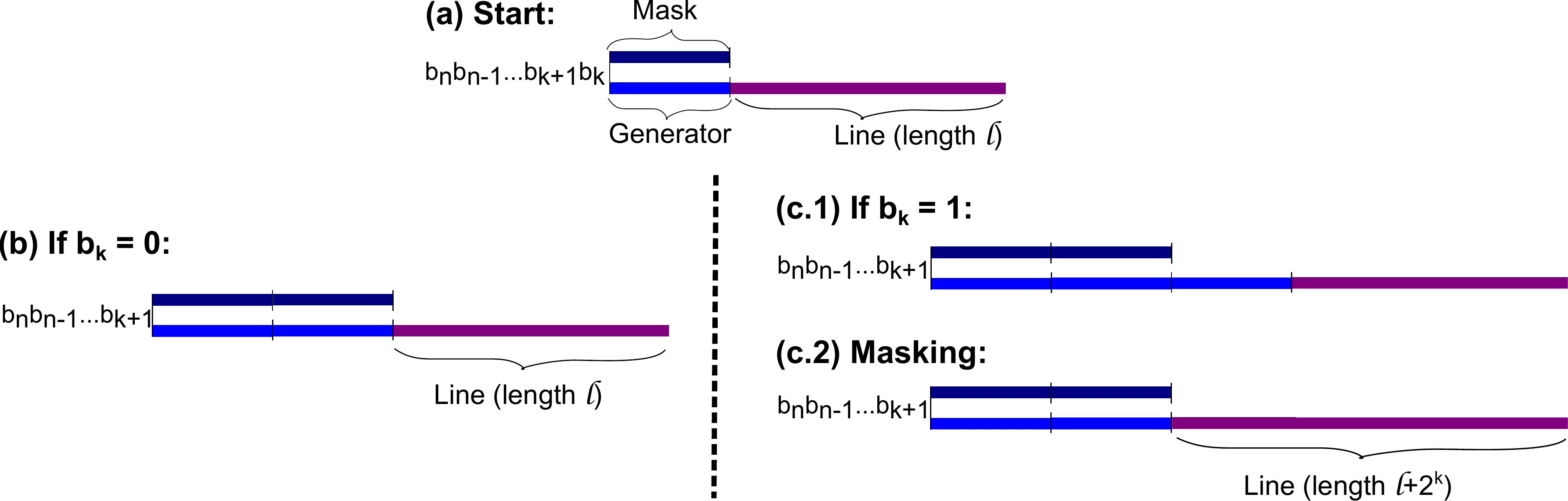}
  \caption{Reading a single input bit, and growing the  line accordingly. (a)  From the start state, depending on if the least significant bit remaining in the string (bit~$b_k$, the $k^{th}$ bit of the original string) is a 0 or 1, the system will end up in one of two different configurations, shown in~(b) or~(c.2). More details for the $b_k=1$ case are shown in Figure~\ref{fig:tripling}.}
  \label{fig:iteritiveoverview}
\end{figure}

The construction begins with  the rightmost of the input monomers growing a small, constant-size, hardcoded structure containing both the generator and mask, both initialised to be of length $1$. 

Figure~\ref{fig:algo-seq} describes a (seemingly  overcomplicated, but analogous to our construction) algorithm for generating the integer $n$ from a bit string $b$. Our construction implements this algorithm, but where the integer variables ``\mask'', ``\gen'', ``\dwline'' are encoded in unary as lines of monomers of that length. It is straightforward to verify, via induction on~$k$, that upon input of the string $b \in \{ 0,1\}^{\ast} $, that encodes  $n \in \mathbb{N}$, the algorithm in Figure~\ref{fig:algo-seq} returns the integer $n$.  Our nubot implementation of one iteration of this algorithm  is shown in Figure~\ref{fig:iteritiveoverview}. Figure~\ref{fig:iteritiveoverview} uses a high-level notation where lines of nubot monomers are represented as colored lines  drawn on the square grid.  We describe the construction    by  describing the main primitives  it uses to  implement the algorithm in Figure~\ref{fig:iteritiveoverview}: line doubling or tripling implement multiplying by 2 or 3; synchronization implements bit communication---and thus which instructions to implement next---to all monomers; and masking implements taking differences.

\begin{figure}
\centering
    \includegraphics[width=0.9\textwidth]{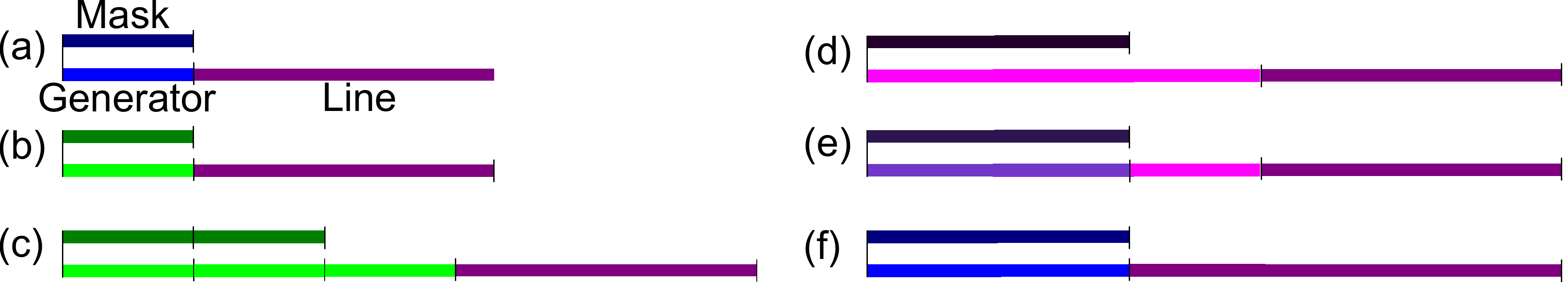}
    \vspace{1ex}
  \caption{Reading a 1 bit. (a) Initial configuration. (b) Synchronization message sent to the mask (dark blue) and generator (light blue)  lines to initiate  tripling of the generator. (c) Masks doubles in length, generator triples. (d) Synchronization. (e) Masking:  monomers in the generator look immediately above for a corresponding monomer in the mask line. If none exists, the generator monomer changes its state to that of a line monomer, if one exists it stays part of the generator. (f)~Masking finished, synchronization. }
  \label{fig:tripling}
\end{figure}

\paragraph{Line doubling and tripling.} Line doubling takes a line of length $\ell$ and generates a line of length~$2\ell$, as described in Section~\ref{sec:linedouble}. Line tripling takes a line of length $\ell$ and generates a line of length $3\ell$, using a similar technique  (rather than inserting~2 monomers, we insert~1, synchronize, then insert~1 again), hence  we omit the details. 

\paragraph{Synchronization \& communicating a bit.}  We use a synchronization algorithm to simultaneously switch a line of monomers into a single shared state.  As described in Section~\ref{sec:sync}, we have the two methods of lift and shift synchronization: we use one  to communicate a 0 bit and the other to communicate a 1 bit to monomers in the generator~and  mask.

\paragraph{Masking.} For two lines of different lengths, $\ell_1 > \ell_2$,  {\em masking}  communicates their difference $\ell_1- \ell_2$ to the line of greater length~$\ell_1$.  The lines are assumed to be orientated parallel, touching, and horizontal with their leftmost extent at the same~$x$ position. Assume the shorter line is on top: it synchronizes (by growing a new synchronization row on top), then the longer line synchronizes (by growing a new  synchronization row on bottom). Then the monomers in the longer line detect the presence or absence of monomers on the shorter line  above: if there is a monomer above then the longer line monomer goes to state $s_1$, if not it goes to state $s_2$. See Figure~\ref{fig:tripling}(d)-(f) for an outline.

\paragraph{Final steps.} The final bit of the input to be read is $b_{\lfloor \log_2 n \rfloor } =1$ (the MSB of a binary number is always 1) and just before reading it the line length is $n-2^{\lfloor \log_2 n \rfloor}$. Upon reading the final bit $b_{\lfloor \log_2 n \rfloor }$ some message passing occurs (via synchronizations) to trigger the deletion of the mask and to cause the generator monomers to change state so that they are now part of the line.  This latter step adds $2^{\lfloor \log_2 n \rfloor}$ (generator length) to the line, giving the desired line length of $n$.

\paragraph{Time, space, and states analysis.}
Line doubling/tripling of a length $n$ line happens in expected time $O(\log n)$, as does synchronization. There are $O(\log n)$ iterations each with a constant number of doublings/triplings and synchronizations, hence the total expected time is $ O(\log^2 n)$.  The three lines (mask, generator, line) are of length $\leq n$ and with their synchronization rows the height needed is 4, giving a space bound of  $O(n) \times O(1)$. A straightforward analysis of the algorithm shows that~$O(1)$ states are sufficient.
\end{proof}

\newcommand{\seglength}{\ensuremath{(\lfloor \log_2 n\rfloor + 1)}}
\newcommand{\seglengthnoparens}{\ensuremath{\lfloor \log_2 n\rfloor + 1}}
\newcommand{\seglengthminusone}{\ensuremath{\lfloor \log_2 n\rfloor}}

\section{Fast parallel sorting}\label{sec:sorting}

In this section we show how, on the nubot model, to sort $n$ binary numbers, taken from the set $\{ 0,1,\ldots , n -1 \}  $, in polylogarithmic expected  time and a constant number of states. 
Our sorting algorithm is loosely inspired by the work of Murphy et al.~\cite{MurphySort06} who show that physical techniques can be used to sort numbers that are represented as  the magnitude of some physical quantity.  They show that a variety of physical mechanisms can be thought of as an implementation of fast parallel sorting, including gel electrophoresis and chromatography (molecular weight), rainbow sort~\cite{schultes2006rainbow} (frequency), and mass spectrometry (mass to charge ratio). However, our  construction needs to take care of the fact that ours is a robotic-style geometric model that needs to implement fast growth while handling blocking and other geometric constraints.  A similar algorithm works for variations on this problem, such as  sorting $< n$ such numbers, but we omit the details of that.

We first define the distinct element sorting problem and then formally state the result.   

\begin{problem}[Distinct element sorting problem]\label{def:sorting}
Input:  A line of monomers, denoted $A = \lineseg{A}{1}{}\lineseg{A}{2}{}\ldots \lineseg{A}{n}{}$,  composed of $n \in \mathbb{N}$ contiguous line segments where for each $i \in \{1,2,\ldots n \}$  line segment $\lineseg{A}{i}{}$ is of length $|\lineseg{A}{i}{}| = \seglengthnoparens$ 
and encodes a distinct binary number from $\{0,1,\ldots,n-1 \}$, where specifically, for all $j \in \{ 1,2,\ldots  \seglengthminusone  \}$ it is the case that monomer  $\lineseg{A}{i}{j}$ is in one of two binary  states from $ \{ s_0, s_1\} $ and the end-of-segment monomer $\lineseg{A}{i}{\seglengthnoparens}$ is in one of two binary end-of-segment states  $  \{ s\#_{0}, s\#_{1}\}$.    \\
Output:  A line $A'$ consisting of the  $n$ binary line segments sorted in increasing order of  the standard lexicographical ordering of their binary sequences. 
\end{problem}

\begin{thm}[Distinct element sorting]\label{thm:Distinct Element Sorting}
Any instance $A$ of the distinct element sorting problem is solvable on the nubot model in expected time $O(\log^3 n )$, space $O(n \log n) \times O(n)$, and  $O(1)$ monomer states.  
\end{thm}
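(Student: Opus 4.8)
The plan is to implement a spatial, counting-style sort: give each of the $n$ input segments a unique \emph{slot} on the grid at horizontal position proportional to the numerical value it encodes, then slide each segment into its slot, and finally compact the sorted segments back into a contiguous line. Since the input values are $n$ distinct numbers from $\{0,1,\ldots,n-1\}$, value $=$ final rank, so once every segment is parked at $x$-coordinate proportional to its value the segments are already in sorted order; the only remaining work is to remove the (variable-width) gaps between them. First I would use \textbf{Binary Line Growth} (Theorem~\ref{thm:binary line growth}) together with line doubling (Section~\ref{sec:linedouble}) to build, above (or below) the input line $A$, a horizontal ``address track'' of length $\Theta(n\log n)$ subdivided into $n$ \emph{cells}, each cell being a block of width $\Theta(\log n)$ tagged with its own index $i\in\{0,\ldots,n-1\}$ written in binary — this is why the stated space bound is $O(n\log n)\times O(n)$. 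Building the indices in parallel can be done by the standard ``grow a counter, then repeatedly double while incrementing'' trick, all in $O(\log^2 n)$ expected time and $O(1)$ states.

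The core step is the \emph{matching/routing} phase: each segment $\lineseg{A}{i}{}$ must compare its stored value against the cell indices and travel to the unique cell whose index equals its value. I would do this by having each segment send its value up along a vertical channel and having the address-track cells, in parallel, each do a bitwise comparison (running left-to-right along the $\Theta(\log n)$-bit encodings, which takes $O(\log n)$ expected time using synchronization, Lemma~\ref{lem:sync communication}, to agree on the ``equal so far'' flag). A cell that finds a match grabs the segment and pulls it down into place using the movement (pull) primitive; a cell that finds no match does nothing. To avoid all segments colliding while they migrate horizontally, I would route them through a ``staging corridor'' one row above the track and use rigid-bond bridges plus shift/lift synchronization so that a segment only descends once its target cell is directly below it; the vertical height $O(n)$ in the space bound is exactly the corridor needed to let up to $n$ segments pass each other without blocking. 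Each segment's journey is distance $O(n\log n)$, but movement is non-local, so with the line-growth/contraction gadgets a segment can be translated distance $m$ in $O(\log m)=O(\log n)$ expected time; the comparison is $O(\log n)$; and the whole routing phase is coordinated by $O(1)$ global synchronizations, each $O(\log n)$, plus a ``doubling tournament'' to confirm all $n$ cells are settled, contributing the outer $O(\log n)$ factor. This gives $O(\log^3 n)$ overall, matching the statement.

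Once every segment sits in its value-indexed cell, the segments are sorted left-to-right but separated by empty cells (one for each absent value — but since the values are \emph{all} of $\{0,\ldots,n-1\}$, in fact there are none, though for the ``$<n$ numbers'' variant mentioned there would be). The final \textbf{compaction} step removes gaps: use masking (Section~\ref{sec:line}) and a parallel prefix-count so that each occupied cell learns how much empty space lies to its left, then all occupied cells simultaneously shift left by that amount via the pull primitive, producing the contiguous output line $A'$; this is again $O(\log^2 n)$. Throughout, every intermediate structure is a union of lines and constant-size gadgets, so $O(1)$ states suffice — the usual nubot discipline of reusing the same handful of ``left/right/synchronization/generator/mask'' states and never letting the state count grow with $n$.

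The main obstacle I expect is the \textbf{routing phase: moving $n$ segments of width $\Theta(\log n)$ through horizontal distances up to $\Theta(n\log n)$ without any pair mutually blocking}, since the nubot movement rule is inapplicable (``blocked'') whenever the movable set would have to push something into an occupied $\overrightarrow{v}$-boundary, and asynchrony means many segments attempt to move at once. Getting this right forces the $O(n)$ vertical dimension (so segments can be ``lifted out of the plane of traffic'' the way lift synchronization lifts a row away), and forces a careful invariant — e.g.\ ``at any time, in each row at most one segment is in motion, and a segment enters a row only after the row is verified empty via a synchronization sweep'' — so that every intended movement provably has a nonempty movable set. Establishing that this invariant is maintained under all asynchronous interleavings, and that the per-phase time bounds still hold (likely via Lemma~\ref{lem:chernoff}, checking its three hypotheses on the relevant rule-application chains), is where the real work lies.
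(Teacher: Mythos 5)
Your plan is a genuinely different decomposition from the paper's (the paper converts each value to a vertical rod of that height in place, labels the rods, and then merges heads pairwise over $O(\log n)$ rounds onto a single rod, reading the sorted order off the vertical axis; you convert each value to a horizontal address and route segments to value-indexed slots), but as written the core matching/routing phase has a genuine gap: it presupposes exactly the hard step. A segment's value and its target cell's index sit up to $\Theta(n\log n)$ apart horizontally; ``sending the value up a vertical channel'' places it above the segment's \emph{current} position, not above its target cell, so the per-cell bitwise comparison you describe never has both operands spatially adjacent. The available fixes each break a stated bound: comparing all $n$ values against all $n$ cell indices is an all-pairs computation of the kind the paper only achieves inside Lemma~\ref{lem:2dfunction}, whose $O(n^2k)\times O(nk)$ footprint violates your $O(n\log n)\times O(n)$ space claim; scanning cells sequentially costs $\Theta(n)$ time; and computing the travel distance directly requires a subtraction/scaling gadget plus a pushing line of a prescribed length, none of which you specify. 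Relatedly, the claim that a segment can be translated a distance $m$ in $O(\log m)$ expected time with $O(1)$ states is not supported by anything in the paper --- the constant-state line growth of Theorem~\ref{thm:binary line growth} costs $O(\log^2 n)$ and needs the length given in binary.

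The second unresolved point is the one you yourself flag: the anti-blocking discipline is asserted, not established. With $n$ rigid segments of width $\Theta(\log n)$ traversing distances up to $\Theta(n\log n)$ and then descending through rows used by other segments' traffic, the invariant ``one segment in motion per row, row verified empty by a synchronization sweep'' leaves open who performs the sweeps, how a descending segment negotiates occupied rows it must cross, and why the total coordination cost stays within $O(\log^3 n)$; Lemma~\ref{lem:chernoff} cannot be invoked until the rule-application chains are pinned down, and blocked movement rules in the nubot model simply do not fire, so a single unanalysed cyclic wait can stall the schedule. This is precisely the difficulty the paper's construction is engineered to avoid: values never travel far horizontally (they become rod heights in place), and the only long-range motion is pairwise merging of \emph{adjacent} rods, where collisions are resolved locally by deleting blocked rod monomers and walking heads over heads (Figure~\ref{fig:mergepart}), which is what makes the $O(\log^3 n)$ bound provable. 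To rescue your route you would need to supply a concrete addressing/translation gadget (value-to-distance arithmetic plus a constant-state pushing line) and a scheduling argument for the corridor with explicit movable-set and timing analysis.
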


\begin{figure}
\begin{center}
    \includegraphics[width=0.9\textwidth]{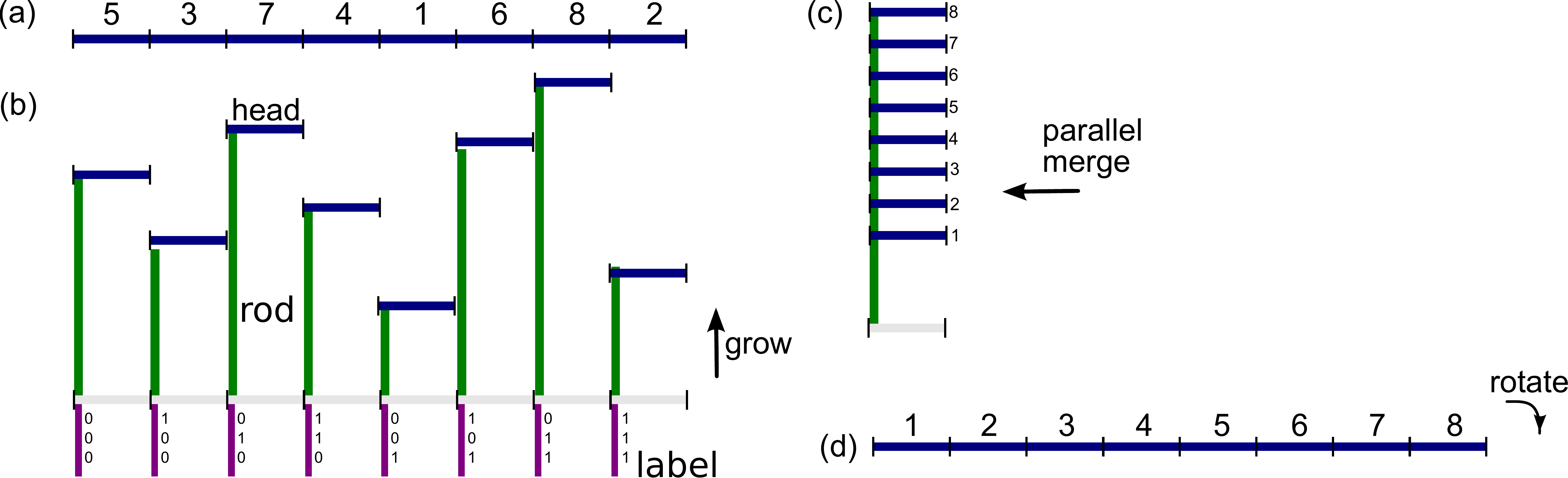}
\end{center}
    \caption{High-level overview of the sorting algorithm.   (a) A line of $n\seglength $ monomers, with~$n$ blue line segments (``heads'') each is the binary representation of a natural number $i \leq n$. (b)  A blue head that encodes value $i$ is grown to height $i$ by a green rod, in time polylog in $i$. 
  Purple ``labels''  are also grown at the bottom. (c) The heads are horizontally merged, using the labels to synchronize, to be vertically aligned. (d) Merged heads rotate down into a line configuration, giving the sorted list. Each stage occurs in expected time polylogarithmic in $n$, more details appear in subsequent figures. }
  \label{fig:sortoverview}
\end{figure}

\begin{proof}
The general idea is as follows. For each element $i$ (encoded as a ``head'') to be sorted, we grow a line of monomers (a ``rod'') to length $i$  as shown  in Figure~\ref{fig:sortoverview}(b).  After doing so, the relative heights of the heads gives their order. We then move each head horizontally left, through a sequence of $O(\log n)$  parallel merging steps, so that all heads are vertically aligned  (Figure~\ref{fig:sortoverview}(c)). Finally,  the heads are rotated and translated so that they lay along a vertical line as shown in Figure~\ref{fig:sortoverview}(d), in increasing order. The details are described next.

\subsection{Sorting details: rod growth and labeling} We begin with an instance of the distinct element sorting problem, an example of which is shown in Figure~\ref{fig:sortbone}(a).

\begin{figure}[t] %
  \begin{center}
    \includegraphics[width=\textwidth]{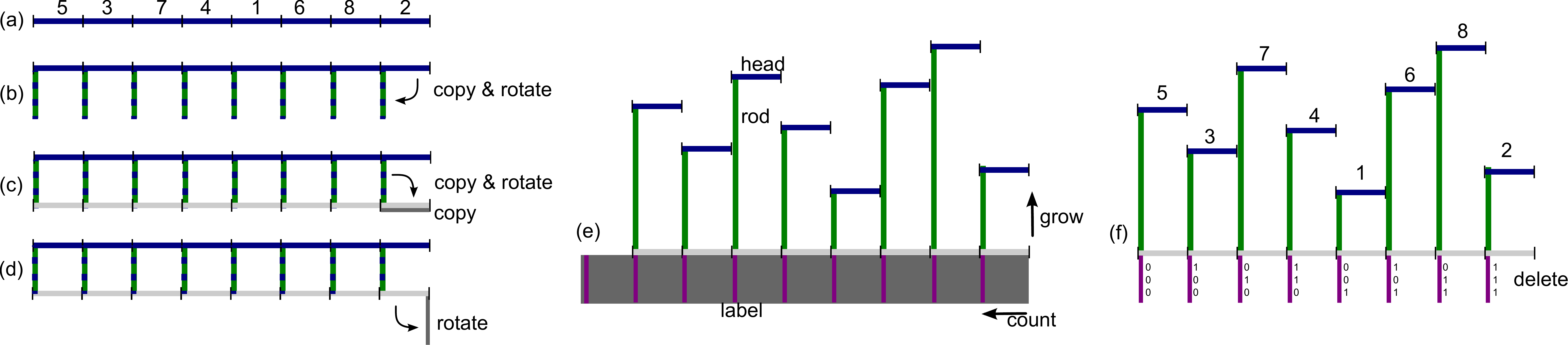}
  \end{center}
  \caption{The beginning of our sorting algorithm, this gives the details for the overview in Figure~\ref{fig:sortoverview}(a)--\ref{fig:sortoverview}(b).  (a) Initial configuration with head monomers in blue. (b) Head monomers are copied and rotated down to vertical (dashed blue-green), and then (c) are copied and rotated down to horizontal to form the light-grey label region. The light-grey region synchronizes after being copied. On the right, the dark grey region is copied and in it rotates down to vertical, as shown in~(d).  In (d) the heads have received a  ``synchronization done'' message from the  light-grey region and (e) they grow a vertical green rod (line) of length equal to the value encoded in the head (note the heads are not connected to each other, except through the green rods and light-grey region).  Also in (e), the dark grey region grows a purple counter, from right to left, that counts from $2^{ \seglengthnoparens} -1$ down to 0 (see main text for details) and is padded with  $ \seglengthnoparens  $ monomers between each counter column (thus growing a~$\seglength  \times \seglength 2^{\seglengthnoparens }$ rectangle,  in dark grey and purple). 
  (f)~All parts of the dark grey region delete themselves, except those directly below a green rod (purple). The purple regions that remain encode $n$ distinct binary strings.}
  \label{fig:sortbone} 
\end{figure}

\paragraph{Initialization.} The monomers begin in binary states as described in Definition~\ref{def:sorting}.  Growth begins at each of the $n$ blue heads: the head is copied and rotated down to vertical as shown in Figure~\ref{fig:sortbone}(b). This  rotation of a $O(\log n)$ length line takes $O(\log\log n)$ expected time to complete using the parallel ``arm rotation'' method in~\cite{nubots}---that is, each monomer independently rotates by one position, relative to its leftmost neighbour.  After rotation (Figure~\ref{fig:sortbone}(b)), each blue-green  line independently synchronizes, then makes a copy of itself which is in turn rotated down to become one of the $n$ horizontal light-grey line segments  shown in Figure~\ref{fig:sortbone}(c). After all light-grey segments are horizontal, they bond to each other and synchronize. This entire process completes in expected time $O(\log n)$, using Lemma~\ref{lem:chernoff}, and is dominated by the synchronization process. 

\paragraph{Grow rods.}  After this synchronization step, as shown  in Figure~\ref{fig:sortbone}(c), the  rightmost grey line segment is copied to form a dark grey segment that is copied down to vertical  in Figure~\ref{fig:sortbone}(d). Also in Figure~\ref{fig:sortbone}(d), and triggered by the previous synchronization, the blue-green rods, in $O(\log n)$ expected time, signal the heads to disconnect from each other, and the blue-grew rods then begin ``growing upwards''.  This vertical growth of the rods implements  a form of counting: we want the rods to grow to the height encoded by their blue head. This is carried out by using the line-growth algorithm in Section~\ref{sec:line} which takes time $O(\log^2 n)$ (an alternative method would be to use a suitable counter, such as the one described below). After  a rod has grown to the value encoded in its  head, shown in Figure~\ref{fig:sortbone}(e), the rod synchronizes, this latter step taking expected time $O(\log n)$. After expected time $O(\log^3 n)$ all $n$ rods have synchronized.

\paragraph{Label growth.} Rod growth occurs  above the light-grey line. Below that line another process takes place, the purpose of which is to label each rod with its position (from right), as a binary number in purple. Here the dark-grey line (Figure~\ref{fig:sortbone}(d), on right)  grows a ``padded'' counter, from right to left. The result of this counter is shown in Figure~\ref{fig:sortbone}(e) and is a $\seglength  \times \seglength 2^{\seglengthnoparens}$  rectangle where each of the purple columns, from right to left, encodes a distinct value from $2^{ \seglengthnoparens} $ down to 1, with the grey regions in between being there for padding purposes only. 

This counter works as follows. The counter is a modified version of the one used in Section 6.2 of~\cite{nubots}; the counter  in~\cite{nubots} used $O(\log n)$ states, here we use $O(1)$ states.  First note that the dark grey strip is of height $\seglengthnoparens $, it begins  counter growth by  converting each of its monomers to a state that represents the bit~1, giving the binary representation of the number $2^{\seglengthnoparens} -1 $.   
Let $j = \seglengthnoparens $, and we begin from the single dark grey column, applying the following procedure iteratively to each new column until  $j=1$. Each column copies itself to the left and in the new column the  $j$th bit is flipped. Both columns then decrement their value of $j$, and {\em both} iterate the copy and bit-flip procedure.   As is the case in~\cite{nubots}, this process  happens asynchronously and independently to all columns. After this happens we have a  $\seglength \times 2^{ \seglengthnoparens}$ rectangle containing all of the purple columns.  We are not done yet: we wish for the purple counter columns to align themselves with the $n$ green rods which are distance $ \seglengthnoparens $ apart, as in Figure~\ref{fig:sortbone}(e). To achieve this, another round of column insertion (i.e.\ counting)  begins, so that between each pair of counter columns, exactly $k = \seglengthminusone  $ new columns are inserted (between each pair of purple columns we  are implementing a  counter that  counts from $k$ down to $1$; note that the integer $k$ is available since the purple counter columns are of height $k$). Now the purple counter rows are exactly  distance $k+1 = \seglengthnoparens$ apart.  When the  process is complete the bottom row of the entire rectangle synchronizes, to give the structure illustrated in  Figure~\ref{fig:sortbone}(e) (although the grey rectangle extends further to the left than shown).

To give a straightforward time analysis, we assume that the copying and decrementing for an individual column happens sequentially and so takes expected time $O(\log n)$. Then in the completed counter, each counter column is the result of  no more than  $O(\log n)$ column-copying operations, hence any monomer in the final rectangle depends on  the application of $O(\log^2 n)$ rules. Applying Lemma~\ref{lem:chernoff} gives an expected time of  $O(\log^2 n)$.  The final synchronization step costs $O(\log n)$ expected time, giving a total   expected time of $O(\log^2 n)$. 

\paragraph{Deletion and synchronization.} All columns of the dark grey region then delete themselves, {\em except} those that are directly below a green rod. The deletion events happen in time $O(\log n)$.  The purple regions that remain are $n$ counter columns that encode $n$ distinct binary numbers. After  each green rod (above) has synchronized it signals to the light-grey line. As each counter row below completes deletion, it too  signals to the light-grey line. The light-grey line undergoes a lift synchronization.  The system is now in the configuration shown in Figure~\ref{fig:sortbone}(f). 

\paragraph{Analysis.} As already discussed, rod growth and the subsequent synchronization of all $n$ rods takes expected time $O(\log^3 n)$, and label growth takes expected time $O(\log^2 n)$.

\begin{figure}[t]
  \begin{center}
    \includegraphics[width = 0.99\textwidth]{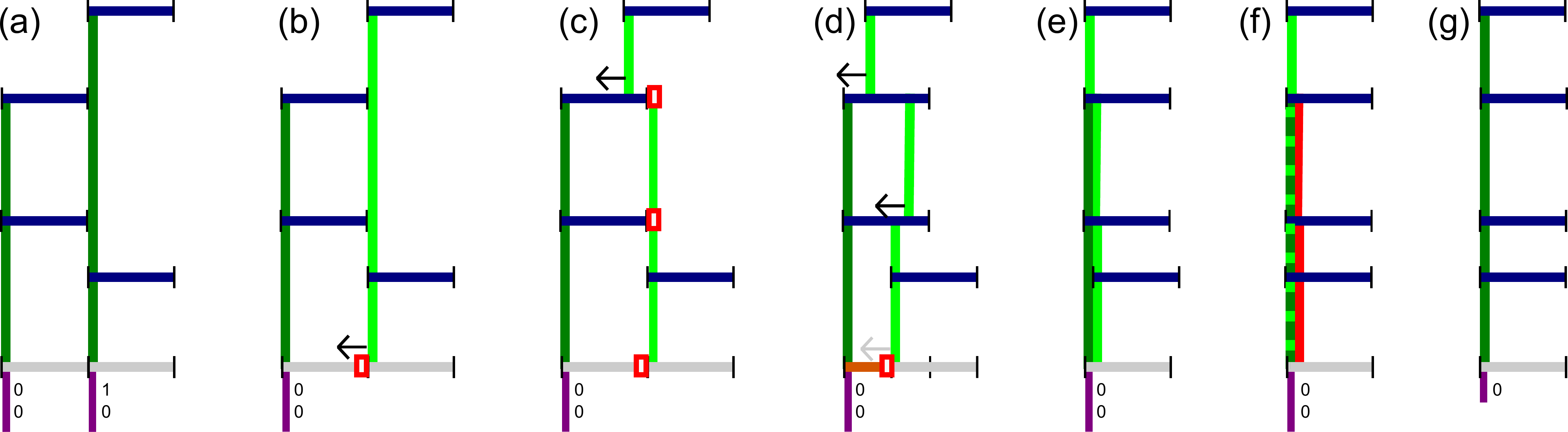}
  \end{center}
  \caption{Merging the heads on two adjacent rods, this gives the details for the overview in Figure~\ref{fig:sortoverview}(c). Exactly one round of parallel (pairwise) head-merging has already occurred, and so each green rod has two blue heads.  (a) Two rods, with two heads each: the goal is to merge all 4 heads onto the left rod. (b) Due to how they were generated the LSBs (top bits)  on the two purple labels are distinct. If the LSB bit is 1, the rod moves left, and deletes the purple label monomers. The rod tries to move left by having the light-grey line sequentially delete its $O(\log n)$ monomers one at a time, although here the rigid rod is immediately blocked due to collisions with the heads on the left.  (c)~Collisions are marked in red. The rod monomers at the collision locations delete themselves. The new shorter rods  can continue moving to the left, by ``walking'' along the blue heads as shown in~(c) and~(d).  (e)~The rods on the right make contact with the rods on the left. (f)~The contact triggers a ``done'' state to be reached by the rod on the left. It also signals for the rod on the right to delete itself. Head monomers from the right are shifted to their new rod. (g)~When everything has moved into place, synchronization occurs along the single green rod, and the LSB (top) bit of the purple label is deleted.}
  \label{fig:mergepart}
\end{figure}

\subsection{Sorting details: merging}
Now that  all  rods have grown, and are labeled,  we will now merge them as shown in in Figure~\ref{fig:sortoverview}(c).

\paragraph{Main idea.} Intuitively, we would like to simply shift all of the heads  to the left, deleting any rods that get in the way.  However, if we are not careful, rods can block each other and significantly slow down the process so that it no longer runs in time polylogarithmic in $n$ (consider the worst case, where the shortest rod is the rightmost one, and we wish to move all heads to the left).  Our merging algorithm gets around this issue by  merging  in a pairwise fashion. Every second pair of heads merge, deleting one of the rods and then the  light-grey line synchronizes. We are left with $\lceil n/2 \rceil$ rods, each having two heads. Then every second pair of those merge, and so on for $O(\log n)$ iterations. To organise the correct order of mergings, we use the purple labels, specifically their binary sequences, which are shown in Figure~\ref{fig:sortbone}(f). 

\paragraph{Merging algorithm.}  The following procedure is iterated until there is exactly one rod left. For each $i$, the $i$th rod checks its label, if the least significant bit (LSB) of the purple label is 1 (in Figure~\ref{fig:sortbone}(f) the LSB is on top), then  rod $i$  attempts  to merge with rod $i-1$ to its left, by ``moving'' its head to the left  distance $\seglengthnoparens $. This pairwise merging process is  described in the caption of Figure~\ref{fig:mergepart}. Rod~$i$, and its label (but not its head) get deleted in the process. After merging of the pair of rods is complete, rod $i-1$ deletes its LSB, thus shortening its purple region by 1. Rod $i-1$ now has two heads, and signals to the light-grey line that it is done. After all pairs have merged, we have $\lceil n/2 \rceil$ rods, with 2 heads each. At this point the light-grey line synchronizes and the process iterates. After $O(\log n)$ rounds of pairwise merging we are left with one rod, which has no label and is  carrying all~$n$ heads.  

When a pair of rods are merging, the right rod needs to move distance $ \seglengthnoparens $ to the left. In the worst case there are $\lceil  n/2 \rceil$ collisions for a pair of rods, however, these are all resolved in parallel as described in Figure~\ref{fig:sortbone}(c). So we have $\leq \lceil  n/2 \rceil$ heads, that each need to independently walk distance $\lfloor \log n \rfloor +1 $ to the left which na\"ively takes  $O(\log^2 n )$ expected time, and by applying Lemma~\ref{lem:chernoff} takes~$O(\log n )$ expected time. 

\paragraph{Final steps.} After merging is complete, the heads are on a single rod, sorted vertically upwards in increasing order of head value. The heads rearrange themselves on the rod to that they are separated by vertical distance exactly $\seglengthnoparens$, and then  rotate down into a line configuration, giving the sorted list as shown in Figure~\ref{fig:sortoverview}(d). 

\subsection{Sorting details: time, space and states analysis}
The expected time to complete the various stages of sorting was given above, and is dominated by growing and synchronizing the rods, which is $O(\log^3 n)$.   For the space analysis, note that the length of the light-grey line is $O(n \log n)$ (giving the horizontal space bound). The rods are of height $O(n)$, and the purple labels are of height $O(\log n)$, giving a vertical space bound  of $O(n)$. Hence we get a  space bound of $O(n \log n) \times O(n)$. All counters and line growth algorithms use number of states that is constant, which can be seen by a careful analysis of each part of the construction.  
\end{proof}

\section {Fast Boolean matrix multiplication}\label{sec:matrixmult}

Let $M$ and $N$ be $n \times n$ Boolean matrices. Let $M_{i, j}$ denote the element at row $i$ and column $j$ of $M$, and let $MN$ denote their matrix product. The following two definitions are illustrated in Figure~\ref{matrixAndNubots} and describe our encoding of  a square matrix as an addressed line of monomers.\footnote{Our choice of a 1D, rather than 2D, encoding simplifies our constructions. It would also be possible use a more direct 2D square encoding, which, it turns out, can be unfolded to and from our line encoding in expected time $O(\log n)$. We omit the details.}  

\begin{figure}
  \begin{center}
    \includegraphics[width=0.5\textwidth]{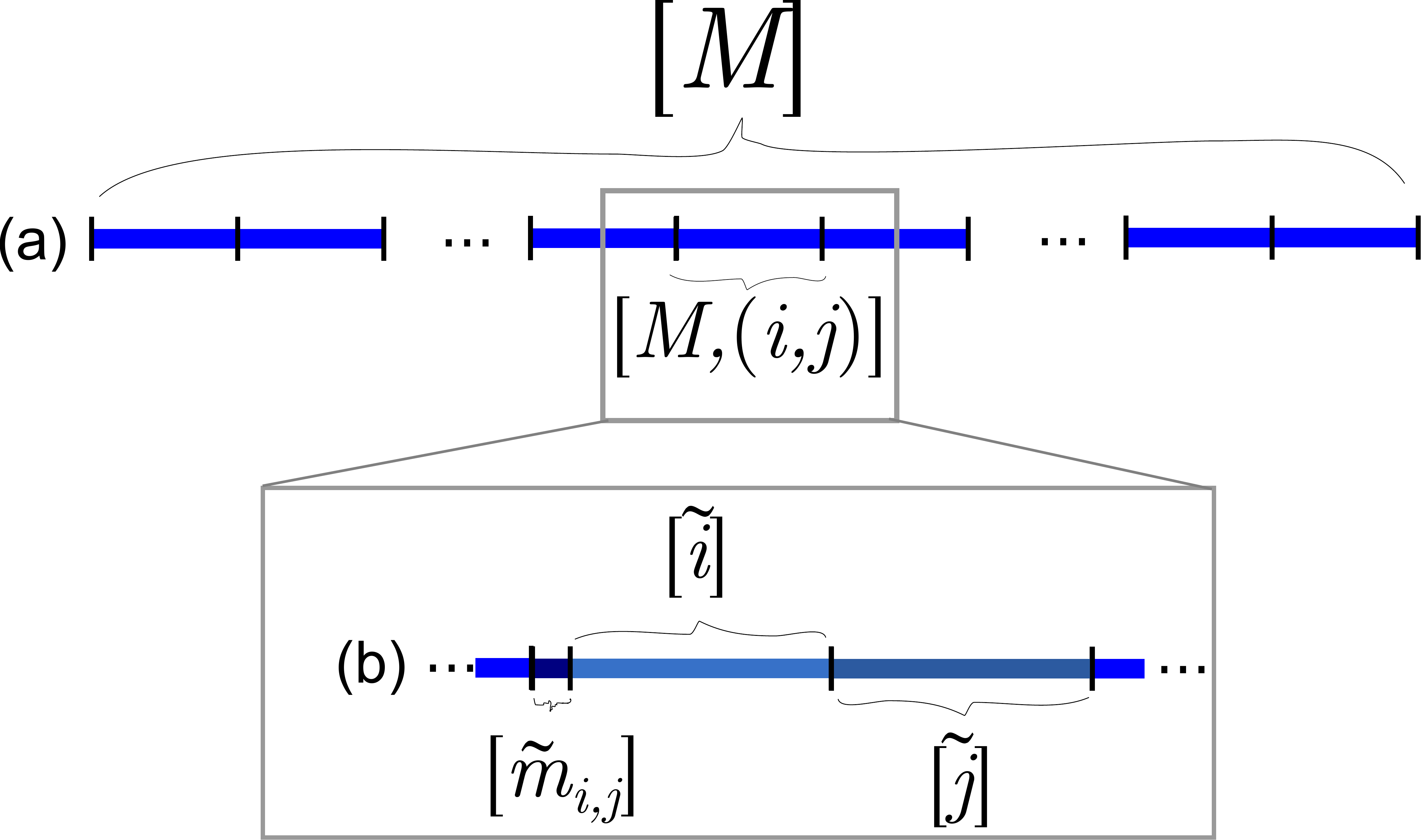}
    \caption{(a) Encoding of a Boolean matrix $M$ as a line of monomers $\lineseg{M}{}{}$. (b) Zoom-in of the encoding of a single matrix entry $M_{ij} \in \{ 0,1\}$ as a line segment $\lineseg{M}{(i,j)}{}$ that contains a single monomer $\lineseg{\nb{m}_{i,j}}{}{}$ that encodes the bit~$M_{i,j}$ and line segments of binary monomers, $[\nb{i}]$ and $[\nb{j}]$, that encode $i$ and $j$.}
    \label{matrixAndNubots}
  \end{center}
\end{figure}

\begin{definition}[Matrix element encoding]\label{def:nubotmatrixelement}
An element $M_{i,j}$ of $n \times n$ Boolean matrix $M$ is encoded in nubot monomers as a line of $O(\log n)$ monomers $\lineseg{M}{(i,j)}{} = \lineseg{\nb{m}_{i,j}}{}{}\lineseg{\nb{i}}{}{}\lineseg{\nb{j}}{}{}$  where $\lineseg{\nb{m}_{i,j}}{}{}$ is a nubot monomer that encodes the bit $M_{i,j}$,  and $\lineseg{\nb{i}}{}{}$ and $\lineseg{\nb{j}}{}{}$ are lines of binary monomers of length $O(\log n)$ that encode the numerical values $i$ and $j$, respectively (the  segments $\lineseg{\nb{i}}{}{}$ and~$\lineseg{\nb{j}}{}{}$ are each terminated by delimiter monomers).
\end{definition}

\begin{definition}[Monomer encoded Boolean matrix]\label{def:nubotmatrix}
An $n \times n$ Boolean matrix $M$ is encoded in nubot monomers as a line of $O(n^2 \log n)$ monomers  $\lineseg{M}{}{}  = \lineseg{M{}}{(1,1)}{} \lineseg{M}{(1,2)}{} \ldots \lineseg{M}{(n,n-1)}{} \lineseg{M}{(n,n)}{}$  of all $\lineseg{M}{(i,j)}{}$ for $1 \leq i,j \leq n$, ordered from left to right, first by $i$, then by $j$.
\end{definition}

The main result of this section, Theorem~\ref{thm:boolmatrixmult}, is a fast parallel algorithm for Boolean matrix multiplication.

\begin{problem}[Monomer encoded Boolean matrix multiplication problem]\label{def:matrixmult}
Input: Monomer encoded Boolean matrices $\lineseg{A}{}{}$ and $\lineseg{B}{}{}$, that represent $n \times n$ Boolean matrices~$A,B$. \\
Output: Monomer encoding of the Boolean  matrix  $\lineseg{C}{}{} = \lineseg{AB}{}{}$. 
\end{problem}

\begin{thm}\label{thm:boolmatrixmult}
The monomer encoded Boolean matrix multiplication problem can be solved in $O(\log^3 n)$ expected time, $O(n^4 \log n) \times O(n^2 \log n)$ space and with $O(1)$ monomer states.
\end{thm}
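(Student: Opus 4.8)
The plan is to reduce Boolean matrix multiplication to $n$ independent ``dot-product'' computations, one per output entry $C_{i,j} = \bigvee_{k=1}^n (A_{i,k} \wedge B_{k,j})$, and to carry each of these out in parallel in polylogarithmic expected time using the line-growth, sorting, and synchronization primitives established earlier. First I would describe how, starting from the line encodings $\lineseg{A}{}{}$ and $\lineseg{B}{}{}$ of Definitions~\ref{def:nubotmatrixelement}--\ref{def:nubotmatrix}, to make $n^2$ copies of each input: the standard nubot copying/rotation routines (as used in Section~\ref{sec:sorting}) can replicate a length-$O(n^2\log n)$ line $O(n^2)$ times in expected time $O(\log n)$ by recursive doubling, and this is where the $O(n^4\log n)$ horizontal space bound enters. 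For each target pair $(i,j)$ we want to end up with a working region holding the $O(\log n)$-bit addresses $\nb{i}$, $\nb{j}$ together with the relevant row $A_{i,\ast}$ of $A$ and column $B_{\ast,j}$ of $B$. Both can be obtained by taking one full copy of $\lineseg{A}{}{}$ (resp.\ $\lineseg{B}{}{}$) and deleting, in parallel, every segment $\lineseg{A}{(i',k)}{}{}$ with $i'\neq i$ (resp.\ $\lineseg{B}{(k,j')}{}{}$ with $j'\neq j$) after a local address comparison; each comparison is between two $O(\log n)$-bit strings and, with synchronization to broadcast the bits of $\nb{i}$ and $\nb{j}$, completes in $O(\log n)$ expected time.

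Next I would compute, for each $(i,j)$ and each $k$, the bit $A_{i,k}\wedge B_{k,j}$: align the row and column copies so that the $A_{i,k}$ monomer sits adjacent to the $B_{k,j}$ monomer (this requires sorting both the row segments and the column segments by their $k$-address, which is exactly an instance of the distinct-element sorting problem of Theorem~\ref{thm:Distinct Element Sorting}, costing $O(\log^3 n)$ expected time), then apply a local AND rule. This produces, for each $(i,j)$, a length-$n$ line of bits whose OR we must take. The OR of $n$ bits is done by a balanced binary tree of local OR rules of depth $\lceil\log_2 n\rceil$; implementing each tree level involves halving the line (pairing up adjacent bit monomers, ORing them, deleting one) and synchronizing between levels, so the whole reduction costs $O(\log^2 n)$ expected time by Lemma~\ref{lem:chernoff} or a direct geometric-series argument. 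Finally, the $n^2$ resulting bits $\nb{c}_{i,j}$, each still tagged (or re-tagged, by copying down the preserved $\nb{i}$, $\nb{j}$ from the working region) with its address $\nb{i}\nb{j}$, are collected into a single line and passed once more through the sorting routine of Theorem~\ref{thm:Distinct Element Sorting}, keyed on the concatenated address $(i,j)$, to produce the output line $\lineseg{C}{}{} = \lineseg{C}{(1,1)}{}\ldots\lineseg{C}{(n,n)}{}$ in the order required by Definition~\ref{def:nubotmatrix}; this is the dominant $O(\log^3 n)$ term and also forces the $O(n^2\log n)$ vertical space bound inherited from sorting $n^2$ items of length $O(\log n)$ (where here ``$n$'' in Theorem~\ref{thm:Distinct Element Sorting} is instantiated as $n^2$, contributing the extra logarithmic factors but no change to the polylog time order).

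The main obstacle, as flagged throughout the paper's overview, is geometric: ensuring that the $O(n^4\log n)$-wide parade of copies, working regions, sorting rods, and OR-trees never collide unintentionally, and that the asynchronous movements stay ``staged'' so that, e.g., a copy being rotated down does not crash into a neighbouring working region mid-computation. I would handle this the same way the sorting construction does — reserve disjoint horizontal slabs of width $O(n^2\log n)$ (one per $i$) and within them sub-slabs of width $O(\log n)$ (one per $j$), grow all structures strictly within their assigned slab, and insert a synchronization barrier between the copying phase, the row/column-extraction phase, the pairwise-AND phase, the OR-tree phase, and the final re-sort, each barrier costing only $O(\log n)$. Summing: copying $O(\log n)$, extraction $O(\log n)$, sorting-to-align $O(\log^3 n)$, AND $O(1)$, OR-tree $O(\log^2 n)$, final re-sort $O(\log^3 n)$ — total $O(\log^3 n)$ expected time, $O(1)$ states throughout since every sub-routine invoked (line growth, sorting, synchronization, local Boolean rules) uses a constant number of states.
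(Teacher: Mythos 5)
Your decomposition is genuinely different from the paper's. The paper routes everything through Lemma~\ref{lem:2dfunction} (parallel function evaluation in 2D): a single local function $\mathcal{F}_{\mathrm{AND}}$ is evaluated on all $n^2\times n^2$ pairs of encoded matrix entries, yielding $n^4$ tagged segments of which the $n^3$ with matching inner index survive; these are then sorted by $(i,j,k)$ using Theorem~\ref{thm:Distinct Element Sorting} and reduced by iterated pairwise ORing with shift/lift synchronization, all on a single line. You instead copy the two input lines $n^2$ times, extract row $i$ and column $j$ inside each working region, take local ANDs, OR-reduce within each region, and sort only the $n^2$ tagged result bits at the end. Both routes are dominated by an $O(\log^3 n)$ sorting step, and yours avoids the paper's $n^4$-segment intermediate line. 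Two small corrections: the per-region sort ``to align by $k$'' is unnecessary, since deleting the unwanted segments of the ordered encodings of Definition~\ref{def:nubotmatrix} already leaves the row and the column in $k$-order (only compression is needed); and staged doubling of a length-$O(n^2\log n)$ line into $n^2$ copies costs $O(\log^2 n)$ rather than $O(\log n)$ (harmless for the total).

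The genuine gap is the geometric staging that you yourself flag as the main obstacle, which is exactly what the paper isolates into Lemma~\ref{lem:2dfunction}. As written your allocation does not add up: a sub-slab of width $O(\log n)$ per $j$ cannot hold the extracted row and column (lines of length $\Theta(n\log n)$), let alone the $O(n\log n)\times O(n)$ workspace of a per-region invocation of Theorem~\ref{thm:Distinct Element Sorting}; $n$ such sub-slabs do not fill the $O(n^2\log n)$ slab you reserve per $i$; and $n$ slabs of width $O(n^2\log n)$ give total width $O(n^3\log n)$, which cannot accommodate the $n^2$ copies of length $O(n^2\log n)$ that you correctly identify as the source of the $O(n^4\log n)$ bound. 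In addition, you never say how a given copy or working region acquires its target address $(i,j)$---the broadcast of the bits of $\widetilde{i},\widetilde{j}$ presupposes that these labels already exist, so an explicit labelling step is needed (e.g.\ a padded counter as in the label growth of Section~\ref{sec:sorting})---nor how the $n^2$ scattered result segments are re-collected onto one line before the final sort. All of this is fixable with the paper's tools (regions of width $O(n\log n)$, counter-generated labels, deletion and compression, synchronization barriers between phases), but in the proposal the layout and labelling, i.e.\ the core difficulty of the theorem, are left inconsistent rather than solved.
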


\begin{figure}[th!]
\begin{center}
    \includegraphics[width=\textwidth]{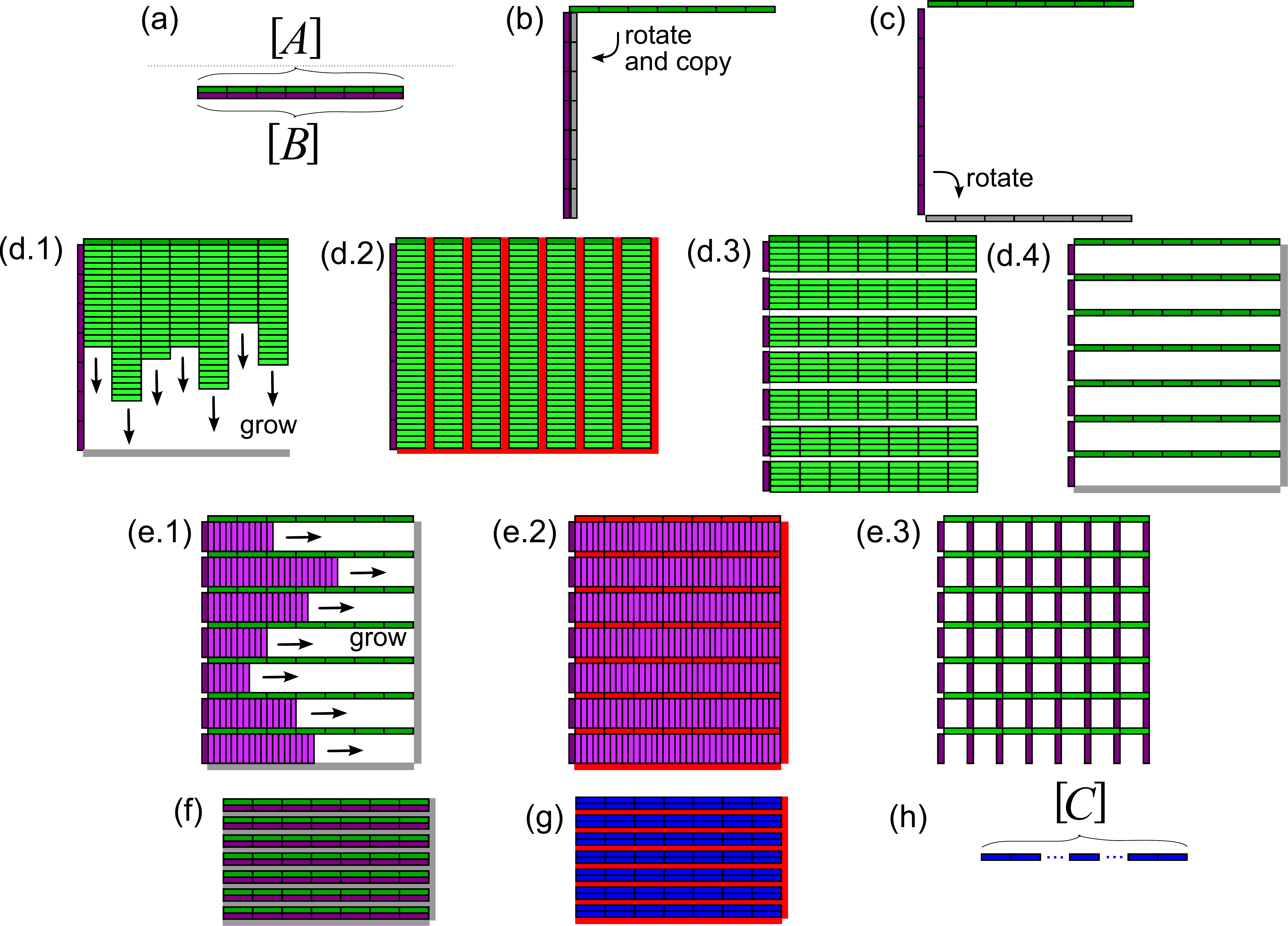} 
\end{center}
  \caption{Parallel function evaluation in 2D, used  in the proof of Lemma~\ref{lem:2dfunction}. (a)~Initial configuration with line $\protect\lineseg{A}{}{}$ in green and line $\protect\lineseg{B}{}{}$ in purple, each has $n$ line segments. We wish to compute $\mathcal{F}$ on all $n^2$ pairs of line segments in $\protect\lineseg{A}{}{}$ and $\protect\lineseg{B}{}{}$. (b)~$\protect\lineseg{B}{}{}$ rotates down to vertical and duplicates.  (c)~The duplicate  of  $\protect\lineseg{B}{}{}$ rotates down to horizontal creating a grey border.  (d.1)~Each segment of $\protect\lineseg{A}{}{}$ duplicates, and the resulting pair of segments duplicate, and so on iteratively. (d.2)~The copied line segments of~$\protect\lineseg{A}{}{}$ reach the bottom grey border line. A vertical gap is inserted between each {\em column} of green line segments, then  synchronization occurs (red). (d.3)~The vertical synchronization causes the system to change connectivity (to be a comb with horizontal teeth),   allowing for segments of~$\protect\lineseg{B}{}{}$ to insert 1-monomer vertical gaps between themselves. (d.4)~Duplicates of $\protect\lineseg{A}{}{}$, not adjacent to a gap delete themselves; monomers rearrange and horizontal synchronization rows are regrown. (e.1)~Segments of $\protect\lineseg{B}{}{}$ duplicate, iteratively. (e.2)~When duplication finishes, synchronizations occur along the copied segments of~$\protect\lineseg{A}{}{}$. (e.3)~Duplicates of segments of $\protect\lineseg{B}{}{}$ not adjacent to the left/right ends of duplicates of segments of $\protect\lineseg{A}{}{}$ delete themselves. (f)~Purple duplicated line segments of $\protect\lineseg{B}{}{}$ rotate up to align parallel with those of $\protect\lineseg{A}{}{}$, the structure shrinks vertically, and a new vertical synchronization row (grey) is formed on the right. (g)~$\mathcal{F}$~is evaluated in parallel on all line segments $\lineseg{A}{i}{}$ and $\lineseg{B}{j}{}$, to give the set of all line segments $ \mathcal{F}(\lineseg{A}{i}{}, \lineseg{B}{j}{} ) $ for all $1 \leq i,j  \leq n $ represented in blue. (h)~The rectangle rearranges into the  line~$\protect\lineseg{C}{}{}$ of length $O(n^2 \log n)$, as  in Figure~\ref{fig:unfolding}.}
  \label{fig:gridmessage}
\end{figure}

\subsection{Parallel function evaluation in 2D}
Before proving Theorem~\ref{thm:boolmatrixmult} we give a useful lemma that formalises a notion of nubots efficiently computing many ($n^2$ here) functions in parallel, where each function acts on two length $k$ inputs.
 Figure~\ref{fig:gridmessage} illustrates the proof. 

\begin{lem}[Parallel function evaluation in 2D]\label{lem:2dfunction}
Let $\mathcal F$ be any function that maps a pair of length $k$ adjacent parallel horizontal monomer lines $\lineseg{X}{}{}, \lineseg{Y}{}{}$  to a length $k$  horizontal monomer line $[Z]$, that is $\mathcal F(\lineseg{X}{}{}, \lineseg{Y}{}{} ) = \lineseg{Z}{}{}$, and moreover~$\mathcal F$ is  nubot computable  in $O(k)$ expected time,  $O(k) \times O(1)$ space, and $O(1)$ states.  Let~$\lineseg{A}{}{} = \lineseg{A}{1}{} \lineseg{A}{2}{} \ldots \lineseg{A}{n}{} $  and $\lineseg{B}{}{} = \lineseg{B}{1}{} \lineseg{B}{2}{} \ldots \lineseg{B}{n}{}$ be monomer lines, each composed of $n$ consecutive length $k$ monomer lines (called ``line segments'').  Then, given $\lineseg{A}{}{}$ and $\lineseg{B}{}{}$ as input,  the line $\lineseg{C}{}{} = \lineseg{C}{1}{} \lineseg{C}{2}{}\ldots \lineseg{C}{n^2}{}$  consisting of all $\lineseg{C}{ i + (j-1)n}{} = \mathcal{F}( \lineseg{A}{i}{}, \lineseg{B}{j}{} ) $ for $1 \leq i,j \leq n$  is computable using nubots  in $O(k + \log^2 n )$ expected time, $O(n^2 k)  \times  O(n k)$  space, and $O(1)$~states.
\end{lem}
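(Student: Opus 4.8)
The plan is to realise, as a sequence of nubot stages, the ``outer product'' layout of the two input lines so that every pair $(\lineseg{A}{i}{}, \lineseg{B}{j}{})$ ends up side-by-side, parallel and horizontal, somewhere on a 2D grid; then invoke the hypothesised $\mathcal F$-gadget in parallel on all $n^2$ such pairs; and finally flatten the resulting $n \times n$ array of length-$k$ output segments into the single line $\lineseg{C}{}{}$ in the prescribed order. The figure already sketches the intended geometry, so the proof amounts to specifying each stage, checking that the movement rules never block, and bookkeeping the time, space and state costs.

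I would proceed in the following stages. (1) \emph{Vertical duplication of $\lineseg{B}{}{}$.} Using the parallel arm-rotation method of~\cite{nubots}, rotate $\lineseg{B}{}{}$ down to vertical, duplicate it, and rotate one copy back down to horizontal to lay a grey ``border'' row; this takes $O(k + \log n)$ expected time via Lemma~\ref{lem:chernoff}. (2) \emph{Exponential copying of $\lineseg{A}{}{}$.} Each segment of $\lineseg{A}{}{}$ repeatedly duplicates itself (each duplication is an $O(k)$-time line-copy-and-rotate), so after $O(\log n)$ rounds there are $n$ copies of $\lineseg{A}{}{}$ stacked to fill a column; by Lemma~\ref{lem:chernoff} the whole cascade costs $O(k + \log^2 n)$ expected time. (3) \emph{Spacing and interleaving.} Insert a vertical gap between each column of green $\lineseg{A}{}{}$-segments, synchronize (the ``comb'' connectivity change in Figure~\ref{fig:gridmessage}(d.3) is what lets disjoint segments shift without blocking), then let the $\lineseg{B}{}{}$-segments insert their own single-monomer vertical gaps and duplicate along rows; delete the now-superfluous copies of $\lineseg{A}{}{}$ and of $\lineseg{B}{}{}$ so that exactly one copy of $\lineseg{A}{i}{}$ sits adjacent to exactly one copy of $\lineseg{B}{j}{}$ in cell $(i,j)$; each deletion/rearrangement round is $O(\log n)$ after synchronization. (4) \emph{Apply $\mathcal F$.} Rotate each $\lineseg{B}{j}{}$-copy up so it is parallel, adjacent and horizontal to the corresponding $\lineseg{A}{i}{}$-copy; this is the precondition for the hypothesised gadget. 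Run $\mathcal F$ simultaneously in every cell: since the gadget uses $O(k)\times O(1)$ space and $O(1)$ states and the cells are disjoint, all $n^2$ evaluations complete in $O(k)$ expected time (no interaction between cells, so no Lemma~\ref{lem:chernoff} blow-up). (5) \emph{Flatten.} Rearrange the $n\times n$ array of blue output segments into the line $\lineseg{C}{}{}$ ordered first by $i$ then by $j$, using the unfolding routine referenced in Figure~\ref{fig:unfolding}; this is an $O(\log n)$-depth rearrangement of $O(n^2k)$ monomers, costing $O(k + \log^2 n)$ expected time.

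Summing the stages gives $O(k + \log^2 n)$ expected time; the widest intermediate structure is the $n$-fold stack of copies of $\lineseg{A}{}{}$ together with the $\lineseg{B}{}{}$ gaps, which fits in $O(nk)$ columns and $O(nk)$ rows, i.e.\ $O(n^2 k)\times O(nk)$ space; and every primitive (line copy, rotation, synchronization, counter-based gap insertion, the $\mathcal F$-gadget) uses $O(1)$ states, so the whole construction does too.

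I expect the main obstacle to be stage (3): arranging the outer-product layout so that, throughout all the duplications, gap insertions and deletions, every movement rule's movable set is well-defined and the rule is never blocked by a stray monomer. The delicate point is the connectivity management — which bonds are rigid versus flexible and when they are cut — because a rigid bond across the wrong pair of segments can make a whole column refuse to shift, while too many cut bonds can let an unintended monomer tag along in a movable set. The synchronization steps (red in Figure~\ref{fig:gridmessage}) are precisely what serialise these connectivity changes so that each shift happens in a "clean" configuration; verifying that the stated synchronization points suffice, and that no two concurrent asynchronous sub-processes in different cells ever collide at a shared boundary, is where the real work of the proof lies. The time bound then follows routinely by applying Lemma~\ref{lem:chernoff} to each exponential cascade and adding the $O(k)$ and $O(\log n)$ synchronization costs.
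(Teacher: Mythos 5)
Your plan follows the paper's proof of Lemma~\ref{lem:2dfunction} essentially stage for stage (rotate and copy $\lineseg{B}{}{}$ to form the border, duplicate-and-prune the segments of $\lineseg{A}{}{}$ and $\lineseg{B}{}{}$ into the $n\times n$ outer-product array with synchronizations serialising the connectivity changes, evaluate $\mathcal F$ in all cells, then unfold via the routine of Figure~\ref{fig:unfolding}), with matching time, space and state accounting. The only quibble is your stage (4) remark that disjointness means ``no Lemma~\ref{lem:chernoff} blow-up'': independence alone gives an expected completion time $O(k+\log n)$ for the maximum over $n^2$ parallel $\mathcal F$-evaluations (the paper invokes Lemma~\ref{lem:chernoff} precisely here to get $O(k)$), but this is harmlessly absorbed into the $O(k+\log^2 n)$ bound.
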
 
\begin{figure}[t]
\begin{center}
    \includegraphics[width=0.9\textwidth]{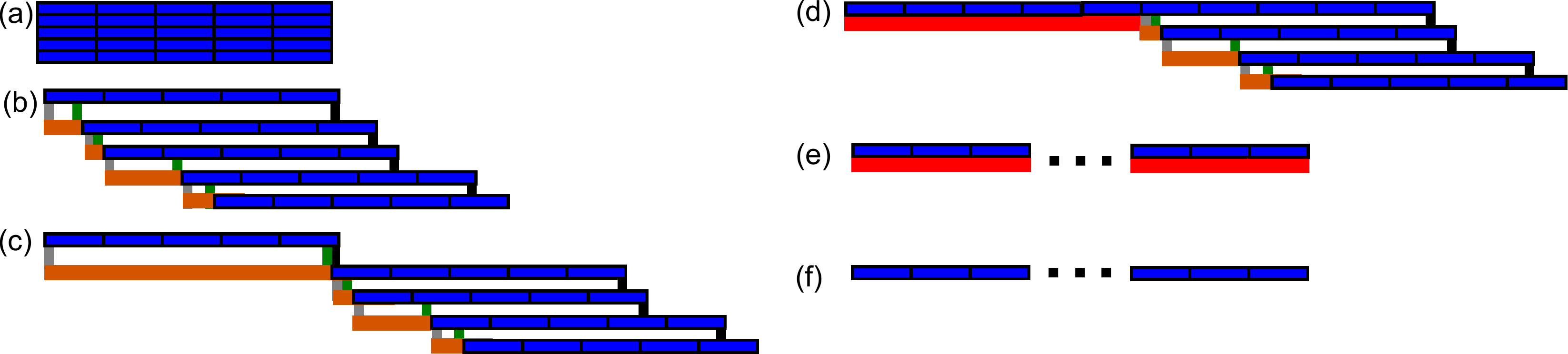} 
    \caption{Unfolding a rectangle of $n^2$ monomer line segments, each of length $O(\log n)$ into a line of length $O(n^2 \log n)$. (a)~Initial configuration. (b) A new ``insertion line'' (shown in orange) grows from the left of each row~$i$. The left end of the orange insertion line at row~$i$ is attached to the left end of  row  $i-1$ (above) by a monomer shown in grey. As the insertion line for row~$i$ grows, it  ``pushes'' row~$i$ to the right, relative to row $i-1$.  A monomer (black) attached to the right end of row $i-1$ and a monomer (green) attached to the right end of the orange insertion line $i$ below are used as ``hooks'' so that the insertion line is stopped from growing beyond length $O(n\log n)$. (c) The green monomer of row 2 and the black monomer of row~1 become \lq\lq{}hooked\rq\rq{}. (d)~Row 2 moves up to be horizontally aligned with row~1. The grey, green, and black monomers delete. When the orange insertion line in the second row is placed adjacent to row~1, it becomes a red  synchronization row. All rows continue  this process independently and in parallel. When all are done the insertion line becomes a synchronization row.  (e)~Ready to synchronize. (f)~Final configuration.}
    \label{fig:unfolding}
\end{center}
\end{figure}

\begin{proof}[Proof of Lemma~\ref{lem:2dfunction}]
Figure~\ref{fig:gridmessage} gives an overview of the construction.  From an initial configuration with~$\lineseg{A}{}{}$ and~$\lineseg{B}{}{}$ adjacent as in Figure~\ref{fig:gridmessage}(a),  $\lineseg{B}{}{}$ rotates down to vertical (Figure~\ref{fig:gridmessage}(b)).  $\lineseg{B}{}{}$ is copied from the grey line which rotates down to horizontal as shown in Figure~\ref{fig:gridmessage}(c). 
In Figure~\ref{fig:gridmessage}(d.1), we duplicate each line segment  $\lineseg{A}{i}{}$, for $1 \leq i \leq n $, $n$ times, down to the grey vertical line, which acts as a barrier to stop the duplication. A one monomer horizontal gap is inserted between adjacent columns of green columns (of line segments), which triggers a vertical synchronization, shown as a vertical red line in Figure~\ref{fig:gridmessage}(d.2) of each  completed green column. Next, monomer-to-monomer messages are passed, horizontally from right to left, within each green  line segment  to signify that monomers  should change from being ``vertically connected'' to being ``horizontally connected''. After this, the vertical red synchronization lines carry out another synchronization and then delete themselves in a way that keeps all green monomers horizontally connected. In Figure~\ref{fig:gridmessage}(d.3), each purple $\lineseg{B}{j}{}$ inserts a 1-monomer vertical gap between between it and its neighbour $\lineseg{B}{j+1}{}$. After all gaps insert, the purple vertical line  synchronizes, and then $n$ horizontal synchronizations happen which tell excess duplicates of $\lineseg{A}{i}{}$ to delete themselves to give  the configuration in Figure~\ref{fig:gridmessage}(d.4).

Next, a duplication and deletion process occurs with $\lineseg{B}{j}{}$ line segments as shown in Figure~\ref{fig:gridmessage}(e) (similar to what we did before, but now horizontally rather than vertically). The $\lineseg{B}{j}{}$'s duplicate until they hit the vertical grey barrier on the right, at which point the system synchronizes. After this occurs, excess $\lineseg{B}{j}{}$ segments are deleted (using direct monomer-to-monomer message transfer as before). When this process is complete, we are at  Figure~\ref{fig:gridmessage}(e.3). 

Next,  the duplicates of each $\lineseg{B}{j}{}$ rotate up to horizontal as shown, and the leftmost copy of~$\lineseg{B}{}{}$ deletes itself in a way that vertically ``shrinks'' the assembly to get Figure~\ref{fig:gridmessage}(f). During this process we make $n$ grey synchronization rows, also shown in Figure~\ref{fig:gridmessage}(f). 
From Figure~\ref{fig:gridmessage}(f) to Figure~\ref{fig:gridmessage}(g), $\mathcal{F} (\lineseg{A}{i}{}, \lineseg{B}{j}{}) $ is computed on each of these  $n^2$ line segments (independently and in parallel), and by the lemma hypotheses this can be done in the allotted space. The horizontal red lines  synchronize, and then the vertical red line synchronizes. After this occurs, we can delete the grey synchronization rows and unfold the result into a line, which is of length $O(n^2 \log n)$, to get the final configuration in  Figure~\ref{fig:gridmessage}(h), using the technique shown in Figure~\ref{fig:unfolding}. 

\paragraph{Space, state and time analysis.}
By stepping through the construction (and Figure~\ref{fig:gridmessage}), it is straightforward to check that the entire construction is contained within space  $O(n^2 k) \times O(n k)$, and uses~$O(1)$ states.

For the time analysis, we first observe that rotation, and copying, of a length~$\ell$ line can each be done in $O(\log \ell)$ expected time via a straightforward analysis~\cite{nubots}. Steps  (b), and (c) of Figure~\ref{fig:gridmessage} involve  rotations and copying of lines of length $O(n \log n)$: this  completes in expected time $O(\log n)$.  The duplication processes of green and purple segments in Figures~\ref{fig:gridmessage}(d) and~(e)  take $O(\log^2 n)$ expected time. Each application of $\mathcal{F}$ takes expected time $O(k)$, and we apply it independently in parallel $n^2$ times, hence via Lemma~\ref{lem:chernoff},  all complete in merely $O(k)$ expected time. There are a number of other places where $\leq n^2$ independent processes, each with expected time $O(\log n)$, take place (deletions in Figures~\ref{fig:gridmessage}(d.4) and (e.3), and rotations in (f)), and by Lemma~\ref{lem:chernoff}, they take expected time $O(\log n)$. In each of Figures~\ref{fig:gridmessage}(d.2), (e.2) and (g) there are  $n$ lines, each of length $O(n \log n)$ that need to be synchronized. For example, in Figure~\ref{fig:gridmessage}(d.2), synchronization for each red vertical single line takes expected time $O(\log n)$, and since we must wait until all $n$ vertical lines are synchronized (independently), and only then synchronize the horizontal line, this takes expected time $O(\log^2 n)$. Finally, the rearrangement  in Figures~\ref{fig:gridmessage}(g) to (h) (given in detail in Figure~\ref{fig:unfolding}) takes expected time $O(\log^2 n)$: each insertion line must grow $O(n \log n)$ monomers before a level is moved up. There are $n$ of them that work independently, so Lemma~\ref{lem:chernoff} gives an expected time to finish of~$O(\log^2 n)$. Besides computing $\mathcal{F}$, the slowest parts of the construction run in expected time $O(\log^2 n)$, and there are at most a constant number of these parts, so the entire construction finishes in expected time~$O(k + \log^2 n)$.  This concludes the proof of Lemma~\ref{lem:2dfunction}.
 \end{proof}

\subsection{Proof of Theorem~\ref{thm:boolmatrixmult}: fast Boolean matrix multiplication}

\begin{proof}[Proof of Theorem~\ref{thm:boolmatrixmult}]
The multiplication $C = AB$ of two $n \times n$ Boolean matrices is defined as 
$ C_{i,j} = \bigvee_{k = 1}^{n} \left(A_{i,k} \wedge B_{k, j}\right)$. To calculate $A_{i,k} \wedge B_{k, j}$,  for each $i,j,k$, we begin by defining the function $\mathcal{F}_{\mathrm{AND}}$ which acts on two encoded matrix elements $\lineseg{A}{(i,k_1)}{}$ and $\lineseg{B}{(k_2,j)}{}$ as follows. 
\[
 \mathcal{F}_{\mathrm{AND}} (\lineseg{A}{(i,k_1)}{}, \lineseg{B}{(k_2,j)}{}) =
  \begin{cases}
     \lineseg{\emptyset}{}{}\lineseg{\nb i}{}{}\lineseg{\nb j}{}{}\lineseg{\nb k_1}{}{}\lineseg{\nb k_2}{}{}& k_1 \neq k_2 \\
     \lineseg{\tilde c_{i,j,k_1}}{}{}\lineseg{\nb i}{}{}\lineseg{\nb j}{}{}\lineseg{\nb k_1}{}{}\lineseg{\nb k_2}{}{} & \text{if } k_1 = k_2
  \end{cases}
\]
where $\emptyset$ is a special monomer denoting ``no useful data here'',   $\lineseg{\tilde c_{i,j,k_1}}{}{}$ is the monomer encoding for the (useful) bit $A_{i,k_1} \wedge B_{k_1,j}$ when $k_1=k_2$, and as usual \lineseg{\nb i}{}{}, \lineseg{\nb j}{}{}, \lineseg{\nb k_1}{}{}, \lineseg{\nb k_2}{}{} denote the binary monomer line segments encoding of $i,j,k_1,k_2 \leq n$.

We now apply Lemma~\ref{lem:2dfunction} to $\lineseg{A}{}{}$ and $\lineseg{B}{}{}$ setting $\mathcal F$ to $\mathcal F_{\mathrm{AND}}$.  This gives a line of monomers with $n^4$ segments, each of length $O(\log n)$, and $n^3$ of which encode useful data. The remainder are the $n^4 - n^3$ segments $\ell$ for which $k_1 \neq k_2$. 
The entire line synchronizes and begins the process of deleting the  useless line  segments $\ell$  as follows.  Each $\ell$ encodes an $O(\log n)$ bit number $p$ (as the concatenation of the bit strings for $i,j,k_1,k_2)$. The digits of  $p$ are used to organise the deletion of the segments. If the LSB of $p$ in segment $\ell$ is 1, then $\ell$ deletes itself. Deletion of a segment works as follows:  the rightmost monomer $r$ of $\ell$ walks on top of its $\ell$ segment, walking left, sequentially deleting monomers until all of the $\ell$ segment is deleted. The monomer $r$ is now adjacent to a new segment $\ell' $ (to the left of the former $\ell$) which causes  $\ell' $ to ``delete'' its LSB (sets its monomer state to $\emptyset$). The entire line grows a shift synchronization row and the process iterates. To stop the iteration: if an $r$ monomer completes its walk left and meets a non-$\ell$ segment (i.e. it meets a useful segment) it initiates growth of a lift synchronization row, when all lift synchronization rows form, a lift synchronization occurs signalling that the entire deletion process has finished.  This gives a line of monomer segments of the form $ \lineseg{\tilde c_{i,j,k_1}}{}{}\lineseg{\nb i}{}{}\lineseg{\nb j}{}{}\lineseg{\nb k_1}{}{}\lineseg{\nb k_2}{}{} $  with $k_1 = k_2$. Next  each of the redundant $\lineseg{\nb k_2}{}{}$ segments deletes itself, then the line synchronizes. This gives  a line  of $n^3$ segments  $\lineseg{C}{(i,j,k)}{}$ for $1 \leq i,j,k \leq n $. 

To calculate the elements $C_{i,j} = \bigvee_{k = 1}^{n} C_{i,j,k}$, for $1\leq i,j\leq n$, we begin by sorting the line segments first by~$i$, then by $j$,  then by $k$.  From this sorted line of elements, for each $i,j $ we calculate $\lineseg{C}{(i,j)}{}$, the encoding of the matrix element $C_{i,j}$, in a two-step  process as follows. First, for all~$i,j,k$ where  $k \neq 1$, the $\lineseg{\tilde{i}}{}{}$ and $\lineseg{\tilde{j}}{}{}$ line segments delete themselves from each of the  $\lineseg{C}{(i,j,k)}{}$  line segments, and the entire line synchronizes when done. The  $n$ segments~$\lineseg{\tilde k}{}{}$, for  all $k\leq n$, then rotate perpendicular to their  original orientation, and translate (horizontally ``shrink'') so that the $n$  monomers of the form  $\lineseg{\tilde c_{i,j,k}}{}{}$  lie horizontally adjacent to each other. At this point we have a structure consisting of vertical columns of $\lineseg{\tilde c_{i,j,k}}{}{}$ and $\lineseg{\tilde {k}}{}{}$, ordered horizontally by $i$, then by $j$ then by $k$, for $1 \leq i,j,k \leq n$ (and also with those $\lineseg{\tilde k}{}{}$ that encode~1, which still have their horizontal~$\lineseg{\tilde i}{}{}$ and $\lineseg{\tilde j}{}{}$ segments). The set of monomer bitstrings $\{ \lineseg{\tilde k}{}{} \, | \, k \leq n \} $ are used to organise iterated pairwise ORing. 
For each $\lineseg{\tilde k}{}{}$ with LSB 1 delete the LSB from the~$\lineseg{\tilde k}{}{}$ segment and OR its $\lineseg{\tilde c_{i,j,k}}{}{}$ bit with its neighbour $\lineseg{\tilde c_{i,j,k'}}{}{}$ to the left, with the result bit being stored in the $\lineseg{\tilde c_{i,j,k'}}{}{}$ monomer to the left. Then a shift synchronization occurs. This is iterated $O(\log n)$ times until all $n$ bits have been ORed and finished with a lift synchronization. 
We are left with a line of segments of the form $\lineseg{\tilde C}{(i,j)}{} = \lineseg{\tilde c_{i,j}}{}{}\lineseg{\tilde i}{}{}\lineseg{\tilde j}{}{}$, for $1 \leq i,j \leq n$, ordered first by $i$ and then by $j$. This is exactly the monomer representation of the matrix $C = AB$ that we desire. 

\paragraph{Time, space and state analysis.}
The time of Boolean matrix multiplication is dominated by  Lemma~\ref{lem:2dfunction} and the sorting algorithm. Since the expected time of $\mathcal{F_{\mathrm{AND}}}$ is $O(\log n)$, then $k = O(\log n)$ in the hypothesis of  Lemma~\ref{lem:2dfunction}, giving  $O(\log^2 n$) as the  expected time of the application of  Lemma~\ref{lem:2dfunction}. There are~$O(n^3)$ monomer segments (each of length $O(\log n)$) to be sorted when calculating the ORs, so the expected time for sorting is $O(\log^3 n)$. Hence, the entire matrix multiplication takes expected time~$O(\log ^3 n)$.

The most space-consuming aspects of Boolean matrix multiplication are  Lemma~\ref{lem:2dfunction} and the sorting algorithm. Lemma~\ref{lem:2dfunction} takes space $ O(m^2 k) \times O(m k)$ for two lines each containing  $m$ line segments each of which is of length $k$. For matrix multiplication, we are starting from two encoded $n \times n$ matrices, each of which has $n^2$ elements. Setting $m = n^2$ and $k = O(\log n)$, the total space for Boolean matrix multiplication is thus   $O(n^4\log n) \times O(n^2\log n)$, and since the   sorting algorithm takes less space than that, we are done. 

A careful analysis of the algorithm shows that the  number of monomer states is $O(1)$. 
\end{proof}

\section{Boolean circuit simulation}\label{sec:circuitsim}

Our main result, Theorem~\ref{thm:nubotsSolveNC} is a restatement of the following theorem.  
Definition~\ref{def:nubotdecide} defines what it means to decide  languages with the nubot model.

\begin{thm}
\label{thm:circuitsim} 
Let  $L \in \mathrm{NC}^j$ be a  language decided by a logspace-uniform Boolean circuit family $\mathcal{C}$ of circuits that have depth $O(\log^j n )$, for some $j \geq 1$,  size $O(n^k)$ and input length $|x|=n$, and let~$\lineseg{\nb{x}}{}{}$ denote the representation of  $x \in \{ 0,1\}^*$ as a line of binary monomers. 
Then there is a set of nubot rules $\mathcal{N}_L$ such that, 
for all $x \in \{ 0,1\}^*$, starting from an initial configuration containing only  $ \lineseg{\nb{x}}{}{}$, $\mathcal{N}_L$ decides whether $x \in L$ and uses space $n^{O(1)} \times n^{O(1)}$,  monomer states $O(1)$, and expected time  $O(\log^{j+3} n)$. \end{thm}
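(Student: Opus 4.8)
The plan is to combine the three main ingredients already established—Boolean matrix multiplication (Theorem~\ref{thm:boolmatrixmult}), parallel sorting (Theorem~\ref{thm:Distinct Element Sorting}), and fast line growth (Theorem~\ref{thm:binary line growth})—into a two-phase construction: first \emph{generate} the circuit $c_n$ from the input $\lineseg{\nb{x}}{}{}$, then \emph{simulate} it on $x$. For circuit generation, I would first copy $\lineseg{\nb{x}}{}{}$ to a line encoding $1^n$ in binary (length $O(\log n)$), then grow it out to a unary line of length $n$ using Theorem~\ref{thm:binary line growth} (expected time $O(\log^2 n)$); this is the input to a nubot simulation of the logspace machine $\mathcal{M}_L$ whose description is hardcoded into $\mathcal{N}_L$. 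To simulate a logspace machine in polylogarithmic time I would use the standard trick: build the transition matrix $T$ on the graph of all $n^{O(1)}$ machine configurations, and compute reachability from the start configuration by repeated squaring, $T, T^2, T^4, \ldots, T^{2^{O(\log n)}}$, which is $O(\log n)$ Boolean matrix multiplications, each costing $O(\log^3 n)$ expected time by Theorem~\ref{thm:boolmatrixmult}, for $O(\log^4 n)$ total. The twist is that $\mathcal{M}_L$ computes a \emph{function} (the circuit description), not a decision, so during matrix multiplication I track, alongside each matrix entry, the output-tape symbols written along the corresponding path; at the end I discard the symbols contributed by unreachable configurations and use the sorting routine (Theorem~\ref{thm:Distinct Element Sorting}) to reassemble the surviving output symbols into the correct left-to-right order, yielding a line of monomers encoding $c_n$. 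These monomers then rearrange in the plane into the layered layout of Figure~\ref{fig:boolAndNubot}, laying out each layer of gates on its own horizontal strip with the inter-layer ``wire address'' regions encoded as $O(\log n)$-bit binary strips.

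For circuit simulation, I would move $\lineseg{\nb{x}}{}{}$ to the bottom row so each layer-$1$ gate can read its one or two input bits, have each gate compute its Boolean function (a constant-size local operation), and then synchronize layer $1$ via Lemma~\ref{lem:sync communication} in $O(\log n)$ expected time. To pass results from layer $i$ to layer $i+1$ through arbitrary (non-planar) wiring, I attach each result bit to the blue strip encoding its destination gate number and sort the layer's outputs by that ``to'' address using Theorem~\ref{thm:Distinct Element Sorting} in $O(\log^3 n)$ expected time; after padding with empty space between destinations (using the counters from the sorting/line-growth constructions) the results are pushed up one strip, layer $i$ deletes itself, and the process repeats. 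There are $O(\log^j n)$ layers, each costing $O(\log^3 n)$ expected time, dominated by the sort, giving $O(\log^{j+3} n)$ for the simulation phase; combined with the $O(\log^4 n)$ generation phase (note $j \geq 1$ so $j+3 \geq 4$), the total expected time is $O(\log^{j+3} n)$. When the output gate fires it deletes itself, leaving a single $\mathrm{accept}$/$\mathrm{reject}$ monomer; since no rule is then applicable, this satisfies Definition~\ref{def:nubotdecide}, and the state count is $O(1)$ because every sub-construction (line growth, sorting, matrix multiplication, synchronization, counters) uses $O(1)$ states and the hardcoded description of $\mathcal{M}_L$ together with the finitely many gate-type and bookkeeping states is of constant size independent of $n$. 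The space is polynomial: the configuration graph has $n^{O(1)}$ vertices so the matrices are $n^{O(1)} \times n^{O(1)}$ in line-encoded length, Theorem~\ref{thm:boolmatrixmult} blows this up polynomially, and the laid-out circuit of size $O(n^k)$ with $O(\log n)$-bit labels fits in $n^{O(1)} \times n^{O(1)}$ area.

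The main obstacle—and the part requiring the most care—is orchestrating all of this \emph{asynchronously} without monomers colliding and while guaranteeing that the phases are correctly staged. Concretely: (i) ensuring the ``track the output tape during matrix multiplication'' modification genuinely works, i.e. that each $T^{2^t}$ entry carries the concatenation of output symbols along \emph{some} valid path and that the cleanup step correctly isolates the reachable-configuration contributions before sorting; (ii) guaranteeing that the repeated-squaring loop, the layer-by-layer simulation, and the within-layer sort/pad/push steps are each bracketed by synchronization (Lemma~\ref{lem:sync communication}) so that a fast asynchronous parallel step never overruns the next sequential step; and (iii) checking that every geometric rearrangement—moving $\lineseg{\nb{x}}{}{}$ to the circuit base, unfolding matrices to and from the 2D grid, rotating sorted lists, padding with counters—respects the blocking rules of the movement primitive, so that each ``move'' has a non-empty movable set. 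I expect the bookkeeping for (i) and the collision-avoidance arguments for (iii) to dominate the write-up, with the time and state bounds then following routinely by the arguments sketched above (using Lemma~\ref{lem:chernoff} wherever many independent $O(\log n)$-time processes run in parallel).
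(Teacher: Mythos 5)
Your proposal follows essentially the same route as the paper's proof: generate the circuit by simulating the logspace uniformity machine through iterated squaring of its configuration matrix (with output-tape bookkeeping and a final parallel sort to reassemble the circuit description), then evaluate the encoded circuit layer by layer, using the sorting routine to resolve non-planar wire crossings, with the same $O(\log^{j+3} n)$ time, $O(1)$ state, and polynomial space accounting and the same synchronization bracketing between stages. The one detail where the paper's mechanism is cleaner than your sketch of obstacle (i): it never accumulates output strings inside matrix entries (which would violate the $O(\log n)$ segment-length bounds needed for the matrix-multiplication and parallel-evaluation lemmas); instead each Turing-machine configuration itself carries its output-head position and single written symbol, the unique start-to-halt path is recovered by intersecting the start row with the halt column of the path-complete matrix, and the surviving configurations are sorted by output-tape position to yield the output string.
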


The proof is contained in Section~\ref{sec:CirGenSim}, where we give a nubot algorithm that given $x$, quickly generates a Boolean circuit, and then simulates that circuit on input $x$.  Before that, in Section~\ref{sec:TMsim}, we present a nubot algorithm that simulates function-computing logspace  Turing machines in polylogarithmic expected time. This fast Turing machine simulation will be used in the circuit generation part of  Section~\ref{sec:CirGenSim}.

\subsection{Fast parallel simulation of space bounded Turing machines}\label{sec:TMsim}
Here, we give a polylogarithmic expected time simulation of deterministic logspace Turing machines that compute functions with domain and range $\{ 0,1\}^*$.

\begin{lem}
\label{lem:LTMtonubots}
Let $\mathcal{M}$ be a deterministic Turing machine that  on input  $x \in \{ 0,1\}^*$, of length $|x| = n$, generates an output $y \in \{ 0,1\}^*$, in $O(\log n)$ workspace and time~$t$.   There is a set of  nubot rules~$\mathcal{N}_{\mathcal{M}}$ such that for all $x \in \{ 0,1\}^*$, starting with the  initial configuration containing only the line~$\lineseg{\nb{x}}{}{}$ (that represents~$x$), $\mathcal{N}_{\mathcal{M}}$ computes~$\lineseg{\nb{y}}{}{}$  (the representation of  $y$) using  $O(1)$ states, 
$n^{O(1)} \times n^{O(1)}$ 
 space, and $O(\log^4  n)$ expected time. 
\end{lem}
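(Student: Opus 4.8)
The standard way to simulate a space-bounded Turing machine fast in parallel is to reduce the computation to reachability in the configuration graph, and reachability to iterated Boolean matrix multiplication. I would follow exactly that route, using Theorem~\ref{thm:boolmatrixmult} as the core engine. First I would observe that since $\mathcal{M}$ uses $O(\log n)$ workspace, the number of possible machine configurations (state, head positions on input and worktape, worktape contents) is polynomial in $n$, say $N = n^{O(1)}$. From $\lineseg{\nb{x}}{}{}$ the nubot system first grows a representation of the $N \times N$ Boolean transition matrix $T$, where $T_{c,c'} = 1$ iff configuration $c$ yields $c'$ in one step of $\mathcal{M}$ on input $x$. Building $T$ requires, for each pair $(c,c')$, a local check that depends only on $O(\log n)$ bits of each configuration plus the relevant input bit $x_i$ (where $i$ is the input-head position encoded in $c$); this ``read the right input bit'' step is the one genuinely new wrinkle over a pure decision-problem simulation, and I would implement it by a parallel routing/sorting step (using Theorem~\ref{thm:Distinct Element Sorting}) that delivers the correct $x_i$ monomer to each configuration-pair segment, all in $O(\log^{O(1)} n)$ expected time. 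Once $T$ is encoded as a monomer line in the format of Definition~\ref{def:nubotmatrix}, I would square it $\lceil \log_2 N \rceil = O(\log n)$ times using Theorem~\ref{thm:boolmatrixmult}, obtaining the reachability matrix $T^* = \bigvee_k T^k$; this costs $O(\log n)$ squarings $\times\ O(\log^3 N) = O(\log^4 n)$ expected time, which is exactly the bound claimed.

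**Recovering the output.** Deciding a language would now be trivial — just read off whether the accept configuration is reachable from the start configuration — but we must compute the function $y$. The construction overview already flags the trick: modify matrix multiplication so that it propagates, alongside each reachable configuration, the symbols written to the output tape. Concretely, I would augment the matrix entries so that an entry records not just reachability but (a tag for) the contents of the output tape produced along the computation path; since $\mathcal{M}$ is deterministic there is a unique such path, so there is no ambiguity, and the output tape, being write-only and of length $t = n^{O(1)}$, can be recorded in polynomial space. Running the matrix powering with this augmentation yields, for the (start, halt) entry, a monomer encoding of the sequence of output symbols together with their positions on the output tape. I would then invoke the sorting algorithm of Theorem~\ref{thm:Distinct Element Sorting} once more to reassemble these position-tagged output-symbol monomers into the correct left-to-right order, strip the position tags, and rotate the result into the horizontal line $\lineseg{\nb{y}}{}{}$ demanded by the lemma statement. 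All of these post-processing steps are dominated by a sort ($O(\log^3 n)$) and a constant number of synchronizations and rotations ($O(\log^2 n)$ each), so they do not increase the $O(\log^4 n)$ budget.

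**Time, space, states.** The time accounting is: building $T$ is $O(\log^3 n)$ (one sort plus parallel local checks plus the 2D-layout machinery of Lemma~\ref{lem:2dfunction}); $O(\log n)$ rounds of Boolean matrix multiplication at $O(\log^3 n)$ each gives $O(\log^4 n)$; output reassembly is $O(\log^3 n)$; total $O(\log^4 n)$ expected time. Space is dominated by the matrix multiplications: with $m = N = n^{O(1)}$ configuration indices and $k = O(\log n)$ bits per index, Theorem~\ref{thm:boolmatrixmult} gives $n^{O(1)} \times n^{O(1)}$ space, and the intermediate powers $T^k$ have the same polynomial dimensions. The number of monomer states is $O(1)$ throughout, since every subroutine invoked — line growth (Theorem~\ref{thm:binary line growth}), synchronization (Lemma~\ref{lem:sync communication}), sorting (Theorem~\ref{thm:Distinct Element Sorting}), matrix multiplication and Lemma~\ref{lem:2dfunction} — uses only constantly many states, and the finite control of $\mathcal{M}$ itself is hard-coded into the constant-size rule set $\mathcal{N}_{\mathcal{M}}$.

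**Main obstacle.** The hard part is not the asymptotics but the geometric bookkeeping: threading the output-tape contents through the iterated matrix multiplication without blowing up the space, and ensuring that the augmented ``multiply'' operation (which must OR together reachability bits while simultaneously reconciling output-tape tags along paths) is still an instance of the parallel-function-evaluation framework of Lemma~\ref{lem:2dfunction}, i.e. that it acts on a pair of length-$O(\log^{O(1)} n)$ segments and produces such a segment in $O(\log^{O(1)} n)$ expected time and $O(1)$ states. I would handle this by observing that since the computation is deterministic, at most one of the two paths being combined in any given matrix entry is ``live'', so the reconciliation is really just a selection, not a merge; this keeps the augmented multiply within the Lemma~\ref{lem:2dfunction} interface and keeps the segment lengths polylogarithmic. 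The remaining danger — monomers colliding during the many rotations, duplications and deletions on a polynomial-size grid — is exactly the kind of issue already addressed in the proofs of Theorem~\ref{thm:boolmatrixmult} and Theorem~\ref{thm:Distinct Element Sorting}, so I would reuse those collision-free layouts verbatim.
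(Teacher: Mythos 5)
Your first half matches the paper: encode all $n^{O(1)}$ configurations of $\mathcal{M}$, build the one-step Boolean configuration matrix via Lemma~\ref{lem:2dfunction} (the paper hands a copy of the encoded input to the pairwise transition-checking function rather than routing individual input bits by a sort, but that is a cosmetic difference), and square it $O(\log n)$ times with Theorem~\ref{thm:boolmatrixmult}. The gap is in your output-recovery step. You propose to augment each matrix entry with (a tag for) the output written along the path it certifies and to let this data ride through the iterated squaring. Determinism does give uniqueness of the live path, so there is no ambiguity about \emph{which} half-paths get combined; but it does nothing to bound \emph{how much} data an entry must carry. The entry for a pair $(c,c')$ would have to record the output written along a path of up to $t = n^{O(1)}$ steps, i.e.\ up to polynomially many symbols, and under squaring these records concatenate, so entry lengths roughly double per round and become polynomial. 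That breaks the interface you are invoking: Lemma~\ref{lem:2dfunction} and Theorem~\ref{thm:boolmatrixmult} give $O(\log^3 n)$ per multiplication only when segments have length $O(\log n)$ and the per-pair function runs in time linear in that length; with polynomially long segments a single squaring already costs polynomial expected time (and the space inflates accordingly), so the claimed $O(\log^4 n)$ bound does not follow. Your ``selection, not a merge'' observation answers the wrong obstacle.

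The paper keeps every matrix entry purely Boolean and instead stores the output information inside the configurations themselves: each configuration records, besides the usual data, the output-tape head position and the single symbol written at that step, only $O(\log n)$ extra bits. After computing the path-complete matrix $M'$ by iterated squaring, it extracts row $c_{\mathrm{start}}$ and column $c_{\mathrm{halt}}$ and keeps exactly those configurations $c_i$ with $M'_{c_{\mathrm{start}},c_i}=1$ and $M'_{c_i,c_{\mathrm{halt}}}=1$; by determinism these are precisely the configurations on the unique accepting path. The output-writing ones among them are then sorted by output-tape position (Theorem~\ref{thm:Distinct Element Sorting}), using that the output head is one-way, and the written symbols are compressed into the line encoding $y$. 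This distributes the polynomially long output across polynomially many $O(\log n)$-size segments, so no object ever exceeds polylogarithmic segment length and the total time stays dominated by the $O(\log n)$ Boolean squarings, as you intended. (Note also that in your scheme the final sort would be redundant, since concatenation along the path already orders the symbols; the sort is needed precisely because the paper recovers the path configurations in index order rather than execution order.) If you replace your augmented-entry bookkeeping with this per-configuration bookkeeping plus the row/column intersection, the rest of your argument goes through.
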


Before giving the  proof of Lemma~\ref{lem:LTMtonubots} we state some assumptions  about  $\mathcal{M}$:
\begin{enumerate}
  \item $\mathcal{M}$ follows the standard conventions for logspace Turing Machines: there are 3 tapes: a read-only input tape, a $O(\log n)$ space bounded  work tape, and a write-only output tape whose  head  moves in one direction only. 
  \item $\mathcal{M}$ uses the alphabet $\{ 0,1 \}$ on all 3 tapes (the input is delimited with the symbol $\#$). 
  \item A configuration consists of the input tape head position and read symbol, worktape contents and head position, worktape read symbol, and machine state, and (unusually\footnote{Our configurations include an output tape write symbol and an output tape head position which is not standard practice~\cite{Papadimitriou94}, but will be useful in our construction.}) the output tape head position and write symbol. 
  \item  $\mathcal{M}$ always ends its computation in a halting, accept state. 
\end{enumerate}

    There are $n$ possible positions for the head on the input tape and $ O(\log n)$ head positions  on the work tape. We note that each configuration of $\mathcal{M}$ can be written as a string over $\{0,1 \}^*$ of length $O(\log  n)$. This follows from the fact that in a given configuration $O(\log n)$ bits describe the position of the input and output tape heads, $O(\log \log n)$ bits describe the position of the worktape head,
    $O(\log n)$ bits describe the contents of the work tape, and $O(1)$ symbols describe the read and write symbols on the various tapes,    and the machine state. Thus,  on length-$n$ input, $\mathcal{M}$  visits at most  $2^{O(\log n)}  = n^{O(1)}$  configurations before halting, or looping forever, in other words   $t = n^{O(1)}$.
We next define the \emph{configuration matrix} of a space-bounded Turing machine. 

\begin{definition}[Configuration matrix]\label{def:configmatrix}
Let $\mathcal{M}$ be a deterministic  Turing machine with space bound~$s(n)$. Consider the set  of all $k = n^{O(1)}$ possible configurations on a length $n$ input (we include all syntactically valid configurations for a worktape with $s(n)$ tape cells, even though on a given input many will be unreachable). We define  $M$ to be the $k \times k$ Boolean matrix where for $1\leq i, j \leq k$,  $m_{i,j}=1$  if and only if there exists a one-step transition from configuration $c_i$ to configuration $c_j$, and $m_{i,j}=0$ otherwise.
\end{definition} 

\begin{proof}[Proof of Lemma~\ref{lem:LTMtonubots}]
A logarithmic space-bounded Turing machine $\mathcal{M}$ can be simulated efficiently  in parallel (in polynomial time, using polynomial processors/resources) in a variety of parallel models by  iterated squaring of~$\mathcal{M}$'s Boolean configuration matrix $M$~\cite{Papadimitriou94}. Specifically, we can determine whether there exists a sequence of one-step transitions from any configuration~$c_i$ to any $c_j$ by beginning with matrix $M$, computing  $M := M^2 + M$, and iterating this procedure $O(\log n)$ times. A path between the two configurations exists only if entry $m\rq{}_{i,j} =1$ in the resulting matrix~$M\rq{}$. We call  matrix $M'$ the \emph{path-complete matrix}.
Since $\mathcal{M}$ is deterministic and always accepts ($\mathcal{M}$ is total: for any input $1^n, n\in \mathbb{N}$ it outputs a circuit $c_n$), there is exactly one  path, through $\mathcal{M}$'s configuration graph, that leads from the start configuration to the halt (accept) configuration. The technique of iterated squaring is sufficient for simulating  Turing machines that decide languages, but here we want to simulate a function-computing  machine.  We  do this by  modifying the iterated squaring technique: our configurations contain (extra)  information about what is written to the output tape, we  appropriately  extract this information during our simulation of $\mathcal{M}$ by iterated squaring. The remainder of the proof describes how we do all of this in the nubot model.

We generate all possible configurations of Turing machine $\mathcal{M}$, in parallel. First, we build a counter that counts up to $2^{s(n)} = n^{O(1)}$, the upper bound on the number of distinct worktape contents  of~$\mathcal{M}$.
Once completed, each row of the counter then generates its own counter, counting up to the number of different positions that the head can be on the input tape. This process is iterated for each of the (constant number of)  attributes in a Turing machine configuration to give a final counter with $k = n^{O(1)}$  rows, one for each distinct configuration (see Definition~\ref{def:configmatrix} for~$k$, and see~\cite{nubots} for details on efficiently growing a counter). The counter backbone synchronizes, giving a $k \times O(\log k)$ rectangle, whose bond structure forms a  ``comb''.   The counter rearranges itself into a line  $\lineseg{C}{}{} = \lineseg{C}{1}{}\lineseg{C}{2}{} \dots \lineseg{C}{k}{}$ where line segment $\lineseg{C}{i}{}$ encodes configuration~$c_i$.

We next use our encoding  $\lineseg{C}{}{}$ of all possible configurations    to generate an encoding of the configuration matrix~$M$. First, $\lineseg{C}{}{}$ is copied so that we have two parallel instances of~$\lineseg{C}{}{}$, side-by-side.  Next, we apply Lemma~\ref{lem:2dfunction}, setting  $\mathcal{F} = \mathcal F_{\mathcal{M}}$ where $\mathcal F_{\mathcal{M}}$  takes as input the pair of parallel line segments $\lineseg{C}{i}{}$ and $\lineseg{C}{j}{}$, and a copy of the input line segment\footnote{Each configuration is of length polynomial in  $|x| = O(|$\lineseg{\nb{x}}{}{}$|)$, hence including $\lineseg{\nb{x}}{}{}$ here does not change the asymptotics.} $\lineseg{\nb{x}}{}{}$ that encodes  $x$, and returns a segment $\lineseg{M}{(i,j)}{} = \lineseg{m}{i,j}{}\lineseg{C}{i}{}\lineseg{C}{j}{}$ where $\lineseg{m}{i,j}{}$ is a binary nubot monomer representing element $m_{i,j}$ in $\mathcal{M}$'s configuration matrix\footnote{The line segments in an encoded matrix usually encode the matrix element's $(i, j)$ coordinates, here we do things slightly differently: we are using encoded configurations, rather than natural numbers, as the matrix indices. This simplifies our constructions a little.}. In other words, given the encoding of two configurations $c_i, c_j$, the function $\mathcal F_{\mathcal{M}}$ determines if there is a one-step transition from~$c_i$ to~$c_j$ via Turing machine~$\mathcal{M}$. $\mathcal F_{\mathcal{M}}$ works by straightforward message-passing and state changes from monomer to monomer along the pair of encoded configurations. To satisfy  the hypotheses of Lemma~\ref{lem:2dfunction}, $\mathcal F_{\mathcal{M}}$ should work in time linear in the encoded configurations' length ($|\lineseg{C}{i}{}|$, $|\lineseg{C}{j}{}|$) which is easily  achieved. It is also the case  that the space and states bound in the hypotheses of Lemma~\ref{lem:2dfunction} are met by $\mathcal F_{\mathcal{M}}$. After applying Lemma~\ref{lem:2dfunction} we get an encoding of the configuration matrix $M$ as a single line  $\lineseg {M}{}{}$  of consecutive line segments 
$\lineseg{M}{(i,j)}{} = \lineseg{m}{i,j}{}\lineseg{C}{i}{}\lineseg{C}{j}{}$ for $1 \leq i,j \leq k$.

We make a copy of the encoded matrix $\lineseg {M}{}{}$ (a line of monomers) and then use Theorem~\ref{thm:boolmatrixmult} to square $\lineseg {M}{}{}$, giving~$\lineseg{M^2}{}{}$ (another line of monomers). After we have both lines $\lineseg{M}{}{}$ and $\lineseg{M^2}{}{}$, \lq\lq{}adding\rq\rq{} the lines together (to compute $M^2 + M$) is easy: matrix elements with the same $(i, j)$ coordinates are adjacent when $\lineseg{M}{}{}$ and $\lineseg{M^2}{}{}$ are orientated parallel and next to each other, so the addition can be carried out ``locally''. The iterated squaring (and addition) are carried out  $1 + \log k  = O(\log n)$ times. The result is an encoding of $M$'s path-complete  matrix $M'$. 

Consider the path-complete configuration matrix~$M'$, with start configuration $c_{\mathrm{start}}$  and halt configuration $c_{\mathrm{halt}}$. 
We need to (i) determine which configurations are on the unique path, in the configuration graph, from $c_{\mathrm{start}}$ to $c_{\mathrm{halt}}$, and (ii) follow this unique path keeping track of what was written to the output tape at each step. 
For any $i$, if configuration $c_{\mathrm{start}}$ leads to configuration $c_i$  in $\geq1$ steps then $M'_{c_{\mathrm{start}} , c_i} = 1$.  Similarly, if configuration $c_i$ leads to configuration $c_{\mathrm{halt}}$   in one or more steps then $M'_{c_i, c_{\mathrm{halt}}} = 1$. Hence it is sufficient to extract row $c_{\mathrm{start}}$ and column $c_{\mathrm{halt}}$ from $M'$, and compare them, in order to find the entire path of configurations from $c_{\mathrm{start}}$ to  $c_{\mathrm{halt}}$. 

We do this by first deleting all encoded matrix elements of $\lineseg{M'}{}{}$ that are not in row $c_{\mathrm{start}}$ or not in column $c_{\mathrm{halt}}$. This results in two lines of monomers, that are then aligned parallel and side-by-side. Next each entry $i$ is compared, if there is a 1 in both we keep the entry, otherwise the entry is deleted. We are left with the list of configurations on  the path from $c_{\mathrm{start}}$ to $c_{\mathrm{halt}}$. This line of monomers synchronizes.

Next, the encoded matrix entries (bits) are deleted leaving the list of encoded configurations. Configurations that do not write anything to the output tape are deleted.  The remaining configurations are sorted in increasing order of output-tape write location. Since the output tape head moves one way only, this gives the list of outputting configurations in the order they are executed by the Turing machine $\mathcal{M}$.  Finally, all monomers that do not represent a symbol written by the output tape head are deleted, and the result is compressed into a line. We are left with a monomer line encoding $y$, the output tape contents. 

\paragraph{State, space and time analysis of Turing machine simulation.}
The state complexity of $O(1)$ can be seen from stepping through the algorithm.
In the proof, the configuration matrix dimension size is $k\times k $, where $k = n^{O(1)}$.  From Theorem~\ref{thm:boolmatrixmult}, matrix multiplication for two $k \times k$ matrices takes space  $O(k^4 \log k) \times O(k^2 \log k)$ using nubots, and since $k = n^{O(1)}$ this gives the space bound in the lemma statement. The space complexity of our Turing machine simulation is dominated by this. 

For the time analysis, first note that  generating the configurations consists of running a counter that takes expected time $O(\log^2 n)$, see~\cite{nubots} for details. After the configurations are generated, each iteration of matrix multiplication, addition, and deletions is bounded by time $O(\log^3 n)$. Since we do $O(\log n)$ matrix multiplications and additions the expected time is~$O(\log^4 n)$ using nubots.  The expected time of the  other rearrangements and computations  during the construction  is dominated by that of matrix multiplication. 
This completes the proof  of Lemma~\ref{lem:LTMtonubots}. 
\end{proof}

\subsection{Generating and simulating a Boolean circuit: proof of Theorem~\ref{thm:circuitsim}}\label{sec:CirGenSim}
We define the nubot monomer encoding of gates and  Boolean circuits. Boolean circuits were defined in Section~\ref{sec:circuitsDefs}.

\begin{definition}[Nubot monomer encoding of a  gate]\label{def:nubotgate}
 The encoding $\nb{g}$ of a Boolean circuit gate $g$ is as follows: a single gate monomer encodes the gate type (AND, OR, or NOT), directly above the gate monomer are $k$ line segments of monomers called result segments where $k$ is the gate's out-degree. Each result line segment encodes a destination gate number using $O(\log n)$ binary monomers, where~$n$ is the circuit size. There is an empty region of height $O(\log n)$ below the gate monomer  called the input region (Figure~\ref{fig:circuitsim}(b): dotted blue regions). 
 \end{definition}

A gate is simulated as follows. The input region of $\nb{g}$ is an empty region to which a line of monomers, that encode the inputs to  $g$, can attach (Figure~\ref{fig:boolAndNubot}(d): dotted blue regions). Upon attachment of the input lines, the gate monomer computes $g$'s Boolean function. Let $g$ be a gate with out-degree 2, and which outputs to the gates $g_1$ and $g_2$. 
The result region of nubot gate $\nb{g}$  consists of two lines of binary monomers that encode the wires that lead to $g_1$ and $g_2$  (Figure~\ref{fig:boolAndNubot}(d): solid blue regions). The simulation of wires is covered in the proof below. 

\begin{definition}[Nubot monomer encoding of a Boolean circuit]\label{def:nubotcircuit}
A  Boolean circuit~$c$ is encoded as a nubot configuration consisting of the encoded gates (Definition~\ref{def:nubotgate}) written in layers, one for each each layer in $c$ (see Figure~\ref{fig:boolAndNubot}).
 Within a layer, the encoded gates are horizontally spaced apart by the circuit size.\end{definition}

\begin{proof}[Proof  of Theorem~\ref{thm:circuitsim}] 
The proof has two parts, circuit generation and circuit simulation. 

\subsubsection {Circuit generation}
Let $c_n \in  \mathcal{C} $ be a  Boolean circuit with $n$ input gates that we wish to simulate. From the theorem statement  $\mathcal{C}$  is uniform by logspace Turing machine $\mathcal{M}$.   To generate the encoding of $c_n$ as nubot monomers, first,  $\mathcal{M}$ on input~$1^n$ is simulated via Lemma \ref{lem:LTMtonubots} to give a line of monomers $\lineseg{\nb{c_n}}{}{}$ that encodes $c_n  = \mathcal{M}(1^n)$. Next,  this ``linear'' encoding of $c_n$ geometrically unfolds into a two-dimensional ``ladder'' format, with one encoded circuit  layer per rung, as  shown in Figure~\ref{fig:boolAndNubot}(c) and  defined in Definition~\ref{def:nubotcircuit}. We use the folding technique from~\cite{nubots} that takes expected time $O(\log^2 \ell)$ to fold a length $\ell$ line into a square (here we modify the technique to fold a line into a comb, then on the teeth of the comb the gate result monomers fold out from each of the teeth to give the structure in Figure~\ref{fig:boolAndNubot}).
Since~$|\lineseg{\nb{c_n}}{}{}|$ is  polynomial  in $n$, the rearrangement happens in expected time~$O(\log^ 2 n)$.

\begin{figure}[t]
\begin{center}
    \includegraphics[width=0.65\textwidth]{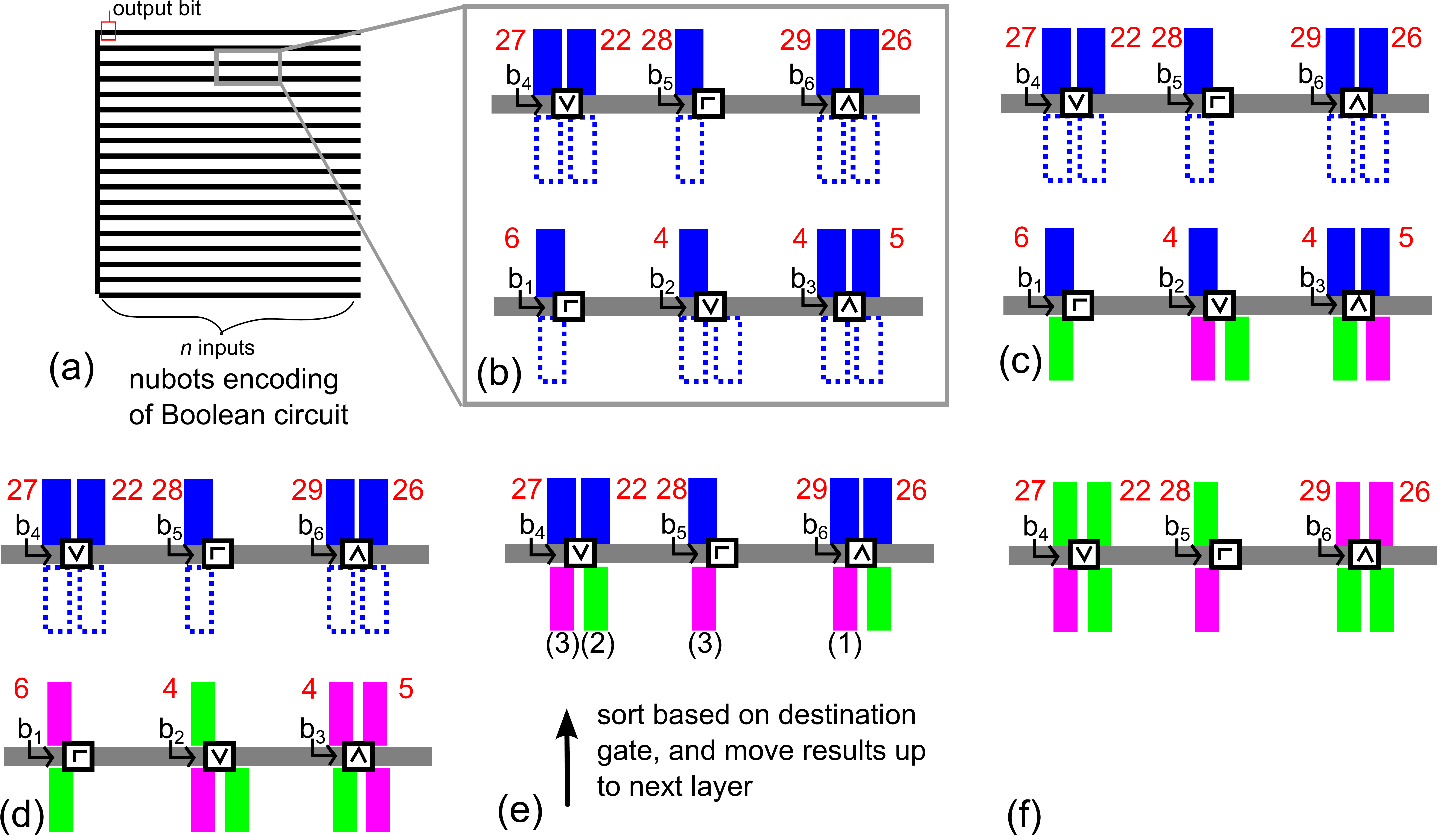} 
\end{center}
  \caption{Nubots configuration that simulates a Boolean circuit. Dotted blue regions denote gate input regions. Blue rectangles denote gate result regions. The gate address encoded by a line segment is in red. Gates are monomers. Pink encodes 0, green encodes 1. (a) Initial configuration that encodes the entire circuit and its input. (b) Zoom-in shown the initial configuration of 6 gates. (c) Input line segments arrive at the gate. (d) The gate monomers calculate their respective gates' result, and communicate the result bit to the result segments (shown as colour change).  (e) The input monomers and grey line are deleted. The result monomers sort themselves using the sorting algorithm (not shown) and are moved upwards to be the input to the next layer. The gate from which each input monomer originates is shown in parenthesis (these values came from gates 3, 2, and 1---not shown). (f) The gate monomers of the next layer calculate their result values.  }
  \label{fig:circuitsim}
\end{figure}

\subsubsection{Circuit simulation}
We have a nubot configuration that encodes a circuit as shown in Figure~\ref{fig:circuitsim}(a). 
We evaluate the encoded  Boolean circuit layer by layer, from input layer (bottom) to output layer (top), with each layer being evaluated in parallel. Evaluating a layer is a 3-step process shown in Figures~\ref{fig:circuitsim}(c), (d) and (e). First assume we have the  configuration shown in Figure~\ref{fig:circuitsim}(c): along the bottom grey line there are gate monomers (that encode AND, OR, or NOT), and below the bottom  grey line there are pink and green line segments of monomers that encode gate input bits 0 and 1 respectively, and above are blue line segments that encode the wires out of these gates (i.e. destination addresses to  the next layer above). Evaluation of a gate is straightforward: since the gates have fan-in $\leq 2$, the gate simply reads the  1 or 2 pink/green line segments below by reading its lower neighbours' states. The gate monomer computes the encoded gate's Boolean function and passes the resulting bit to the result line segments above. Note that  gates may have fan-out (or out-degree) as large as the circuit size, i.e. polynomial in input length, so for this we assume  adjacent gates on a layer are spaced at least as far apart horizontally as the circuit size. Then a gate communicates its result to all of them via shift or lift synchronization (in expected time logarithmic of circuit size).  After all gate monomers in a layer have completed this process, they synchronize. By now we have reached Figure~\ref{fig:circuitsim}(d). 

Boolean circuits may be non-planar and so have crossing wires when drawn in 2D, hence na\"ively moving bit-encoding monomers in the plane to the next layer above may cause unintended collisions.  We resolve this problem using our nubot sorting algorithm from Section~\ref{sec:sorting}. After  a layer has  synchronized, the (blue) line segments in the gate result regions of that entire layer are organised into a horizontal line, to serve as input to our sorting procedure. These gate result regions are then sorted by increasing wire number. The gates on the next layer above are assumed to be encoded in increasing (gate index) order. After sorting, the gate result regions are aligned with the gate above them (using counters) and pushed vertically upwards to the relevant gate. This is done in such a way that when it is finished there are no monomers below the new layer (any excess monomers are deleted); this deletion leaves enough space below for the sorting algorithm on the next iteration. 

After the monomer that encodes the circuit's unique output gate computes its result bit, it destroys itself, leaving a single monomer encoding the  output bit.  No  rules are applicable and so the system has halted with its answer.

\paragraph {Time, state, and space analysis of circuit simulation.}
There are $O(1)$ gate types, and all numbers are written using binary monomers. All other parts of the construction from previous sections  use $O(1)$  states. By stepping through the simulation with this in mind it is straightforward to obtain a state complexity of~$O(1)$.

We are simulating a circuit of size $O(n^k)$ and  depth $O(\log^j n)$. Each  layer of the circuit is encoded as a  monomer layer of height $O(\log n)$, giving a total height of of $O(\log^{j+1} n)$ for the encoded circuit. 
The width of an encoded layer is  $O(n^{2k} )$  which comes from the circuit size being $O(n^k )$, and from each gate being horizontally separated by a further $O(n^k )$ to handle fan-out (note a horizontal separation of a mere $O(\log n)$ monomers is sufficient for the sorting algorithm).  This gives $O(n^{2k}) \times O(\log^{j+1} n)$ space to lay out the circuit. However, sorting $n$ numbers, each written as length $O(\log n)$ bit strings, takes space $O(n \log n) \times O(n)$. 
Thus the total space complexity for the circuit simulation is $O(n^{2k}) \times O(n)$.  Lemma~\ref{lem:LTMtonubots} tells us that circuit generation takes space $n^{O(1)} \times n^{O(1)}$ (the hidden constants are coming from the logspace bounded Turing machine), which dominates the total space for both circuit generation and circuit simulation. 

The asymptotically slowest part of  simulating a circuit layer is the sorting algorithm, which takes expected time $O(\log^3 n)$ per layer. There are $O(\log^j n)$ layers, thus the total expected time for the simulation is  $O(\log^{j+3} n)$. The circuit generation takes time $O(\log^{4} n)$ from Lemma~\ref{lem:LTMtonubots}, but since we assumed that $j \geq 1$ in the statement of Theorem~\ref{thm:circuitsim}, this leaves the total expected time for both circuit generation and circuit simulation at $O(\log^{j+3 }n)$.
This completes the proof  of Theorem~\ref{thm:circuitsim}. 
\end{proof}

\section*{Acknowledgments.}
We thank Erik Winfree for valuable discussion and suggestions on our results,  Paul Rothemund for stimulating conversations on molecular muscle, Niall Murphy for informative discussions on circuit complexity theory, and Dhiraj Holden and Dave Doty for useful discussions. Damien thanks Beverley Henley for introducing him to developmental biology many moons ago. 

\small

\end{document}